\documentclass[11pt, a4paper, oneside]{article}

\usepackage[letterpaper, left=2.3cm, right=2.3cm, top=2cm, bottom=2cm]{geometry}%
\usepackage[english]{babel}
\usepackage[utf8x]{inputenc}
\usepackage[T1]{fontenc}

\usepackage{bm}
\usepackage[breaklinks]{hyperref}

\usepackage{enumerate}
\usepackage[numbers]{natbib}
\usepackage{mathtools}
\usepackage{relsize}

\usepackage{graphicx,adjustbox}
\usepackage[dvipsnames]{xcolor}

\usepackage{amsfonts,amssymb,amsthm}
\usepackage{microtype}
\usepackage{algorithm}
\usepackage{algpseudocode}
\usepackage{algorithmicx}
\usepackage{bbm}
\usepackage{enumerate}
\usepackage{mathtools}
\mathtoolsset{showonlyrefs}

\usepackage{tikz}
\usetikzlibrary{shapes.geometric,backgrounds,calc}
\tikzset{>=latex}
\usetikzlibrary{arrows,shapes,tikzmark,shadows, fit,petri,%
                topaths}

\usepackage{hyperref}

\usepackage{thmtools}
\usepackage{thm-restate}
\usepackage{harpoon}

\newcommand{\nequiv}{\not\equiv}

\newcommand{\finalTimer}{\mathcal{T}}
\newcommand{\Timer}{\tau}

\newcommand{\adv}[1]{{#1}^{*}}
\newcommand{\OPT}{\mathcal{OPT}}
\newcommand{\ALG}{\A}
\newcommand{\Location}{\ell}
\newcommand{\M}{\mathcal{M}}
\newcommand{\A}{\mathcal{ALG}}

\newcommand{\Reals}{\mathbb{R}}

\newcommand{\Expect}{\mathbb{E}}

\newcommand{\sCost}{\mathrm{cost}^{\mathit{space}}}
\newcommand{\tCost}{\mathrm{cost}^{\mathit{time}}}
\newcommand{\Cost}{\mathrm{cost}}
\newcommand{\Leaves}{L}
\newcommand{\LCA}{\mathrm{lca}}

\newcommand{\Ancestors}{\mathit{anc}}
\newcommand{\Descendant}{\mathit{des}}

\newcommand{\Active}{\mathit{C}}

\newcommand{\OTwo}{$O_2$}
\newcommand{\OD}{$O_D$}
\newcommand{\ddelta}{$\Delta_D$}
\newcommand{\ndelta}{$\Delta_n$}
\newcommand{\hdelta}{$\Delta_H$}
\newcommand{\SK}{$\mathcal{S}_K$}
\newcommand{\SH}{$\mathcal{S}_H$}
\newcommand{\SHstar}{$\mathcal{S}_H^*$}
\newcommand{\Per}{$\Pi$}

\newtheorem*{axiomOTwo}{$O_2$}
\newtheorem*{axiomOD}{$O_D$}
\newtheorem*{axiomPi}{$\Pi$}
\newtheorem*{axiomDeltaD}{$\Delta_D$}
\newtheorem*{axiomDeltaN}{$\Delta_n$}
\newtheorem*{axiomDeltaH}{$\Delta_H$}
\newtheorem*{axiomSK}{$\mathcal{S}_K$}
\newtheorem*{axiomSHstar}{$\mathcal{S}_H^*$}
\newtheorem*{axiomSH}{$\mathcal{S}_H$}

\newtheorem{observation}{Observation}
\newtheorem{theorem}{Theorem}
\newtheorem{definition}{Definition}
\newtheorem{lemma}{Lemma}

\newtheorem{example}{Example}

\newcommand\mydots{\makebox[1em][c]{.\hfil.\hfil.}}

\newcommand\clock[4][2]{%
\begin{tikzpicture}[scale=#1,line cap=round,line width=#1*4pt]
\node[scale=#1] (A) at (#4) {};
\filldraw [fill=yellow!20] (A) circle (2cm);
\foreach \angle / \label in
{0/3, 30/2, 60/1, 90/12, 120/11, 150/10, 180/9,
210/8, 240/7, 270/6, 300/5, 330/4}
{
\draw[line width=#1*1pt] ($(A) + (\angle:1.8cm)$) -- ($(A) + (\angle:2cm)$);
}
\foreach \angle in {0,90,180,270}
\draw[line width=#1*2pt] ($(A) + (\angle:1.6cm)$) -- ($(A) + (\angle:2cm)$);
\draw[rotate=90,line width=#1*6pt] (A) -- ($(A) + (-#2*30-#3*30/60:1cm)$); 
\draw[rotate=90,line width=#1*4pt] (A) -- ($(A) + (-#3*6:1.3cm)$);
\path [fill=black] (A) circle (5pt);
\end{tikzpicture}
}

\tikzfading[name=arrowfading, top color=transparent!0, bottom color=transparent!95]
\tikzset{arrowfill/.style={#1,general shadow={fill=black, path fading=arrowfading}}}
\tikzset{arrowstyle/.style n args={3}{draw=#2,arrowfill={#3}, single arrow,minimum height=#1, single arrow,single arrow head extend=.1cm,}}

\NewDocumentCommand{\tikzfancyarrow}{O{2cm} O{black} O{top color=black!20, bottom color=white} m}{
\tikz[baseline=-0.5ex]\node [arrowstyle={#1}{#2}{#3}] {#4};
}

\newcommand{\RN}[1]{%
  \textup{\uppercase\expandafter{\romannumeral#1}}%
}

\usepackage{newfloat}
\DeclareFloatingEnvironment{algo}

\algdef{SE}[SUBALG]{Indent}{EndIndent}{}{\algorithmicend\ }%
\algtext*{Indent}
\algtext*{EndIndent}

\tikzset{cross/.style={cross out, thick, draw=black, minimum size=2*(#1-\pgflinewidth), inner sep=0pt, outer sep=0pt},
cross/.default={5pt}}

\title{Online $k$-Way Matching with Delays and the $H$-Metric}
\author{
  Darya Melnyk\\
  \small\texttt{darya.melnyk@aalto.fi}\\
  \small ETH Z\"urich\\
  \small Aalto University
  \and
  Yuyi Wang \\
  \small\texttt{yuwang@ethz.ch} \\  
  \small ETH Z\"urich
  \and
  Roger Wattenhofer \\ 
  \small\texttt{wattenhofer@ethz.ch} \\  
  \small ETH Z\"urich
}
\date{}

\begin{document}

\maketitle
\thispagestyle{empty}
\begin{abstract}
In this paper, we study $k$-Way Min-cost Perfect Matching with Delays -- the $k$-MPMD problem. This problem considers a metric space with $n$ nodes. Requests arrive at these nodes in an online fashion. The task is to match these requests into sets of exactly $k$, such that space and time cost of all matched requests are minimized. The notion of the space cost requires a definition of an underlying metric space that gives distances of subsets of $k$ elements. For $k>2$, the task of finding a suitable metric space is at the core of our problem: We show that for some known generalizations to $k=3$ points, such as the $2$-metric~\cite{Gahler1963} and the $D$-metric~\cite{Dmetric}, there exists no competitive randomized algorithm for the $3$-MPMD problem. The $G$-metrics~\cite{mustafa2006} are defined for 3 points and allows for a competitive algorithm for the $3$-MPMD problem. For $k>3$ points, there exist two generalizations of the $G$-metrics known as $n$- and $K$-metrics~\cite{SamAsNMetric,KamAlKhan2012}. We show that neither the $n$-metrics nor the $K$-metrics can be used for the $k$-MPMD problem. On the positive side, we introduce the $H$-metrics, the first metrics to allow for a solution of the $k$-MPMD problem for all $k$.  
In order to devise an online algorithm for the $k$-MPMD problem on the $H$-metrics, we embed the $H$-metric into trees with an $O(\log n)$ distortion. 
Based on this embedding result, we extend the algorithm proposed by \citet{AzarCK2017} and achieve a competitive ratio of $O(\log n)$ for the $k$-MPMD problem. 

\end{abstract}
\begin{keywords}
Online Matching, Generalized Metric, Metric Approximation, Delayed Service
\end{keywords}

\section{Introduction}

With annual revenue in the order of 100 billion dollars, the gaming industry is about three times bigger than the movie industry. At its core, there are several online gaming platforms such as Xbox Live, Playstation Network, Steam, UPlay, QQ Games, and soon Google Stadia. 
Most games played online are multi-player games. The number of players per game is a vital parameter $k$ of a game. Games for pretty much any value of $k$ exist, e.g., $k=2$ players (Chess, Go
), $k=3$ players (Dou dizhu), $k=4$ players (Bridge
), $k=5$ (Dota 2), $k=8$ (Dirt 4), $k=24$ (Forza), 
$k=100$ (Fortnite). 

One of the main tasks of any online platform is to match an arriving player to $k-1$ opponents. The gaming platform must therefore optimize two conflicting goals. First, the $k$ players that are matched with each other should be similar regarding various characteristics, e.g., similar playing strength, similar geographic region, similar hardware. On the other hand, players do not want to wait long before they are matched, so a gaming platform must strive to match any player quickly. 

The special case of $k=2$ was first studied by \citet{EmekKW2016}. They introduced the online \emph{Min-cost Perfect Matching with Delays (MPMD)} problem: Given an online sequence of arriving players, MPMD minimizes (i) the matching cost between the two matched players and (ii) the delay incurred by the players waiting to be matched. \citet{EmekKW2016} provided a first polylog-competitive algorithm, which was later improved to $O(\log n)$ by \citet{AzarCK2017}, where $n$ is the number of points in the finite metric space that is used to model the matching cost between two players. 

The MPMD problem in \cite{EmekKW2016,AzarCK2017} is restricted to two players. Two-player games do however only constitute a small fraction of the gaming market. As \citet{AzarCK2017} point out, it is natural to ask whether one can achieve the $O(\log n)$ bound for games with $k$ players: 

\emph{``[An] interesting problem to pursue is
the problem of min-cost $k$-way matching,
where the goal is to partition the requests into sets
of size $k$. We need to identify interesting constraints
on the connection cost, which generalize the metric
properties, and which admit a competitive algorithm.''}

Our paper analyzes this extension to $k$ players, where $k\geq 3$. We formally call the problem \emph{$k$-way Min-cost Perfect Matching with Delays: $k$-MPMD}.
When generalizing the problem to $k$ players, goals (i) and (ii) also need to be generalized. Since the second goal just measures the total delay, it does not change for a larger $k$. 
Generalizing the first goal requires us to define a cost for $k$-way matching. Such a matching can only be applied to a metric space that defines distances for any subset of $k$ points from the metric. Surprisingly, this generalization turns out to be non-trivial. While it is straightforward to define metrics for $k=2$, there exist many ways of generalizing a metric to $k>2$ players. Our results show that all known generalized metrics on $k$ points (e.g. $2$-, $n$- or $K$-metrics) are not suited for the $k$-MPMD problem as their competitive ratios are unbounded. The only set of axioms for which there exists a competitive algorithm are the $G$-metrics, which is only defined for $k=3$ players. The core of this paper is therefore to find a proper definition of generalized $G$-metrics to $k$ points, which does not heavily restrict the metric space. 

Our paper is organized as follows: In Section \ref{sec:generalized_metrics}, we will discuss known generalizations of metrics to three or more points, e.g., $2$-, $G$- and $K$-metrics. In Section \ref{sec:Impossibility2}, we show that $2$- and $D$- metrics fail to be competitive for the $3$-MPMD problem. In Section \ref{sec:impossibilityKNMetric}, we use a similar analysis structure in order to show that there is no competitive randomized algorithm for $n$- and $K$-metrics (generalizations of the $G$-metrics) for the $k$-MPMD problem with $k>3$. The presented counterexamples help us to define restrictions on generalized $G$-metrics. In Section \ref{sec:new_G_k_metric}, we present our novel $H$-metric, which can be shown to be competitive for any $k$. We call it the $H$-metric because it is more general than the $G$-metrics, so alphabetically sitting between $G$ and $K$, close to $G$. 
In Section \ref{sec:reduction} we then reduce the $H$-metrics to a metric that is defined on pairwise distances of points. Finally, in Section \ref{sec:matching_alg} we use the reduction in order to extend the algorithm proposed by \cite{AzarCK2017} to our $H$-metrics and achieve a $O(\log n)$ competitive ratio for the online $k$-MPMD problem.

\section{Related Work}
\label{sec:relatedwork}
Offline matching has become a classic combinatorial problem since the seminal work of \citet{Edmonds1965a,Edmonds1965b}. 
One may argue that in today’s world online matching is practically more relevant than its offline counterpart. 
Online matching algorithms need to deal with continuously arriving input and deliver quality matches on the fly.
Many papers have studied the online matching problem extensively, see for example
\cite{AggarwalGKM2011,BansalBGN2014,BirnbaumM2008,DevanurJK2013,GoelM2008,KalyanasundaramP1993,KarpVV1990,KhullerMV1994,Mehta2013,MehtaSVV2005,MeyersonNP2006,Miyazaki2014,NaorW2015}. 
All these papers do however assume one side of a \textit{bipartite} input to be available to the algorithm offline.
While this online/offline hybrid may perfectly model some applications, many real-world matching problems do not have one side of the data stored initially. 

\citet{EmekKW2016} were the first to study a version of online matching where all input data arrives online, calling the corresponding problem the $2$-MPMD problem. The algorithm in their paper does not decide on a newly arriving request immediately, as only delayed decisions allow competitive algorithms. 
The authors present a randomized algorithm with competitive ratio $O(\log^2 n+ \log \Delta)$ where $\Delta$ is the aspect ratio of this metric.
\citet{AzarCK2017} later improved this result to an $O(\log n)$-competitive randomized algorithm, where the competitive ratio does not depend on the metric space. \citet{Yuyi:ImpatientOnlineMatching} modified the delay function in the MPMD problem to capture convex instead of linear delays.

A first lower bound of $\Omega(\sqrt{\log{n}})$ on the competitive ratio of any randomized algorithm was shown by \citet{AzarCK2017}.
\citet{Ashlagi:BipartiteCase} improved this lower bound to $\Omega\left(\frac{\log{n}}{\log\log{n}}\right)$ such that it almost matches the upper bound. They further consider the MBPMD problem, which is a bipartite version of the $2$-MPMD problem. In this problem, requests arrive at one of two given classes and only requests between classes can be matched. They also provide an $O(\log{n})$-competitive randomized algorithm for the MBPMD problem.

Other recent papers have picked up the $2$-MPMD problem in the deterministic setting. \citet{Bienkowski:DeterministicMPMD} proposed the first deterministic algorithm for the MBPMD problem on general metrics and achieved a competitive ratio of $O\left(m^{2.46}\right)$, where $m$ denotes the number of requests in the sequence. Later, the competitive ratio was improved to $O(m)$ by using a primal-dual deterministic algorithm \cite{Bienkowski:PrimalDualAlgo}.
\citet{AzarDeterministicMPMD} combined the ideas of these papers to provide a deterministic algorithm which is $O\left(m^{\log(2/3+\varepsilon)}\right)$-competitive for the $2$-MPMD and the MBPMD problems. 

Delaying decisions is also a well-known concept in the broader online domain. Already the classic ski rental problem \cite{KarlinKR2001,KarlinMMLO1990,KarlinMRS1986} postpones decisions to achieve better competitive bounds. Unlike the ski rental and similar rent-or-buy problems \cite{DoolyGS1998,DoolyGS2001}, matching is \emph{combinatorial} in nature, which complicates matters significantly. One related combinatorial problem that was considered with delays is the $k$-server problem. \citet{azar2017online} proposed a variant of the online $k$-server with delays and designed a so-called preemptive service algorithm, which achieves an $O(k\log^5 n)$ competitive ratio. Other related problems are the online bin-packing problem and the facility location problem, and the set cover problem which have also been considered in the scope of delays \cite{AzarBinPacking, azar2019general,azar_et_al:LIPIcs:2020:12874}.

\section{The $k$-MPMD Problem}

The goal of this section is to formally define the $k$-MPMD problem. We are a priori given a finite \emph{generalized metric space} $M = (V,d)$. Let $R$ be a sequence of requests in this metric space. Each request $\rho \in R$ is characterized by its location $\ell(\rho) \in V$ and an arrival time $t(\rho) \in \mathbb{R}^+$. 

The goal of the algorithm is to construct a (perfect) matching of the request set, namely, a partition of $R$ into $|R|/k$ request sets $S_i$, each of which contains $k$ requests. The assignment of requests to sets must be performed in an online fashion without withdrawal.  
The corresponding online algorithm $\A$ has to 
minimize (i) the matching cost and (ii) the incurred time delay.
For the sake of simplicity, we assume that the total number of requests is a multiple of $k$.

The matching cost for goal (i) is defined as follows: if $k$ requests $\rho_1, \mydots, \rho_k$ are matched, the algorithm needs to pay a \emph{space cost} $d(\Location(\rho_1), \mydots, \Location(\rho_k))$. The total space cost of algorithm $\A$ on the request set $R$ is then  $$\sCost_{\A}(R) = \sum_{\substack{S_i=\{\rho_{1,i}, \mydots, \rho_{k,i}\}\\ \  i\in\left[|R|/k\right]}} d(\Location(\rho_{1,i}), \mydots, \Location(\rho_{k,i})).$$ 

\noindent For the second goal we need to define a time delay: if algorithm $\A$ matches the request $\rho \in R$ at time $t'(\rho)$, then $\rho$ is said to be \emph{open} at all times $t(\rho) \le t < t'(\rho)$. For such open requests $\rho$ we need to pay a \emph{time cost} which is defined to be linear in the total waiting time $\left[t(\rho), t'(\rho)\right)$. The total time cost of algorithm $\A$ is then defined as $\tCost_{\A}(R) = \sum_{\rho \in R} \left(t'(\rho)-t(\rho)\right).$ 

\noindent The total cost incurred by algorithm $\A$ is defined as the sum of the space and the time cost:  $\Cost_{\A}(R) = \sCost_{\A}(R) + \tCost_{\A}(R).$
\noindent The goal of the online algorithm is then to assign requests $\rho\in R$ to sets $S_i$ such that the total cost $\Cost_{\A}(R)$ is minimized.

As we are interested in the competitive ratio of the online algorithm, we compare its performance to an optimal offline algorithm which knows the inputs sequence $R$ a priori. We denote this optimal offline algorithm $\OPT$. 
If $\A$ is a randomized algorithm, we define $\mathbb{E}[\Cost_{\A}(R)]$ as the expected cost that algorithm $\A$ incurs on the input sequence $R$. The competitive ratio of the randomized algorithm $\A$ is then defined as $\sup_R  \mathbb{E}[\Cost_{\A}(R)]/\Cost_{\OPT}(R).$
\noindent We say that a randomized algorithm is not competitive if its competitive ratio is unbounded. Our goal is therefore to find an algorithm $\A$ which has a 
small competitive ratio. 

\section{Generalized Metrics}\label{sec:generalized_metrics}

In order to study the $k$-MPMD problem, we need a metric that defines a distance between $k>2$ points. 
Several metric generalizations that formalize the notion of distance among three points have been proposed in the literature. The most prominent examples are the $2$-metrics \cite{Gahler1963}, the $D$-metrics \cite{Dmetric} and the $G$-metrics \cite{mustafa2006}. 

Metrics on more than three points were also considered in the literature. 
\cite{KamAlKhan2012} proposed a set of axioms on $k$ points which he called $K$-Metrics. This set is a generalization of the $G$-metrics. A less strict version of this set was later proposed by \cite{SamAsNMetric} and are called the $n$-Metrics. 

\begin{figure}
\centering
\begin{adjustbox}{max width=\textwidth}
\begin{tikzpicture}
   \draw[orange!70,fill=orange!7,rounded corners=1ex] (-0.15,0.15) rectangle (16.3,-4.95);

  \draw[black!40,fill=black!5,rounded corners=1ex] (2.6,0) -- (15.45,0) -- (15.45,-4.8) -- (0,-4.8) -- (0,-2.1) -- (2.6,-2.1) --cycle;
  
  \draw[red!60,fill=red!5,rounded corners=1ex, opacity=0.7] (0,-0.15) rectangle (4.9,-1.95);
  \draw[VioletRed!60,fill=VioletRed!5,rounded corners=1ex] (0.15,-0.3) rectangle (2.4,-1.8);
  
  \draw[Violet!60,fill=Violet!5,rounded corners=1ex]   (0.15,-2.25) rectangle (15.3,-4.65);
  \draw[blue!60,fill=blue!5,rounded corners=1ex]     (0.3,-2.4) rectangle (12.1,-4.5);
  \draw[NavyBlue!60,fill=NavyBlue!5,rounded corners=1ex] (0.45,-2.55) rectangle (9.05,-4.35);
  
  \node (labelleft) at (2.6,0.45)   {\Large{$3$-way}};
  \node (labelright) at (10,0.45)   {\Large{$k$-way}};  
  
  \draw[dash pattern=on 8pt off 2pt,line width=2pt] (5.1,0.7) to (5.1,-5) {};
  
  \node (2metric) at (1.135,-1.05)   {$\huge{\pmb{2}}$};
  \node (Dmetric) at (3.6,-1.05)  {$\huge{\pmb{D}}$};
  \node (Gmetric) at (2.67,-3.45)  {$\huge{\pmb{G}}$};
  \node (Hmetric) at (6.5,-3.45)  {$\huge{\pmb{H}}$};
  \node (Kmetric) at (10.5,-3.45)  {$\huge{\pmb{K}}$};
  \node (nmetric) at (13.6,-3.45)  {$\huge{\pmb{n}}$};
  
  \node[orange!90] (upperbox) at (15.9,-0.4)   {\Large{$\Pi$}};
  \node[black!80] (upperbox) at (14.8,-0.5)   {\Large{$O_D$}};
  
  \node[VioletRed!80] (upperbox) at (2,-1.5)   {\Large{$O_2$}};
  \node[red!80] (upperbox) at (4.45,-1.65)   {\Large{$\Delta_D$}};
  
  \node[NavyBlue!80] (upperbox) at (8.05,-4.05)   {\Large{$\Delta_H, \mathcal{S}_H$}};
  \node[blue!80] (upperbox) at (11.65,-4.2)   {\Large{$\mathcal{S}_K$}};
  \node[Violet!80] (upperbox) at (14.75,-4.35)   {\Large{$\Delta_n$}};
  
\end{tikzpicture}
\end{adjustbox}
\caption{Overview of all metrics. Each box represents one or more axioms (not metrics). Reading example: The $K$-metrics and the $n$-metrics both obey axioms \Per,\ \OD, and\ \ndelta; $K$-metrics additionally obey axiom\ \SK.}
\label{fig:metric_comparison}
\end{figure}
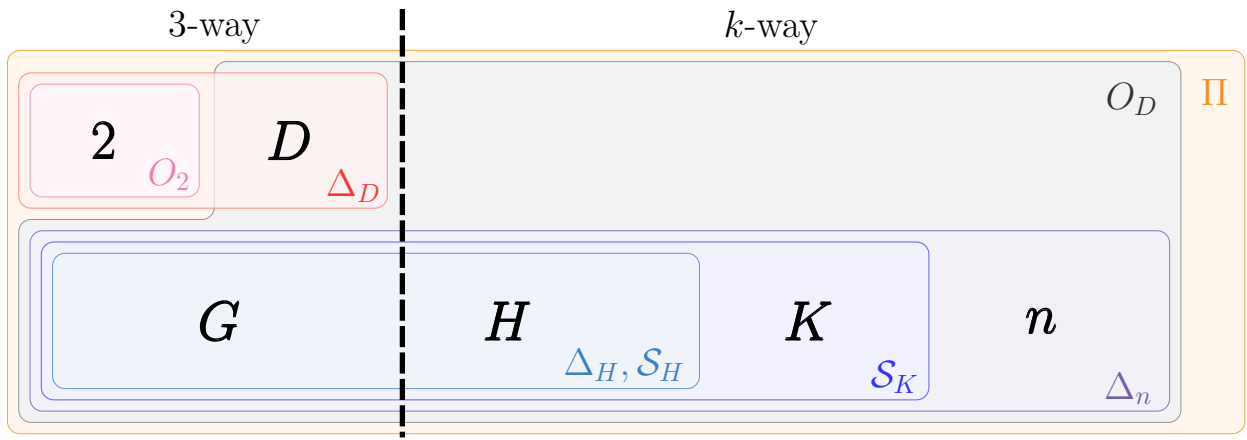

Figure \ref{fig:metric_comparison} visualizes the relation of these metrics with respect to the axioms that they satisfy. Note that all mentioned metrics satisfy the same generalized version of the symmetry axiom \Per. In addition, all metrics besides the $2$-metrics satisfy a generalized version of the positive definiteness \OD. These axioms are defined as follows:

\begin{axiomPi}
$d(v_1,\mydots,v_k)= d(\pi(v_1,\mydots,v_k))$, where $\pi(\cdot{})$ is a permutation of $\{v_1,...,v_k\}$
\end{axiomPi}

\begin{axiomOD}
$d(v_1,v_2,\mydots,v_k)=0 \Leftrightarrow$ $v_1=\mydots=v_k$, and $d(v_1,v_2,\mydots,v_k)>0$ otherwise.
\end{axiomOD}

\noindent The $2$- and the $D$-metrics both satisfy the triangle inequality \ddelta. We define this axiom on three points only:

\begin{axiomDeltaD}
$d(v_1,v_2,v_3) \leq d(v_1,v_2,a)+d(v_1,a,v_3) + d(a,v_2,v_3) \quad$
\end{axiomDeltaD}

\noindent Compared to the $D$-metrics, the $2$-metrics have to satisfy a different version of positive definiteness. Note that the $2$-metrics are not a subset of the $D$-metrics. 

\begin{axiomOTwo}
$d(v_1,v_2,v_3) = 0 \Leftrightarrow$ at least two of $v_1,v_2,v_3$ are equal, and $d(v_1,v_2,v_3)>0$ otherwise.
\end{axiomOTwo}

\noindent The $n$-metrics are the most general version of the $G$-metrics for $k$ points. Next to axioms \OD\ and \Per, it also has to satisfy the following version of the triangle inequality:

\begin{axiomDeltaN}
$d(v_1,\mydots,v_k)\leq d(v_1,\mydots,v_{k-1}, a) + d(\underbrace{a,\mydots, a}_{k-1},v_k) $ 
\end{axiomDeltaN}

\noindent In addition to the axioms of the $n$-metrics, the $K$-metrics make a separation between sets that contain exactly two and sets which contain exactly $k$ different elements. We denote this axiom as the separation axiom \SK. Note, that in literature, axiom \OD\ is sometimes also referred to as the separation axiom. The axiom \SK\ is defined as

\begin{axiomSK}
$d(\underbrace{v_1,\mydots,v_1}_{k-1}, v_2) \leq d(v_1,v_2,\mydots,v_k)$ where all elements $v_2, ..., v_k$  are distinct.
\end{axiomSK}

The $G$-metrics are a special case of the $K$- and the $n$-metrics, and they are equivalent to the $K$-metrics when $k=3$ is chosen. Observe that for some special examples, the $D$- and the $G$-metrics overlap. In Section \ref{sec:new_G_k_metric}, we will introduce the $H$-metrics, which are a generalization of the $G$-metrics. Compared to the $n$- and the $K$-metrics, the $H$-metrics have to satisfy a more strict version of the separation axiom and of the triangle inequality. These stricter versions of the axioms will naturally follow from the impossibility results presented in the next section.

\section{Impossibility Results}\label{sec:impossibility}
In this section, we will show that there exists no competitive algorithm for the $3$-MPMD problem on the metric spaces defined by the $2$- and the $D$-metrics. The only remaining metric space on $3$ points are thus the $G$-metrics for which there is a competitive algorithm. The algorithm presented in Section \ref{sec:algo} covers the $G$-metrics as a special case. For $k>4$, no known generalization of the $G$-metrics have a competitive algorithm for the $k$-MPMD problem. This will be shown in Section \ref{sec:impossibilityKNMetric} for the $n$- and the $K$-metrics.

\subsection{Impossibility Results for $2$- and $D$-Metrics}\label{sec:Impossibility2}
A typical example of a $2$-metric is the area of triangles. Even though the triangle area appears to be a reasonable measure for space costs, it is already insufficient for $3$-MPMD problems, as we are going to show next. Note that the following results also extend to the $D$-metrics.  

\begin{theorem}\label{thm:impossible_2metric}
There exists no randomized online algorithm that achieves a finite competitive ratio against an oblivious adversary for the $3$-MPMD problem on any $2$-metric with at least $3$ points.
\end{theorem}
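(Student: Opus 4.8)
The plan is to exploit the one feature of a $2$-metric that has no counterpart in an ordinary metric: by axiom $O_2$, every triple that repeats a point has cost $0$, while a triple of three distinct points can carry an arbitrarily large cost. This forces on any online algorithm an impossible trade‑off between matching a free triple immediately and keeping requests in reserve in case a later arrival would permit an even cheaper completion. Concretely, I would fix three distinct points $a,b,c$ of the metric, set $\lambda:=d(a,b,c)>0$ (positive by $O_2$), and recall $d(a,a,b)=0$.

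I would analyze two request sequences sharing a common prefix. In $R_1$, two requests arrive at $a$ and one at $b$, all at time $0$, and nothing else ever arrives; since $k=3$ these three must form one triple, necessarily $\{a,a,b\}$, so $\OPT$ matches it at time $0$ and $\Cost_{\OPT}(R_1)=0$. In $R_2^{\delta}$ (a parameter $\delta>0$) the same three time‑$0$ requests arrive, and then at time $\delta$ one further request arrives at each of $a,b,c$; now $\OPT$ holds the first three, and at time $\delta$ matches $\{a,a,a\}$ and $\{b,b,c\}$, paying space cost $0$ and total delay $\le 3\delta$, so $\Cost_{\OPT}(R_2^{\delta})\le 3\delta$. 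The key step is the dichotomy for an online algorithm $\A$: it sees the identical prefix in both sequences, so its (possibly randomized) choice of \emph{when} to match the forced triple $\{a,a,b\}$ has the same distribution in $R_1$ and $R_2^{\delta}$. If this match ever happens strictly after time $0$ with positive probability, then $\mathbb{E}[\Cost_{\A}(R_1)]>0=\Cost_{\OPT}(R_1)$ and the competitive ratio is already infinite. Otherwise $\A$ matches $\{a,a,b\}$ at time $0$ almost surely; but matchings are without withdrawal, so on $R_2^{\delta}$ the three requests arriving at time $\delta$ can only be matched as $\{a,b,c\}$, contributing $\lambda$. Hence $\mathbb{E}[\Cost_{\A}(R_2^{\delta})]\ge\lambda$ while $\Cost_{\OPT}(R_2^{\delta})\le 3\delta$, so the ratio on $R_2^{\delta}$ is at least $\lambda/(3\delta)$; letting $\delta\to 0$ shows no finite competitive ratio is possible. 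Since the adversary fixes both sequences in advance and $\A$'s stopping rule for the prefix triple does not depend on which sequence is run, the argument is valid against an oblivious adversary.

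I expect no substantive obstacle here — the argument is short and avoids any hierarchical construction. The only point requiring a little care is the promised extension to $D$‑metrics, where $d(a,a,b)$ is a small positive value $\varepsilon$ rather than $0$: then $\Cost_{\OPT}(R_1)=\varepsilon$, so competitiveness on $R_1$ only forces $\A$ to match $\{a,a,b\}$ within time proportional to $\varepsilon$; one therefore takes the gap $\delta$ a constant factor larger than that bound, obtains $\mathbb{E}[\Cost_{\A}(R_2^{\delta})]\ge\lambda$ and $\Cost_{\OPT}(R_2^{\delta})=O(\varepsilon)$, and sends $\varepsilon\to 0$ (equivalently $\lambda/\varepsilon\to\infty$) to again make the ratio unbounded.
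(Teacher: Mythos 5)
Your proof is correct under the paper's definition of the competitive ratio, but it takes a genuinely different route from the paper. The paper invokes Yao's minimax principle: it builds a distribution over multi-phase inputs (pattern $P_1$ of three collocated requests repeated, with a randomized termination phase releasing $P_2$), proves $\Cost_{\OPT}=3\tau$ for every realization (Lemma~\ref{lm:opt_cost}) and a constant lower bound of $3/4$ on the expected cost of every deterministic online algorithm when $\tau=1/r$ (Lemma~\ref{lm:alg_bound}), and then converts this into a lower bound for randomized algorithms. Your argument instead handles the randomized algorithm directly via a two-sequence indistinguishability dichotomy: either the algorithm delays the forced triple $\{a,a,b\}$ with positive probability, in which case $R_1$ (where $\OPT$ pays $0$ by axiom $O_2$) already rules out any finite ratio, or it matches at time $0$ almost surely, in which case $R_2^{\delta}$ forces cost $d(a,b,c)>0$ against $\OPT\le 3\delta$, and $\delta\to 0$ finishes. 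This is shorter and avoids both Yao's principle and the phase construction; what the paper's construction buys is robustness: because its $\OPT$ cost is small but strictly positive and the input length is randomized, the same distribution shows (as the paper's footnote notes, by repetition) that the impossibility persists even if an additive constant is allowed in the competitive definition, whereas your case (a) leans on the degenerate ratio with $\Cost_{\OPT}(R_1)=0$ and your case (b) on $\OPT$ being arbitrarily small, so extending your argument to the additive-constant setting would require a repetition argument against an algorithm that may adapt across repetitions --- which is essentially what the paper's randomized termination handles. Your sketched extension to $D$-metrics (replacing the zero-cost triples by cost $\varepsilon$ and letting $\varepsilon\to 0$) matches the paper's own remark on how Theorem~\ref{thm:impossible_Dmetric} is obtained.
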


We prove this result by applying Yao's minimax principle and by giving a distribution over inputs on which the cost of the optimal offline algorithm is almost $0$, while the expected cost of any deterministic online algorithm is not negligible. 
By repeating this input distribution sufficiently many times, the cost of the online deterministic algorithm can be arbitrarily large, but the cost of the optimal offline algorithm is still close to $0$. 

We first choose three different points from the $2$-metric space $(V,d_2)$, denoted by $v_1,v_2,v_3 \in V$. Without loss of generality, we assume that $d_2(v_1,v_2,v_3) = 1$. 
The construction of the bad example includes two request patterns, $P_1$ and $P_2$, each of which contains three requests. 
The three requests in pattern $P_1$ are all at point $v_1$, while pattern $P_2$ contains one request at every point $v_i, i=1,2,3$.

The distribution over request sequences depends on two parameters: the number of phases $r$ and a small time interval $0<\tau<1/3$. The time gap between any two consecutive phases is $1$. 
For each phase $i = 1,\mydots, r$, the following steps are executed: 
\begin{enumerate}
\item Present the three requests in $P_1$ to the algorithm simultaneously. 
\item Wait for time $\tau$.
\item Sample a random variable $C_i$ from a Bernoulli distribution, such that $P(C_i=1) = \frac{1}{r-i+1}$. 
\item If $C_i = 1$, then present requests in $P_2$ and terminate; Otherwise move on to the next phase. 
\end{enumerate}

This construction for the input sequence will serve as a baseline for impossibility constructions on generalized metrics, where we will only redefine the metric space, $P_1$ and $P_2$. In order to show that this construction leads to an unbounded competitive ratio for the $2$-metric, we will first show that the optimal offline algorithm always has a small cost for the defined request patterns $P_1$ and $P_2$. 

\begin{lemma}\label{lm:opt_cost}
The cost of $\OPT$ for every request set in the above distribution is $3\tau$. 
\end{lemma}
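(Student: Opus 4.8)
The plan is to analyze $\OPT$ on a single request set drawn from the distribution and show that, regardless of whether the Bernoulli variables ever fire, the cheapest offline matching costs exactly $3\tau$. There are two cases for the realized input: either some phase $i$ has $C_i = 1$, in which case the full sequence consists of $P_1$ in every phase $1,\dots,i$ together with one copy of $P_2$ in phase $i$; or no phase fires, in which case the input is just $P_1$ repeated in all $r$ phases. In the second case the total number of requests is $3r$, all located at $v_1$, so any matching into triples pays space cost $d_2(v_1,v_1,v_1)=0$ by axiom \OTwo; but $r$ is not necessarily a multiple of the condition we need, so in fact when no phase fires the construction as stated would leave $3r$ requests which is a multiple of $3=k$, and $\OPT$ can match them in triples at the moment they arrive. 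Wait — here I should be careful: the requests in a given phase arrive simultaneously, and there is a time gap of $1$ between phases, so matching the three requests of each $P_1$ immediately incurs zero time cost and zero space cost. Hence when no phase fires $\OPT$ pays $0$, and when phase $i$ fires it pays the cost of matching the three $v_1$-requests of each of phases $1,\dots,i-1$ at cost $0$ each, plus dealing with phase $i$, which contains three requests at $v_1$ (from $P_1$) and one request at each of $v_1,v_2,v_3$ (from $P_2$), i.e.\ four requests at $v_1$, one at $v_2$, one at $v_3$.

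The heart of the argument is therefore the single mixed phase. Those six requests split into two triples. The natural offline choice is $\{v_1,v_1,v_1\}$ (the three $P_1$-requests) paying space cost $0$, and $\{v_1,v_2,v_3\}$ (the $P_2$-requests) paying space cost $d_2(v_1,v_2,v_3)=1$. But that is expensive; the cheaper option exploits timing. The $P_2$-requests arrive $\tau$ after the $P_1$-requests in that phase. So $\OPT$ can instead wait: hold all four $v_1$-requests open until the $P_2$-requests arrive, then form the triple $\{v_1, v_2, v_3\}$ using \emph{one} of the late $v_1$-requests paired with $v_2$ and $v_3$? No — that still pays $d_2(v_1,v_2,v_3)=1$. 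The actually correct move is to match a triple of the form $\{v_1,v_1,v_2\}$ or similar: by axiom \OTwo, $d_2(v_1,v_1,v_2)=0$ and $d_2(v_1,v_1,v_3)=0$, since in each case two of the three points coincide. So $\OPT$ partitions the six requests of the mixed phase as $\{v_1,v_1,v_2\}$ and $\{v_1,v_1,v_3\}$, each of space cost $0$. The only cost is the time cost: to realize this partition, $\OPT$ must keep some requests open until the $P_2$-requests arrive. Concretely, two of the four $v_1$-requests must wait for $v_2$ and $v_3$ respectively — a delay of $\tau$ each — while the $v_2$- and $v_3$-requests are matched immediately on arrival at delay $0$; the remaining two $v_1$-requests can be matched among themselves with a third... but there is no third, so one more $v_1$-request must also wait. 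I will tabulate this precisely: the total time cost over the mixed phase is $3\tau$, coming from three $v_1$-requests each held open for the interval of length $\tau$ between the $P_1$ arrivals and the $P_2$ arrivals, and all other requests matched at zero delay.

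Putting the cases together, $\OPT \le 3\tau$ always, and I will argue the matching lower bound $\OPT \ge 3\tau$ by showing no offline matching of the mixed phase can do better: any partition into triples either pays space cost $\ge 1$ on some triple (if it ever groups $v_1,v_2,v_3$ or isolates $v_2$ with two $v_1$'s is free, so the only positive-space triple is exactly $\{v_1,v_2,v_3\}$, which $\OPT$ avoids), or it must separate $v_2$ and $v_3$ into different triples, each of which then needs its partner requests to arrive — forcing at least three units of $\tau$-length waiting because of parity: four $v_1$-requests cannot be paired off two-and-two while also donating one each to the $v_2$- and $v_3$-triples. The combinatorial parity count is the one step requiring care; everything else is immediate from axioms \OTwo\ and \Per. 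The main obstacle is thus verifying that $3\tau$, and not $2\tau$, is the true optimum of the mixed phase — i.e.\ that the delay cannot be shaved by a cleverer grouping — which comes down to the observation that $\{v_1,v_1,v_1\}$ uses up three of the four $v_1$-requests and leaves a single $v_1$ that must then join $v_2$ and $v_3$ at space cost $1$, so avoiding the space cost forces the $\{v_1,v_1,v_2\},\{v_1,v_1,v_3\}$ split and hence three delayed $v_1$-requests.
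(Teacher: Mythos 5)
Your proposal is correct and follows essentially the same route as the paper: match every pure-$P_1$ phase immediately at zero cost, and in the terminating phase wait $\tau$ and use the split $\{v_1,v_1,v_2\},\{v_1,v_1,v_3\}$, which by axiom $O_2$ has zero space cost, so the only cost is the $3\tau$ of waiting incurred by the three early $v_1$-requests; your added optimality check (that this beats the space-cost-$1$ split because $3\tau<1$) is a harmless extra the paper leaves implicit. One small remark: the ``no phase fires'' case you allow (with cost $0$, contradicting your own claim of cost exactly $3\tau$) is vacuous, since $P(C_r=1)=\frac{1}{r-r+1}=1$ guarantees termination by phase $r$.
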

\begin{proof}
For phases in which only requests from $P_1$ arrive, $\OPT$ immediately matches the three requests without any cost. For the last phase in which six requests arrive, $\OPT$ waits for time $\tau$, matches the two requests $v_1$ with one request $v_2$, and matches the other three requests $v_1$, $v_1$, $v_3$. Due to axiom \OTwo, the cost of $\OPT$ equals the waiting for the three requests from $P_1$ in the last phase. 
\end{proof}

All possible deterministic online algorithms on these request sets can be represented by a vector $b$ of size $r$, $b = (b_i)_{i=1}^r \in \{0,1\}^r$. An entry $b_i = 1$ in this vector means that the algorithm waits for time $\tau$ in phase $i$. Then, if new requests from pattern $P_2$ arrive (i.e., in the termination phase), the deterministic algorithm matches two requests at $v_1$ and one request at $v_2$, and also matches the three requests at $v_1$, $v_1$ and $v_3$. Otherwise, the algorithm matches three requests at $v_1$. An entry $b_i=0$ means that the algorithm matches three requests in $P_1$ directly without incurring waiting cost. A request set may have less than $r$ phases, meaning that possibly not all elements in the vector are used. 

\begin{lemma}\label{lm:alg_bound}
For any deterministic online algorithm $\ALG$, the expected cost of the algorithm is at least $3/4$ if $\tau = 1/r$. 
\end{lemma}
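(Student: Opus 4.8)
The plan is to analyze the expected cost of an arbitrary deterministic algorithm $\ALG$, encoded by its waiting vector $b=(b_i)_{i=1}^r\in\{0,1\}^r$, against the random input distribution, and show it is at least $3/4$ when $\tau=1/r$. The contributions to $\ALG$'s cost split into two parts: the waiting cost it pays while the entries $b_i$ are $1$, and the space (matching) cost it incurs. I would first observe that the process terminates in phase $j$ with probability exactly $1/r$ for each $j\in\{1,\dots,r\}$ (this is a standard telescoping computation from $P(C_i=1)=1/(r-i+1)$ conditioned on not having terminated yet), so each phase is ``reached'' with the right marginal probabilities, and conditioned on termination in phase $j$ the algorithm has already paid $\tau\cdot|\{i\le j : b_i=1\}|$ in waiting cost during $P_1$-arrivals (plus possibly $3\tau$ more of waiting in the termination phase if $b_j=1$).

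Next I would bound the space cost. If $b_j=0$ in the termination phase $j$ (the algorithm did not wait), then when the three $P_2$ requests arrive it cannot cheaply complete them: the three requests already matched in that phase were the three copies of $v_1$ from $P_1$, so $v_2$ and $v_3$ must be matched together with a $v_1$ from some future $P_1$ pattern — but by axiom \OTwo, any triple containing both $v_2$ and $v_3$ (which are distinct from each other) has strictly positive $d_2$-cost, in fact at least some constant determined by the metric; more carefully, one argues it must pay at least $d_2(v_1,v_2,v_3)=1$ for the triple $\{v_1,v_2,v_3\}$ it is forced to form (or, if it groups differently, a comparable amount — here one uses that the only ``free'' triples are those with a repeated point). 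Conversely, if $b_j=1$, the algorithm pays $3\tau$ extra waiting in phase $j$ but can then match optimally like $\OPT$, for space cost $0$. So, letting $W=|\{i : b_i=1\}|$ and conditioning on the termination phase $J$ (uniform on $\{1,\dots,r\}$), the expected cost is at least
\[
\Expect[\Cost_{\ALG}] \;\ge\; \tau\cdot\Expect\big[|\{i\le J : b_i=1\}|\big] \;+\; \Expect\big[\Indicator\{b_J=0\}\cdot 1 \;+\; \Indicator\{b_J=1\}\cdot 3\tau\big].
\]

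The heart of the argument is then a case analysis on $W$, the total number of ones in $b$. The hard part is getting the constant $3/4$ right while handling the adversary's freedom to place its ones anywhere. If $W$ is large, say $W\ge r/4$, then the waiting term already contributes: since $J$ is uniform, $\Expect[|\{i\le J: b_i=1\}|]$ is the average over $j$ of the number of ones among the first $j$ entries, and $\tau=1/r$ makes this at least roughly $W/(2r)\cdot\tau^{-1}\cdot\tau \approx$ a constant — one checks the worst case (all ones at the end) still gives $\ge (1/r)\cdot\tfrac{1}{2}\cdot\tfrac{W}{1}\cdot\tfrac{1}{r}\cdot r$-type bound yielding at least $1/8$ per unit, and combined with the $3\tau$-in-termination term one pushes past $3/4$. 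If $W$ is small, $W< r/4$, then $P(b_J=0)=1-W/r > 3/4$, so the space-cost term $\Indicator\{b_J=0\}\cdot 1$ alone contributes more than $3/4$ in expectation. In both regimes the bound $3/4$ follows; the boundary case requires balancing the two terms, which is where I expect the main technical care. Finally, by Lemma~\ref{lm:opt_cost} the optimum is $3\tau=3/r\to 0$, while repeating this gadget $N$ times multiplies both the online expected cost (by linearity) and the optimum by $N$, giving online cost $\ge 3N/4$ against optimum $3N/r$, hence competitive ratio $\ge r/4\to\infty$; invoking Yao's principle completes the proof of Theorem~\ref{thm:impossible_2metric}. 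The $D$-metric case follows verbatim since only axioms \OD\ and \ddelta\ (and the specific values used) are invoked, and $\OPT$'s matching is still valid there.
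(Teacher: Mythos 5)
Your setup is the same as the paper's (encode the deterministic algorithm by $b\in\{0,1\}^r$, note that the termination phase $J$ is uniform on $\{1,\dots,r\}$, and split the cost into waiting cost plus a space cost of $1$ when $b_J=0$), but the quantitative part has a genuine gap. Your waiting-cost term is $\tau\cdot\Expect\bigl[|\{i\le J: b_i=1\}|\bigr]$, which undercounts by a factor of $3$: in each phase where the algorithm waits, \emph{three} open requests (the three copies of $v_1$ from $P_1$) each wait $\tau$, so the cost is $3\tau$ per waited phase. With your factor and your unconditional accounting, the bound is placement-dependent and too weak to give $3/4$: for the all-ones vector (or, more generally, ones packed at the end), $\Expect\bigl[|\{i\le J: b_i=1\}|\bigr]$ can be as small as $W(W+1)/(2r)$, so with $\tau=1/r$ your stated lower bound evaluates to roughly $1-x+\tfrac{1}{2}x^2$ with $x=W/r$, whose minimum is $1/2$ (attained at $x=1$), and even at your boundary regime $x=1/2$ it is only about $5/8$. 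So the claimed case analysis ``pushes past $3/4$'' does not actually close; the lemma is still true for these algorithms (the all-ones algorithm really pays about $3/2$), but your inequality cannot certify it.

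There are two clean ways to repair this. The paper's route sidesteps placement entirely by conditioning on the event $\{b_J=1\}$: given that $J$ lands on a one, $J$ is uniform over the $B$ positions with $b_i=1$, so the expected number of waited phases up to $J$ is exactly $(B+1)/2$ \emph{independently of where the ones sit}, giving $\Expect[\mathrm{cost}]\ge \tfrac{B}{r}\cdot\tfrac{3\tau(B+1)}{2}+\bigl(1-\tfrac{B}{r}\bigr)\ge 1-\tfrac{B}{r}+\tfrac{B^2}{r^2}\ge 3/4$. Alternatively, you can keep your unconditional accounting but charge the correct $3\tau$ per waited phase; then even the worst placement gives at least $1-x+\tfrac{3}{2}x^2\ge 5/6\ge 3/4$, so the case analysis on $W$ becomes unnecessary. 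Your space-cost argument for the $b_J=0$ branch is essentially right, though it can be said more simply: the sequence ends at the termination phase, so the three open $P_2$ requests at $v_1,v_2,v_3$ must eventually be matched together, costing $d_2(v_1,v_2,v_3)=1$; there is no ``future $P_1$ pattern'' to draw from.
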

\begin{proof}
Note that by construction a request set ends at every phase with the same probability of $1/r$. 

Let $B = \sum_{i=1}^r b_i.$ The probability that the algorithm guesses correctly (i.e., the request set terminates in phase $i$ and $b_i=1$) is $B/r$. 
Consider the event representing that the deterministic online algorithm guessed correctly. 
The expected waiting cost conditioned on this event is $3\tau (B+1) / 2$. 

The probability that the algorithm guesses incorrectly (i.e., the request set terminates in phase $i$ and $b_i=0$) is $1-B/r$. 
In this case, the cost of the algorithm is at least $1$, because the deterministic algorithm does not wait at the termination phase and has to match the three requests in $P_2$, which results in cost $1$. 

If we choose $\tau = 1/r$, 
the expected cost in total is at least $\frac{B}{r}\cdot \frac{3(B+1)\tau}{2} + 1-\frac{B}{r} = 1 - \frac{B}{r} + \frac{3B(B+1)}{2r^2} \ge 1 - \frac{B}{r} + \frac{B^2}{r^2}$. 
Note that the value $1 - \frac{B}{r} + \frac{B^2}{r^2}$ reaches its minimum for $\frac{B}{r} = 1/2$. That is, $1 - \frac{B}{r} + \frac{B^2}{r^2}\ge 3/4$. 
\end{proof}

Theorem \ref{thm:impossible_2metric} follows by combining Lemma \ref{lm:opt_cost}, Lemma \ref{lm:alg_bound} and Yao's minimax principle\footnote{In fact, we can repeat the above process (infinitely) many times to show that Theorem \ref{thm:impossible_2metric} holds even if an additive term is allowed in the competitive analysis.}. 
For the $D$-metric, we can prove an analogous statement:

\begin{theorem}\label{thm:impossible_Dmetric}
There exists no randomized algorithm for the $3$-MPMD problem on $D$-metrics against an oblivious adversary that has a competitive ratio which is bounded by a function of the number of points $n$. 
\end{theorem}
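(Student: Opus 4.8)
The plan is to reuse verbatim the machinery built for the $2$-metric impossibility: the same two-phase construction, the same $r$ phases with the same $1/(r-i+1)$ Bernoulli trigger, the same time gap $1$, and the same small interval $0<\tau<1/3$. What must change is the choice of $P_1$, $P_2$ and the way we exploit the $D$-metric axioms instead of axiom \OTwo. Recall that a $D$-metric satisfies \OD\ (so $d(v_1,v_2,v_3)=0$ iff $v_1=v_2=v_3$) and \ddelta\ (the three-point triangle inequality). Unlike the $2$-metric, here a set with a repeated point can still have positive cost, so $\OPT$ cannot necessarily pair off ``cheap'' triples for free; this is the one place where the $2$-metric argument does not transfer directly and is the main obstacle to overcome.

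First I would fix three points $v_1,v_2,v_3$ and define the two patterns so that $\OPT$'s cost is $O(\tau)$ in every realization while any deterministic online algorithm is forced to pay $\Omega(1)$ with constant probability. Concretely, I would take $P_1$ to be three requests at $v_1$ (so the degenerate triple $d(v_1,v_1,v_1)=0$ by \OD, hence free) and $P_2$ to be one request at each of $v_1,v_2,v_3$. In the terminating phase $\OPT$ again waits $\tau$ and, using the extra three $v_1$-requests from the last copy of $P_1$, forms one triple $\{v_1,v_1,v_1\}$ (cost $0$) and one triple $\{v_1,v_2,v_3\}$; but this last triple has cost $d(v_1,v_2,v_3)>0$, which is the discrepancy with the $2$-metric case. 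I would absorb this by normalizing $d(v_1,v_2,v_3)=1$ and noting it is a fixed additive constant independent of $r$: over $r$ phases $\OPT$ pays only $3\tau$ of waiting plus this single unit, so $\Cost_{\OPT}=3\tau+1$. To keep the ratio unbounded I then iterate the whole distribution $N$ times with large time gaps between iterations; $\OPT$ pays $N(3\tau+1)$, and if I further send $\tau\to 0$ the ratio is governed by comparing the online cost to $N$ — which is not yet unbounded. The fix is to rescale the metric: replace $d$ by $d/S$ for a huge scalar $S$ (the axioms \OD, \ddelta\ are scale-invariant), making $d(v_1,v_2,v_3)=1/S$ arbitrarily small, so $\Cost_{\OPT}\to 0$ exactly as in Lemma~\ref{lm:opt_cost}, while the online matching-cost penalty of a wrong guess is $\ge 1/S$ and its waiting penalty at the termination phase is $\ge$ some fixed fraction of the inter-phase gap.

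The online lower bound then mirrors Lemma~\ref{lm:alg_bound} almost word for word: a deterministic algorithm is a vector $b\in\{0,1\}^r$, each phase terminates with probability $1/r$, and conditioning on a correct versus incorrect guess yields an expected cost of at least $\tfrac{B}{r}\cdot\tfrac{3(B+1)\tau}{2} + \bigl(1-\tfrac{B}{r}\bigr)\cdot c$ for a fixed constant $c>0$ coming from the unavoidable waiting at the termination phase (the algorithm that did not wait in that phase must still keep the $P_2$-requests open for the gap, or else match them immediately and pay $d(v_1,v_2,v_3)$ in the unscaled metric — in the scaled metric it is the waiting that dominates, so here I would keep the gap, not $\tau$, as the penalty driver, which is why the construction does not collapse). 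Optimizing over $B/r$ as before gives an $\Omega(1)$ bound per iteration, independent of $r$ and of $n$. Summing over $N$ iterations and dividing by $\Cost_{\OPT}$, which we have made as small as we like, shows the ratio exceeds any function of $n$. Invoking Yao's minimax principle completes the proof.

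The one subtlety I would be careful about — and I flag it as the crux — is making sure the $D$-metric axioms are actually consistent with the numbers I want: I need a $3$-point $D$-metric in which $d(v_1,v_1,v_2)$, $d(v_1,v_2,v_3)$ etc.\ can be chosen so that all degenerate triples appearing in $\OPT$'s matching are cheap (ideally $d(v_1,v_1,v_1)=0$, forced by \OD) and \ddelta\ is not violated by any triple the algorithms might form. Since \ddelta\ only ever lower-bounds $d(v_1,v_2,v_3)$ in terms of sums of other distances, setting all non-fully-degenerate distances equal to a common value $\delta$ and $d(v,v,v)=0$ trivially satisfies \OD\ and \ddelta; rescaling $\delta\to 0$ then drives $\Cost_{\OPT}\to 0$ while the online algorithm's termination-phase waiting stays bounded below. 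With such a concrete metric in hand, every step of the $2$-metric proof goes through with only cosmetic changes, and the theorem follows.
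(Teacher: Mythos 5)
Your proposal correctly identifies the obstacle (under \OD\ the doubly-repeated triples are no longer free), but the way you resolve it breaks the argument. Your fix is to let $\OPT$ pay $d(v_1,v_2,v_3)=1$ in the termination phase and then recover by iterating and globally rescaling the metric by $1/S$ — equivalently, in your final concrete construction, by taking every non-degenerate triple to a common value $\delta$ and sending $\delta\to 0$. Rescaling, however, shrinks the algorithm's wrong-guess space penalty by exactly the same factor as $\OPT$'s cost, so the only penalty left to charge a wrong-guessing algorithm is waiting time; and there is no forced waiting: the three requests of $P_2$ arrive simultaneously and form a complete set of size $k=3$, so an algorithm that has already matched the last copy of $P_1$ can match $P_2$ among itself instantly at space cost $\delta$ and zero delay. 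Your claim that it ``must still keep the $P_2$-requests open for the gap'' is therefore false. Worse, on your uniform-$\delta$ metric the never-waiting algorithm pays $\delta$ per sequence while your $\OPT$ pays $3\tau+\delta$ (any triple covering $v_2$ or $v_3$ costs $\delta$), so no lower bound at all follows — the essential ingredient, a gap between the triples $\OPT$ uses and the triple a wrong-guessing algorithm is forced into, has been erased by making all non-degenerate distances equal.

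The paper's (sketched) proof keeps that gap instead of erasing it: choose a $D$-metric on three points with $d(v_1,v_1,v_2)$ and $d(v_1,v_1,v_3)$ positive but arbitrarily small, say $\varepsilon$, while $d(v_1,v_2,v_3)=1$; axiom \OD\ permits this and \ddelta\ is easily checked. Then $\OPT$ behaves exactly as in Lemma~\ref{lm:opt_cost}: in the termination phase it re-pairs across the two patterns into $\{v_1,v_1,v_2\}$ and $\{v_1,v_1,v_3\}$, paying only $3\tau+2\varepsilon$, whereas a deterministic algorithm that guessed wrong has already consumed the three $v_1$'s of $P_1$ and must pay $d(v_1,v_2,v_3)=1$, so the analysis of Lemma~\ref{lm:alg_bound} carries over verbatim. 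Letting $\varepsilon\to 0$ with $n=3$ fixed makes the ratio exceed any function of $n$ — which is all the theorem claims; it is deliberately weaker than Theorem~\ref{thm:impossible_2metric}, since the ratio may still depend on the metric itself (through $1/\varepsilon$, i.e.\ the aspect ratio). Your attempt to drive $\Cost_{\OPT}$ literally to zero by global scaling cannot achieve this, because it scales the adversary's and the algorithm's costs alike.
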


Note that Theorem \ref{thm:impossible_2metric} stated that for any non-trivial $2$-metric it is impossible to design a competitive algorithm. In contrast, Theorem \ref{thm:impossible_Dmetric} does not exclude the possibility to design competitive algorithms for some specific non-trivial $D$-metrics.  
The detailed proof of Theorem \ref{thm:impossible_Dmetric} is omitted, since it is similar to the proof of Theorem \ref{thm:impossible_2metric}. As axiom \OD\ is different from axiom \OTwo, instead of having $d(v_1,v_1,v_2) = d(v_1,v_1,v_3) = 0$, we can let $d(v_1,v_1,v_2)$ and $d(v_1,v_1,v_3)$ be arbitrarily close to $0$. By setting $d(v_1,v_2,v_3)=1$, we can prove analogous statements to Lemma \ref{lm:opt_cost} and \ref{lm:alg_bound} in order to show that the competitive ratio of any randomized algorithm will be unbounded. 
These results show that one cannot use a $2$-metric or $D$-metric to model matching costs. 
In fact, from these impossibility results and their proofs, one can conclude that, to some extent, the $G$-metrics are necessary for $3$-MPMD problems.

\subsection{Impossibility Result for $n$- and $K$-Metrics}\label{sec:impossibilityKNMetric}
Other than the $2$- and $D$-metrics, the $n$- and $K$-metrics in this section can be shown to give positive results for the case $k=3$ and $k=4$ respectively. However, for $k > 3$ (resp. $k>4$), it can be shown that the competitive ratio of any online algorithm is unbounded. 

\begin{theorem}\label{thm:impossible_nmetric}
There exists no randomized algorithm for the $k$-MPMD ($k\geq 4$) problem on $n$-metrics against an oblivious adversary that has a competitive ratio which is bounded by a function of the number of points $n$. 
\end{theorem}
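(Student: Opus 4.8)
The plan is to apply Yao's minimax principle, reusing the phase-based input distribution from the proof of Theorem~\ref{thm:impossible_2metric} essentially verbatim; as the excerpt anticipates, only the underlying $n$-metric and the two request patterns $P_1,P_2$ need to be re-specified, where now each of $P_1,P_2$ consists of $k$ requests.

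I would work on a ground set of just three points $v_1,v_2,v_3$ and define $d$ on $k$-tuples by: $d=0$ if all $k$ entries are equal; $d=\epsilon$ if all three of $v_1,v_2,v_3$ occur among the entries; and $d=1$ otherwise, i.e.\ when exactly two of the three points are used. Symmetry ($\Pi$) and positive definiteness ($O_D$) are immediate, and $\Delta_n$ is trivial when the first tuple is constant. The one thing to verify is $\Delta_n$ in the remaining cases. If the tuple already uses all three points, then for any $a$ the right-hand side equals $\epsilon + d(a,\dots,a,v_k)=\epsilon$ when $a$ equals the designated last coordinate $v_k$, and is at least $0 + d(a,\dots,a,v_k)=1\ge\epsilon$ otherwise. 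If the tuple uses only two points (so its $d$-value is $1$), a short case check on $a\in\{v_1,v_2,v_3\}$ and on which coordinate is designated $v_k$ shows the right-hand side is always at least $1$: whenever $a\notin\{v_k\}$ and $a$ is not the ``other'' point of the tuple, the term $d(a,\dots,a,v_k)=1$ already suffices, and the remaining few subcases evaluate to exactly $1$. Thus $d$ is a bona fide $n$-metric (it is not a $K$-metric, since it violates $\mathcal{S}_K$, which is precisely why the instance below does not clash with the positive result for $K$-metrics at $k=4$).

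Next I would take $P_1$ to be $k$ requests at $v_1$ and $P_2$ to be two requests at $v_2$ together with $k-2$ requests at $v_3$, so that $d(P_1)=0$ and $d(P_2)=1$. The crucial combinatorial point — and the only place where the hypothesis $k\geq 4$ enters — is that the $2k$ requests of $P_1\cup P_2$ can be partitioned into two $k$-sets, each containing at least one copy of each of $v_1,v_2,v_3$: put one $v_2$-request into each part, split the $k-2\geq 2$ copies of $v_3$ as evenly as possible between the two parts, and pad each part up to size $k$ with $v_1$-requests; one checks that all the resulting multiplicities are $\geq 1$ exactly when $k\geq 4$ (for $k=3$ one part is forced to use only two points, since there is a single $v_3$-request). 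Each such part has $d$-value $\epsilon$. Mirroring Lemma~\ref{lm:opt_cost}, the optimal offline algorithm matches every non-terminal $P_1$ instantly at cost $0$ and, in the terminal phase, waits for time $\tau$ and then matches all $2k$ requests via the above partition; hence its cost on every sequence of the distribution is $k\tau+2\epsilon$, independent of $r$ and of which phase terminates.

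Finally I would rerun the argument of Lemma~\ref{lm:alg_bound}: a deterministic online algorithm is again encoded by a vector $b\in\{0,1\}^r$ recording whether it waits the extra time $\tau$ in each phase, and whenever it has already committed $P_1$ when the terminal $P_2$ arrives it must match $P_2$ as a single $k$-set at cost $d(P_2)=1$ — the exact analogue of the cost $d(v_1,v_2,v_3)=1$ in the $2$-metric proof. With $B=\sum_i b_i$ and the choice $\tau=\epsilon=1/r$, the same minimization as before (now using $k\geq 4$) bounds the expected cost below by $2(B/r)^2-(B/r)+1\geq 7/8$, so Yao's principle gives randomized competitive ratio $\Omega(r)\to\infty$ on this fixed three-point $n$-metric, which in particular rules out any bound expressible as a function of $n$. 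The main obstacle is entirely in the setup: verifying $\Delta_n$ for the metric above and confirming that $P_1\cup P_2$ re-partitions into two cheap $k$-sets precisely for $k\geq 4$; everything downstream is the already-established machinery.
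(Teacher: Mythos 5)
Your proposal is correct and runs on the same machinery as the paper's proof: Yao's principle with exactly the phase distribution of Theorem \ref{thm:impossible_2metric}, analogues of Lemmas \ref{lm:opt_cost} and \ref{lm:alg_bound}, and an $n$-metric engineered so that the union $P_1\cup P_2$ re-partitions into two cheap $k$-sets, exploiting the absence of any separation-type axiom. The only genuine difference is the witness metric. The paper works on $n=k+1$ points and makes a $k$-tuple cheap (cost $\varepsilon$) precisely when all entries are distinct, taking $P_1=\{v_1,\dots,v_k\}$ (cost $\varepsilon$) and $P_2=\{v_3,\dots,v_{k+1},v_{k+1}\}$ (cost $1$), which re-partition into the all-distinct sets $\{v_2,\dots,v_{k+1}\}$ and $\{v_1,v_3,\dots,v_{k+1}\}$. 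You instead fix a ground set of three points and make a tuple cheap when it covers all three; your $\Delta_n$ verification is sound (indeed simpler than you state: $d(a,\dots,a,v_k)=1$ whenever $a\neq v_k$, and the first term equals the left-hand side when $a=v_k$), and the partition of $P_1\cup P_2$ into two three-point-covering $k$-sets exists exactly for $k\ge 4$, as you argue. Your variant buys a slightly sharper statement -- the ratio is already unbounded on a fixed $3$-point space, so it manifestly cannot be bounded by any function of $n$, and your $P_1$ costs $0$ rather than $\varepsilon$ -- whereas the paper's all-distinct construction is the template it later adapts for its other impossibility results. One small caveat: your aside that the instance ``is not a $K$-metric'' is accurate for $k=4$, but for $k\ge 5$ axiom \SK\ is vacuously satisfied on three points, so the instance is then also a $K$-metric; this is harmless, since Theorem \ref{thm:impossible_Kmetric} rules out $K$-metrics for $k\ge 5$ anyway.
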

\begin{proof}
The idea for this proof is similar to the one in the previous section. We will first define an example of an $n$-metric as follows:
Given a set $V$ with $k+1$ elements $v_1,\mydots,v_{k+1}$, we define distances for any subset of elements as
\begin{itemize}
    \item $d(v,\mydots,v) = 0\quad \forall v\in V$ 
    \item $d(v_1,v_2,\mydots,v_k) = \varepsilon$ if all $v_1,\mydots,v_k$ are distinct
    \item $d(v_1,v_2,\mydots,v_k) = 1$ otherwise
\end{itemize}
The proposed distances indeed define an $n$-metric, as axioms \Per\ and\ \OD\ follow directly from the definition of the distances. In order to show that this metric satisfies the triangle inequality \ndelta, we first assume that the set on the left-hand side consists of only one element, then its distance is $0$ and \ndelta\ is satisfied trivially. If the set contains $k$ different elements, the right-hand side will contain at least one set with distance $\geq \varepsilon$, as it cannot be the sum of two sets which both contain exactly one element. If the left-hand side contains $x\in[2,k-1]$ distinct elements, then either $d(\underbrace{a,\mydots, a}_{k-1},v_k)$ has exactly two different elements and thus distance $1$, or $v_k = a$ in axiom \ndelta\ and thus the set $\{v_1,\mydots,v_{k-1}, a\}$ on the right-hand side is equal to the set  $\{v_1,\mydots,v_{k-1}, v_k\}$ on the left-hand side. 

For the worst-case input sequence, let $P_1 = \{v_1,\mydots,v_k\}$ and $P_2 := \{v_3,\mydots,v_{k+1},v_{k+1}\}$ be the two request patterns, where $v_1,\mydots,v_{k+1}$ are $k+1$ distinct elements. Assuming that $P_1$ and $P_2$ arrive in the same fashion as described in Section \ref{sec:Impossibility2}, the offline algorithm will serve every pattern $P_1$ except the last one at cost $\varepsilon$. The last pattern $P_1$ and pattern $P_2$ will be served together, by reordering the request sets to $\{v_2,\mydots,v_{k+1}\}$ and $\{v_1,v_3,\mydots,v_{k+1}\}$. In this case, the cost of $\OPT$ will be $\varepsilon$ for every arriving request plus additional waiting cost $k\cdot\tau$ for the last phase.
\end{proof}

This counter example only works because axiom \SK\ of the $G$- and the $K$-metrics does not have an equivalent counterpart in the $n$-metrics. Indeed, the presented counter example does already not satisfy axiom \SK, as $d(v_1,v_2,\mydots,v_k) = \varepsilon$ for distinct elements $v_i$ and $d(a,\mydots,a,v_1)=1$ for $a\neq v_1$. While \SK\ of the $K$-metrics defines a separation between sets with $2$ and sets with $k$ distinct elements, such a separation is not defined for any other pair of distinct sets (e.g., sets with $4$ and sets with $5$ distinct elements). Therefore, it is possible to adapt the above counter example to derive a similar impossibility result for axiom \SK\ as well. However, the axiom \SK\ is not the only axiom which fails in this generalization. The following theorem states that the $K$-metrics fail to be competitive using the triangle inequality axiom \ndelta, the corresponding proof is given in Appendix~\ref{app:impossibility_proofs}.

\begin{theorem}\label{thm:impossible_Kmetric}
There exists no randomized algorithm for the $k$-MPMD ($k\ge 5$) problem on $K$-metrics against an oblivious adversary that has a competitive ratio which is bounded by a function of the number of points $n$. 
\end{theorem}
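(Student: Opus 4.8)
The plan is to run the same schema as in Section~\ref{sec:Impossibility2}: exhibit a $K$-metric together with two patterns $P_1,P_2$ so that, fed into the random-phase distribution of that section, the offline optimum is tiny while every deterministic online algorithm pays a constant in expectation; Yao's minimax principle then finishes the proof exactly as for Theorem~\ref{thm:impossible_2metric}.

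\textbf{The metric.} First I would take the fixed five-point set $V=\{v_1,v_2,v_3,u_1,u_2\}$ and declare, for a $k$-multiset $M$ over $V$, that $d(M)=0$ when $M$ has one distinct point, $d(M)=\varepsilon$ when $M$ has exactly three distinct points, and $d(M)=1$ otherwise. Then $\Pi$ and $O_D$ are immediate (the value depends only on the number of distinct points). For $\Delta_n$ one writes the right-hand side as $d(M')+d(\underbrace{a,\dots,a}_{k-1},w)$ with $M'=M\setminus\{w\}\cup\{a\}$: if $a\neq w$ the pair-multiset term is already $1\ge d(M)$, and if $a=w$ then $M'=M$ and equality holds. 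For $\mathcal{S}_K$, the value $d(\underbrace{a,\dots,a}_{k-1},b)=1$ (for $a\neq b$) must be at most $d(M)$ for every $k$-multiset $M$ of the form $(a,b,v_3,\dots,v_k)$, i.e.\ for multisets with $k$ or with $k-1$ distinct points (the latter when $a$ coincides with one of $v_3,\dots,v_k$); since $k\ge 5$ gives $k-1\ge 4$, each such $M$ has $d(M)=1$, so the axiom holds. This is exactly where $k\ge 5$ is needed: the gadget requires a ``cheap'' multiset type strictly between the pair-multisets -- which $\Delta_n$ forces to be expensive once anything is -- and the $(k-1)$-distinct multisets -- which $\mathcal{S}_K$ keeps above the pair-multisets -- and ``three distinct points'' qualifies precisely when $3<k-1$; for $k\le 4$ the same two-valued assignment is not even a $K$-metric, consistent with the positive results quoted for $k\le 4$.

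\textbf{The patterns and the two estimates.} I would use $P_1=(\underbrace{v_1,\dots,v_1}_{k-2},v_2,v_3)$, with $d(P_1)=\varepsilon$, and $P_2=(\underbrace{u_1,\dots,u_1}_{k-1},u_2)$, with $d(P_2)=1$. For the analogue of Lemma~\ref{lm:opt_cost}: in every non-terminal phase $\OPT$ matches the copy of $P_1$ on its own at cost $\varepsilon$, and in the terminal phase it waits $\tau$ and serves $P_1\cup P_2$ as $Q_1=(\underbrace{v_1,\dots,v_1}_{k-2},v_2,u_1)$ and $Q_2=(v_3,\underbrace{u_1,\dots,u_1}_{k-2},u_2)$, both three-distinct, at total space cost $2\varepsilon$; hence $\Cost_{\OPT}\le (r+1)\varepsilon+k\tau$ on every sequence. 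For the analogue of Lemma~\ref{lm:alg_bound} I would reuse the $\{0,1\}^r$ encoding of deterministic strategies from the paragraph before that lemma: in a ``$0$'' phase the algorithm matches $P_1$ immediately and is then left, if that phase is terminal, with $P_2$ alone at cost $d(P_2)=1$, while a ``$1$'' phase costs $k\tau$ and, if terminal, permits the cheap re-clustering above; with $\tau=1/r$ the same convexity argument as in Lemma~\ref{lm:alg_bound} shows the expected cost is at least a positive constant.

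\textbf{Conclusion, and the hard part.} Yao's minimax principle (with the construction repeated to absorb additive terms, as in Theorem~\ref{thm:impossible_2metric}) then gives a competitive ratio $\Omega\big(1/((r+1)\varepsilon+k\tau)\big)$, unbounded as $r\to\infty$ and $\varepsilon\to0$ even though $n=5$ is fixed. The one genuinely delicate step is the metric: $\Delta_n$ is weak enough to decompose a $k$-distinct set into one $(k-1)$-distinct set plus one pair-multiset, which is what lets $\OPT$ re-cluster $P_1\cup P_2$ cheaply, while $\mathcal{S}_K$ simultaneously forbids pair-multisets from being cheaper than $(k-1)$-distinct multisets; making both constraints hold while keeping the ``three distinct points'' type cheap -- and checking that $P_1$ and the re-clustered $P_1\cup P_2$ fall in that cheap type whereas ``match $P_1$, then face $P_2$ alone'' does not -- is what requires care and is what forces $k\ge 5$.
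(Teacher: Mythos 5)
Your proposal is correct, and it feeds the right kind of gadget into the Section~\ref{sec:Impossibility2} machinery (phased distribution, the $\{0,1\}^r$ strategy encoding, Yao's principle); the only substantive difference from the paper is the gadget itself. The paper works on just three points $\{a,b,c\}$ and makes exactly two ``balanced'' multisets cheap, namely $d(a^{\lfloor k/2\rfloor},b^{k-\lfloor k/2\rfloor})=d(c^{\lfloor k/2\rfloor},b^{k-\lfloor k/2\rfloor})=\varepsilon$ with every other non-constant multiset at cost $1$; its patterns are $P_1=(b,\dots,b)$ of cost $0$ and $P_2=(a^{\lfloor k/2\rfloor},c^{\lfloor k/2\rfloor},b^{k-2\lfloor k/2\rfloor})$ of cost $1$, and the check of $\Delta_n$ rests on the observation that the term $d(x^{k-1},v)$ can never realize one of the two cheap balanced patterns. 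Your metric instead uses five points and lets the distance depend only on the number of distinct elements ($0$ for one, $\varepsilon$ for exactly three, $1$ otherwise), with $P_1$ itself a cheap three-distinct pattern of cost $\varepsilon$ rather than $0$; your axiom checks are essentially mechanical ($\Delta_n$ because the pair-multiset term is $0$ or $1$, $\mathcal{S}_K$ because any admissible right-hand side has at least $k-1\ge 4$ distinct points), your re-clustering $(v_1^{k-2},v_2,u_1)$ and $(v_3,u_1^{k-2},u_2)$ indeed partitions $P_1\cup P_2$ into two three-distinct sets of total cost $2\varepsilon$, and with $\tau=1/r$, $\varepsilon=1/r^2$ the ratio is unbounded at fixed $n=5$, which is all the theorem needs. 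What the paper's gadget buys is a smaller universe ($n=3$) and a zero-cost filler pattern, so Lemmas~\ref{lm:opt_cost} and~\ref{lm:alg_bound} transfer almost verbatim; what yours buys is a cleaner verification of the $K$-metric axioms and a transparent explanation of the threshold in the statement, since for $k\le 4$ your two-valued assignment demonstrably violates $\mathcal{S}_K$, matching the positive results the paper cites for small $k$. The small extra $\varepsilon$-costs your algorithmic lower bound picks up in non-terminal phases only increase the online cost, so the convexity argument of Lemma~\ref{lm:alg_bound} goes through unchanged, as you claim.
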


Theorem \ref{thm:impossible_nmetric} and \ref{thm:impossible_Kmetric} show that the existing generalizations of the $G$-metrics fail to be competitive for $k > 3$ and $k > 4$ respectively. The presented counter examples also suggest that generalized $G$-metrics, which have a bounded competitive ratio, have to contain reasonable generalized versions of the separation axiom and the triangle inequality.

\section{The New Generalized Metric Space $H$}\label{sec:new_G_k_metric}

In this section we will present a more restricted generalization of the $G$-metrics to $k$ points. The restriction of the triangle inequality \ndelta\ follows directly from the counter example of Theorem \ref{thm:impossible_Kmetric}:

\begin{axiomDeltaH}
$d(v_1,\mydots,v_k)\leq d(v_1,\mydots, v_i, \underbrace{a,\mydots,a}_{k-i}) + d(\underbrace{a,\mydots,a}_{i},v_{i+1},\mydots,v_k)\ \forall\  a\in V$  and\ \ $\forall\ i \in [k]$
\end{axiomDeltaH}
Theorem \ref{thm:impossible_nmetric} does not directly imply the required version of axiom \SH. As mentioned before, the counter example can be also extended to violate axiom \SK\ of the $K$-metrics. Following such an extended counterexample, axiom \SK\ can be strengthened as follows:

\begin{axiomSHstar}
$d(S_i)\leq d(S_j),\ \text{if}\ elem(S_i)\subset elem(S_j)$ 
\end{axiomSHstar}
\noindent where $elem(S)$ denotes all distinct elements contained in multiset $S$.
However, the presented generalizations are not yet sufficient to be able to find a competitive algorithm. Note that we can extend the counter example of Theorem \ref{thm:impossible_nmetric} to also consider sets which contain the same elements. The next theorem states the corresponding impossibility result, the corresponding proof can be found in Appendix~\ref{app:impossibility_proofs}. 

\begin{theorem}
There exists no randomized algorithm for the $k$-MPMD ($k\ge 5$) problem on the $K$-metrics enriched with the axioms \SHstar\ and \hdelta\ against an oblivious adversary that has a competitive ratio which is bounded by a function of the number of points $n$.
\end{theorem}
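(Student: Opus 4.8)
The plan is to re-use, essentially verbatim, the Yao-minimax scaffolding of Section~\ref{sec:Impossibility2} and its $n$-metric lift in Theorem~\ref{thm:impossible_nmetric}: keep the $r$-phase Bernoulli schedule (phase $i$ terminating with the prescribed probability $\tfrac{1}{r-i+1}$), and change only (i) the generalized metric $M=(V,d)$, (ii) the two patterns $P_1$ and $P_2$, and (iii) the time gap between consecutive phases. The one real obstacle is (i). Because we now insist on \SHstar, the trick of Theorem~\ref{thm:impossible_nmetric} --- making an all-distinct $k$-set cheap and a $k$-multiset with a repetition expensive --- is no longer available, since \SHstar\ forces $d$ to depend only on $elem(S)$ and to be monotone under inclusion of supports. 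So the cheap configurations must instead be those whose support lies inside one of a fixed family of ``regions'', and $P_1$ must live inside such a region (so one copy of $P_1$ costs only $\SmallDist$, or $0$), while $P_2$ has support spread across regions and carries the large distance, say $1$. Following the remark ``sets which contain the same elements'', $P_1$ and $P_2$ are taken to be suitable multisets whose multiplicities are arranged so that a fixed-size bundle of consecutive copies of $P_1$, together with the terminating $P_2$, can be re-partitioned by $\OPT$ into $k$-multisets each of which stays inside one region; the arithmetic that makes such a re-partition feasible while keeping $d(P_2)=1$ is exactly what needs $k\ge 5$ (it is what breaks for the $K$- and $n$-metrics at smaller $k$).

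Concretely I would define $d(S)$ to grow (at least) linearly in the number of regions that $elem(S)$ meets, with an extra $\SmallDist$ whenever $|elem(S)|\ge 2$ and $0$ for a singleton, and then verify the axioms. \Per\ and \OD\ are immediate, \SK\ and \ndelta\ are weakenings of \SHstar\ and \hdelta\ respectively, and \SHstar\ holds by construction. For \hdelta\ one performs a case analysis on which regions the two sub-multisets $(v_1,\dots,v_i)$ and $(v_{i+1},\dots,v_k)$ and the pivot $a$ meet, using the observation that whenever the left-hand multiset already meets two regions then one of the two terms on the right of \hdelta\ does too, so the extra ``$+1$'' needed on the right is always available; unlike \ndelta, \hdelta\ has to be checked for every split point $i$, and this is precisely the extra work the candidate metric must survive --- getting \hdelta\ to coexist with a genuine cost gap is the technical heart of the proof.

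With the metric in hand, (ii) is the analogue of Lemma~\ref{lm:opt_cost}: $\OPT$ serves the generic copies of $P_1$ at per-copy cost $\SmallDist$ (or $0$), banks only the last constant-many, and at the terminating phase re-partitions these plus $P_2$ into within-region $k$-sets, so that $\Cost_{\OPT}$ is $O(r\SmallDist)$ plus $O(1)$ times the (shrinking) time gap plus a bounded number of $\SmallDist$'s, all of which tend to $0$ as $\SmallDist$ and the gap tend to $0$; as in the footnote to Section~\ref{sec:Impossibility2}, repetition of the schedule then removes the residual additive term. And (iii) is the analogue of Lemma~\ref{lm:alg_bound}: a deterministic online algorithm is encoded by $b\in\{0,1\}^r$, where $b_i=0$ means it matches (its share of) $P_1$ before learning $C_i$ and is therefore caught at cost $\ge 1$ if phase $i$ terminates, while $b_i=1$ means it pays a fixed waiting amount in that phase; the same Bernoulli computation with the waiting scale tied to $1/r$ forces the expected cost to be at least a positive constant. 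Yao's principle together with $\Cost_{\OPT}\to 0$ then yields an unbounded ratio, in particular one not bounded by any function of $|V|$. The genuine difficulty, to reiterate, is producing a single $d$ that is monotone under support inclusion, obeys the two-sided triangle inequality \hdelta, and still keeps $d(P_2)$ a fixed constant while letting $\OPT$ dissolve $P_2$ for only $O(\SmallDist)$: reconciling \hdelta\ with that gap is exactly where the weaker $K$- and $n$-metric axioms were insufficient, and it is the reason the construction requires $k\ge 5$.
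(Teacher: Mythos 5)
Your scaffolding (Yao's principle, the $r$-phase Bernoulli schedule, the encoding of deterministic algorithms by a vector $b\in\{0,1\}^r$) matches the paper, but the heart of the proof --- the choice of metric and patterns --- rests on a misreading of \SHstar. That axiom only demands $d(S_i)\le d(S_j)$ when $elem(S_i)$ is a \emph{strict} subset of $elem(S_j)$; it says nothing about two multisets with the \emph{same} support, and in particular it does not force $d$ to depend only on $elem(S)$. The paper's counterexample exploits exactly this freedom: on a two-point space $V=\{a,b\}$ it sets $d=0$ for constant multisets, $d=\varepsilon$ for mixed multisets whose multiplicity of one element lies in a prescribed middle range, and $d=1$ for all other mixed multisets --- same support, very different values --- and takes $P_1=\{b,\dots,b\}$ and $P_2$ a balanced mix of $a$'s and $b$'s, so that the union of $P_1$ and $P_2$ can be re-split into two $\varepsilon$-cheap multisets while $d(P_2)=1$. (This is exactly the loophole that the full \SH\ axiom later closes with its factor-$(k-1)$ clause for equal supports, which is the point the theorem is meant to make.) By declaring this mechanism unavailable you deprive yourself of the intended construction.

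Your substitute construction does not recover it. If every copy of $P_1$ lies inside a single region and $P_2$ is one $k$-multiset meeting several regions, then in the pool formed by finitely many copies of $P_1$ plus the single terminating $P_2$, the elements of $P_2$ lying outside $P_1$'s region number strictly between $0$ and $k$, so they can never be completed into $k$-multisets that ``stay inside one region''; the re-partition you ask $\OPT$ to perform is arithmetically impossible no matter how the multiplicities are tuned or how large $k$ is. At least one part must straddle regions, and under your own ``cost linear in the number of regions met'' rule that part pays a constant comparable to $d(P_2)$, so $\OPT$'s cost at the terminating phase stays bounded away from $0$ and the unbounded-ratio conclusion collapses. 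Relatedly, the threshold $k\ge 5$ in the paper arises from needing a nonempty middle range of multiplicities on a two-point space (so that the cheap value $\varepsilon$ exists without contradicting the $G$-metric case $k=3$), not from any region-repartition arithmetic; your stated source of the constraint is therefore also off target.
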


Following the above impossibility result, we can extend the \SHstar\ axiom to the desired \SH\ axiom. We call a metric which satisfies the following four axioms an $H$-metric.

\setcounter{axiomGk}{0}
\begin{description}
\item [$H$-Metric:]  The function $d:V^k\rightarrow[0,\infty)$ is called a $H$-\textit{metric} on a set $V$ if and only if for all $v_1,\mydots,v_k\in V$ the following axioms are satisfied:
\begin{axiomPi}
$d(v_1,\mydots,v_k) = d(\pi(v_1,\mydots,v_k))$, where $\pi(\cdot{})$ is a permutation of $\{v_1,...,v_k\}$
\end{axiomPi}
\begin{axiomOD}
$d(v_1,\mydots,v_k)=0 \Leftrightarrow$ $v_1=v_2=\mydots=v_k$, otherwise  $d(\underbrace{v_1,\mydots,v_1}_{k-1}, v_2)>0 \quad\forall\  v_1 \neq v_2$ 
\end{axiomOD}
\begin{axiomSH}
$d(S_i)\leq d(S_j),$ if $elem(S_i)\subset elem(S_j)$ and $d(S_i)\leq (k-1)\cdot d(S_j),$ if $elem(S_i) = elem(S_j)$
\end{axiomSH}
\begin{axiomDeltaH}
$d(v_1,\mydots,v_k)\leq d(v_1,\mydots, v_i, \underbrace{a,\mydots,a}_{k-i}) +d(\underbrace{a,\mydots,a}_{i},v_{i+1},\mydots,v_k)\ \forall\  a\in V$ and\ \ $\forall\ i \in [k]$
\end{axiomDeltaH}
\end{description}

The $H$ metric space contains a rich class of non-trivial functions. In the following, we will present an example of a metric underlying the presented set of axioms. Two more examples can be found in Appendix~\ref{sec:more_examples}.

\begin{example}
In this example, we consider a game with four players. The players appear on one of the three platforms $V=\{a,b,c\}$. Our metric space is defined on $V^4$. 

Assume that the platforms $a$ and $b$ as well as $a$ and $c$ are at distance $1$ to each other, while $b$ and $c$ are at distance $1/2$ from each other. In order to find a good match, we want to minimize the distances between the used platforms. We therefore define a penalty for any matching which corresponds to the length of the shortest path between all platforms that participate in the matching. In addition, we want to match players of the same platform whenever possible. This idea is represented by the \SH\ axiom.
We therefore will introduce an additional penalty for players who were matched across different platforms: a matching that matches exactly one player from one or more platforms will be assigned a penalty of $1$. That is, in the matching $(a,b,b,c)$, platforms $a$ and $c$ will contribute to a total penalty of $1$.
The total weight $d$ of a matching is determined by the sum of the penalties for the distance between platforms and the penalties for little used platforms.

Following these rules, we can define the weights of the matchings in this metric as
\begin{itemize}
\item $d(a,a,a,a) = d(b,b,b,b) = d(c,c,c,c) = 0 + 0$
\item $d(b,c,c,c)= d(c,b,b,b)= \frac{1}{2} + 1$
\item $d(a,b,b,b) = d(a,c,c,c) = d(b,a,a,a)= d(c,a,a,a) = 1+1$ 
\item $d(a,a,b,b) = d(a,a,c,c) = 1 + 0$ 
\item $d(b,b,c,c) = \frac{1}{2} + 0 $
\item $d(a,a,b,c) = d(a,b,b,c) = d(a,b,c,c) = 1.5 + 1$
\end{itemize}
Note that all other weights $d$ can be extended to $V^4$ using the property of symmetry (\Per). In this example, we can verify that matchings that have single players from some platform also have a higher weight. We next will verify the axioms of an $H$-metric. The first two axioms \Per\ and \OD\ hold by definition. The axiom \SH\ holds because matchings that contain three different platforms all have weight $2.5$ which is larger than any matching over exactly two platforms; matchings that contain two platforms have a non-zero weight, that is, the weight is larger than the weight of matchings over only one platform; in addition, we need to verify that the matchings across the same two platforms differ by at most a factor of $(k-1)=3$ in weight. We will omit verifying each inequality here. Observe, however, that for $d(b,b,c,c)$ and $d(c,b,b,b)$ the inequality $d(c,b,b,b) \leq 3\cdot d(b,b,c,c)$ is tight.
It remains to check the triangle inequality \hdelta. The interesting inequalities are ones where the number of matched platforms is different on the right and the left hand-sides of the inequality:
\begin{itemize}
\item $d(a,b,b,b) \leq d(b,b,b,b) + d(a,b,b,b) = 0 + 2 $
\item $d(a,b,c,c) \leq d(b,b,c,c) + d(a,b,b,b) = \frac{1}{2} + 2$
\item $d(a,a,b,c) \leq d(a,a,c,c) + d(c,c,b,c) = 1+ 1.5$
\end{itemize}
Other inequalities can be verified accordingly. With this, we have verified that our example indeed is an $H$-metric. Finally, observe that we would have also received an $H$-metric without penalizing the little used platforms. We chose this example to show that the \SH\ and the \hdelta\ axioms can both be satisfied with equality in some metric spaces.
\end{example}

\subsection{Reduction}\label{sec:reduction}
In this section, we make the observation that the presented $H$-metrics are roughly equivalent to a metric defined through pairwise distances of the $k$ points:

\begin{theorem}\label{thm:reduction}
Let $d_H$ be an $H$-metric on a set $V$. 
There exists a metric $d:V^2\rightarrow[0,\infty)$ such that 
\begin{eqnarray*}
c_l\cdot \sum\limits_{i=1}^{k-1}\sum\limits_{j=i+1}^{k} d(v_i,v_j) \le d_H(v_1,v_2,\mydots,v_k)\\ 
\le c_u\cdot \sum\limits_{i=1}^{k-1}\sum\limits_{j=i+1}^{k} d(v_i,v_j)
\end{eqnarray*}
for two constants $c_l$ and $c_u$ which are only dependent on $k$.
\end{theorem}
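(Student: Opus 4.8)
The plan is to define the pairwise metric $d$ directly from $d_H$ by restricting to multisets with exactly two distinct elements. Concretely, for $u\neq v$ set $d(u,v) := d_H(\underbrace{u,\dots,u}_{k-1},v)$ and $d(u,u):=0$. The first task is to check that this $d$ is an honest metric on $V^2$. Nonnegativity and the identity of indiscernibles follow from axiom \OD. Symmetry is the subtle point: $d_H(u,\dots,u,v)$ and $d_H(v,\dots,v,u)$ need not be equal, but by axiom \SH\ (the "$elem(S_i)=elem(S_j)$" clause) they agree up to a factor of $k-1$, so replacing $d$ by the symmetrized version $\tfrac12\big(d_H(u,\dots,u,v)+d_H(v,\dots,v,u)\big)$ costs only constant factors and I will work with that. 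For the triangle inequality $d(u,w)\le d(u,v)+d(v,w)$, I apply \hdelta\ to the tuple $(\underbrace{u,\dots,u}_{k-1},w)$ splitting after the $(k-1)$-st coordinate with $a=v$: this gives $d_H(u,\dots,u,w)\le d_H(u,\dots,u,v)+d_H(v,\dots,v,w)$, which is exactly what is needed (again up to the symmetrization factor).

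Next I prove the two-sided comparison. For the \emph{lower} bound, I want $d_H(v_1,\dots,v_k)\ge c_l\sum_{i<j} d(v_i,v_j)$. Fix any pair $v_i,v_j$. By \SH\ (the "$elem(S_i)\subseteq elem(S_j)$" clause) applied to the multiset $\{v_1,\dots,v_k\}$ and its sub-multiset $\{v_i,v_j,v_j,\dots,v_j\}$ — which has the same or fewer distinct elements — we get $d_H(v_i,v_j,\dots,v_j)\le d_H(v_1,\dots,v_k)$, and the left side is (up to symmetrization) $d(v_i,v_j)$. Since there are $\binom{k}{2}$ pairs, summing gives $\sum_{i<j} d(v_i,v_j)\le \binom{k}{2}\cdot O(1)\cdot d_H(v_1,\dots,v_k)$, so $c_l = \Omega(1/k^2)$ works. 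For the \emph{upper} bound, I iterate \hdelta. Starting from $d_H(v_1,\dots,v_k)$, peel off one coordinate at a time: \hdelta\ with $i=k-1$ and $a=v_1$ bounds $d_H(v_1,\dots,v_k)$ by $d_H(v_1,\dots,v_{k-1},v_1)+d_H(v_1,\dots,v_1,v_k)$. The second term is $d(v_1,v_k)$; the first has fewer "new" elements, and I recurse on it, at each stage extracting a term of the form $d(v_1,v_j)$ plus a smaller tuple, until I am left with $d_H(v_1,\dots,v_1)=0$. This expresses $d_H(v_1,\dots,v_k)$ as at most a sum $\sum_{j=2}^k d(v_1,v_j)$ (with possibly constant multiplicities from the symmetrization bookkeeping), which is trivially at most $\sum_{i<j}d(v_i,v_j)$; hence $c_u = O(1)$ as a $k$-dependent constant.

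The main obstacle I anticipate is the asymmetry of $d_H$ on two-element multisets: because \SH\ only controls $d_H(u,\dots,u,v)$ versus $d_H(v,\dots,v,u)$ up to the multiplicative factor $k-1$ (rather than forcing equality), every step above — symmetry of $d$, its triangle inequality, and both directions of the comparison — must absorb this factor, and I need to be careful that these factors compound only additively across the at-most-$k$ peeling steps and not multiplicatively, so that the final $c_l,c_u$ remain functions of $k$ alone. A secondary subtlety is making the \hdelta\ peeling argument genuinely decrease a well-founded quantity (e.g. the number of coordinates not yet equal to $v_1$) so the recursion terminates cleanly; this is routine but should be stated as an induction on that quantity. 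Once these bookkeeping points are handled, the constants can be read off as $c_l = \Theta(1/k^2)$ and $c_u = \Theta(1)$ (or any explicit $k$-dependent values), completing the proof.
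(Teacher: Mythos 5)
Your proposal is correct and follows essentially the same route as the paper: the paper defines $d(v_1,v_2)=d_H(v_1,v_2,\dots,v_2)+d_H(v_2,v_1,\dots,v_1)$ (your symmetrization up to a factor $2$), obtains the upper bound by exactly the same iterated peeling via \hdelta\ with $a=v_1$, and obtains the lower bound from \SH\ with the same case distinction between tuples with at least three distinct elements and tuples with exactly two (where the extra factor $k-1$ you flag is indeed absorbed, yielding $c_l=1/k^3$ rather than your $\Theta(1/k^2)$, an immaterial difference since the constants may depend on $k$). Your additional verification that $d$ is a metric is left implicit in the paper but is handled the same way, via \OD\ and \hdelta\ with the split $i=k-1$.
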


We will prove this theorem in Appendix~\ref{sec:proof of the reduction thm}. For the following section, we need to define the desired metric $d$ on pairs of points with respect to the $H$-metrics: 
\begin{definition}\label{definition:metric}
For $v_1, v_2 \in V$ we define a metric $d:V^2\rightarrow[0,\infty)$ as
$$d(v_1,v_2) := d_H(v_1,v_2,\mydots,v_2) + d_H(v_2,v_1,\mydots,v_1)$$
\end{definition}

\renewcommand*{\proofname}{Proof}

\section{An Algorithm for the $k$-MPMD Problem on $H$-Metrics}\label{sec:matching_alg}

In this section, we present an algorithm for the $k$-MPMD problem. The corresponding correctness analysis will be presented in Section \ref{sec:correctness_analysis}.
This algorithm follows the common idea used in almost all existing MPMD algorithms: 
embed the given $H$-metric $\mathcal{M}$ into tree metrics, reduce the height of the tree
and, finally, design an online algorithm on the tree. Let $(T,w)$ denote a tree metric where $T$ is the tree and $w$ is the corresponding metric defined on this tree.
The first steps of the algorithm - to embed the $H$-metric $\mathcal{M}$ to a tree metric $(T,w)$ and reduce the height - are done according to Section \ref{sec:reduction} and the results in \cite{Fakcharoenphol2004485} and \cite{Bansal:2015}:
\begin{enumerate}
    \item Define metric $d$ according to Definition  \ref{definition:metric}.
    \item Apply the result in \cite{Fakcharoenphol2004485} to embed metric $d$ obtained in the previous step to a tree metric.
    \item Reduce the height of the tree obtained in the previous step by applying the result in \cite{Bansal:2015}. The resulting tree metric is denoted by $(T,w)$.

\end{enumerate}
By combining the three results, any finite $H$-metric $\M=(V,d)$ can be efficiently embedded into a tree metric $(T,w)$ 
with height $O(\log n)$ and an expected distortion of $O(\log n)$. The following lemma formally describes this idea.

\begin{lemma}\label{lemma:treeEmbed}
For any fixed $k$, every $H$-metric $(V,d_H)$ can be probabilistically embedded into a tree metric $(T,w)$ of height $O(\log n)$ such that for all $(v_1, v_2,\mydots,v_k )\in V^k$ holds: 
\begin{itemize}
    \item $d_H(v_1,v_2,\mydots,v_k) \le \sum\limits_{i=1}^{k-1}\sum\limits_{j=i+1}^{k} w(v_i,v_j),$ and 
    \item $\sum\limits_{i=1}^{k-1}\sum\limits_{j=i+1}^{k} \mathbb{E}\left[w(v_i,v_j)\right] \le O(\log n)\cdot d_H(v_1,v_2,\mydots,v_k).$
\end{itemize}
\end{lemma}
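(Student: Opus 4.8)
The plan is to chain together the three ingredients already named in the list preceding the lemma. First I would invoke Theorem \ref{thm:reduction}: the $H$-metric $d_H$ is sandwiched between $c_l$ and $c_u$ times the pairwise sum $\sum_{i<j} d(v_i,v_j)$, where $d$ is the genuine metric on $V^2$ from Definition \ref{definition:metric} and $c_l,c_u$ depend only on $k$. This reduces everything to embedding the ordinary metric $(V,d)$ into trees while controlling the pairwise sums. Second, I would apply the FRT result \cite{Fakcharoenphol2004485}: $(V,d)$ embeds probabilistically into a tree metric $(T',w')$ that dominates $d$ (i.e.\ $w'(u,v)\ge d(u,v)$ for all $u,v$) and satisfies $\mathbb{E}[w'(u,v)] \le O(\log n)\cdot d(u,v)$. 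Third, I would apply the height-reduction result of \cite{Bansal:2015} to turn $(T',w')$ into a tree metric $(T,w)$ of height $O(\log n)$, again with domination $w(u,v)\ge w'(u,v)$ and expected blow-up $\mathbb{E}[w(u,v)]\le O(1)\cdot w'(u,v)$ (or $O(\log n)$, whichever that reference gives — it is absorbed into the $O(\log n)$ anyway).

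Combining the domination directions gives, for every pair, $w(v_i,v_j)\ge d(v_i,v_j)$, hence $\sum_{i<j} w(v_i,v_j) \ge \sum_{i<j} d(v_i,v_j) \ge c_l^{-1}\, d_H(v_1,\dots,v_k)$ is the wrong direction, so here I need to be slightly careful about constants: the lemma states the clean inequality $d_H(v_1,\dots,v_k)\le \sum_{i<j} w(v_i,v_j)$ with no leading constant, so I would either rescale $d$ (replace $d$ by $c_u\cdot d$, or equivalently absorb $c_u$ into the metric before embedding) so that $d_H \le \sum_{i<j} d(v_i,v_j)$ holds, and then domination $w\ge d$ propagates the first bullet. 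For the second bullet, linearity of expectation over the $\binom{k}{2}$ pairs gives $\sum_{i<j}\mathbb{E}[w(v_i,v_j)] \le O(\log n)\sum_{i<j} d(v_i,v_j) \le O(\log n)\cdot c_l^{-1}\cdot c_u \cdot d_H(v_1,\dots,v_k)$, and since $k$ is fixed the factor $c_l^{-1}c_u$ is an absolute constant, so it is swallowed by the $O(\log n)$.

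The one genuinely delicate point — and the step I expect to be the main obstacle — is reconciling the two-sided, constant-laden bound of Theorem \ref{thm:reduction} with the asymmetric, constant-free form demanded by the lemma. The upper bullet (a deterministic domination statement) forces me to make sure the metric fed into FRT is scaled so that its pairwise sum already dominates $d_H$ pointwise, \emph{before} any randomness enters; this is fine because Theorem \ref{thm:reduction} gives the pointwise upper bound $d_H\le c_u\sum_{i<j}d(v_i,v_j)$ with $c_u$ independent of the instance. The lower bullet is where the $O(\log n)$ truly comes from, via FRT, with the reduction constants and the height-reduction constants merely contributing an extra $O(1)$ factor. I would also remark that the height-reduction step preserves the domination direction (trees only get "longer" edges), which is what lets the first bullet survive composition; this monotonicity is exactly the property \cite{Bansal:2015} is designed to provide, so no extra work is needed there.
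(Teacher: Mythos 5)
Your proposal matches the paper's own (sketched) argument exactly: the lemma is obtained by composing Theorem \ref{thm:reduction} with the FRT embedding of \cite{Fakcharoenphol2004485} and the height reduction of \cite{Bansal:2015}, using domination for the first bullet and linearity of expectation plus the $k$-dependent constants for the second. Your one worry about rescaling is in fact moot, since the paper's proof of Theorem \ref{thm:reduction} gives the upper bound with constant $c_u=1$ (only $c_l=1/k^3$ is nontrivial), so the constant-free first bullet follows directly from domination.
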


\subsection{An Algorithm on Tree Metrics}\label{sec:algo}

We design an online algorithm $\A$ on any tree metric $(T,w)$ where $T$ 
is a tree rooted at node $r$, $w$ a weight function on edges of $T$ and $\Leaves$ the set of all leaves in $T$. 
We assume that the tree used in this subsection is obtained by applying Lemma \ref{lemma:treeEmbed}. 
For any vertex $v$ in the tree, let $T_v$ denote the subtree of $T$ rooted at $v$, $\Leaves_v$ denote the leaves of $T_v$, 
$e_v$ denote the edge between $v$ and its parent, and $w_v$ denote the weight of $e_v$. Note
that the weight of $e_v$ corresponds to the distance between the two ends of this edge. For simplicity, we can also define an edge $e_r$ for the root which has the weight $w_r \coloneqq \infty$.
Let $\Ancestors(v)$ and $\Descendant(v)$ 
be the sets of ancestors and descendants of $v$ respectively. 

We assume that the requests arrive only at the leaves $\Leaves$ of the tree. This assumption is reasonable, since a request on a non-leaf node can be instead considered as a request on a leaf node that has distance $0$ to the non-leaf node. 
Given some vertex $v$ in $T$, we denote the set of open requests in $\Leaves_v$ at time $t$ by $\Active_{v}(t)$. 
Open requests under the offline algorithm are denoted by $\adv{\Active}_v(t)$.
Note that we can assume that each leaf in $\Leaves$ hosts at most $k-1$ open requests, i.e., $|\Active_v(t)|\le k-1$ for all $v\in \Leaves$. Consider therefore some fixed point of time $t$. 
If request $\rho$ arrives at time $t$ at leaf node $\Location(\rho)$, and $\Location(\rho)$ already
hosts $k-1$ open requests (under $\A$), then an algorithm can match 
them immediately.

There exist different notions under which $k$ requests are matched with respect to the nodes inside the tree: consider a vertex $v$ in $T$ and suppose $k$ requests $\rho_1, \mydots, \rho_k$ are matched. 
If $v$ is the least common ancestor~(lca) of $\Location(\rho_i), i=1,\mydots,k$ 
then we say that the requests are \emph{matched across} $v$. 
If $v \neq \LCA(\Location(\rho_1), \mydots, \Location(\rho_k))$, but $v$ is on the shortest path determined by $\Location(\rho_i),\Location(\rho_j)$ for some $i,j \in [k]$, 
then we say that the requests are \emph{matched on top of} $v$. 
If $v$ is an ancestor of $\LCA(\Location(\rho_1), \mydots, \Location(\rho_k))$, we say that the requests are \emph{matched under} $v$. 

The idea of the algorithm is to match requests that are close to each other as soon as possible, by letting nodes on the lower level of the tree have an advantage when matching requests. On the other hand, requests that have waited too long to be matched in their neighborhood, should be able to get matched with nodes at a farther distance in order to also minimize the total waiting time.

\begin{algorithm}[tbh]
\caption{A Deterministic Algorithm for $k$-MPMD on Tree Metrics
}\label{algo}
\begin{flushleft}
\textbf{Initialization:}
Let every timer be active and the initial value be $0$.

\textbf{At every moment:}
\begin{itemize}
    \item While there are $k$ unmatched requests at the same point, match those requests immediately. 
    \item If there exist $k$ requests such that the edges on the shortest path connecting them are all inactive, then match these requests, and let the timers on these edges become active. 
    \item For each vertex $u$, if the corresponding timer is active and $|C_u| \nequiv 0 \mod k$,  then increase $\tau_u$ at the unit rate. Else, pause the timer. 
    \item For each vertex $u$, as soon as the value of the corresponding timer $\tau_u$ becomes equal to some integral multiple of $w_u$, then we let the timer be inactive and pause the timer. 
\end{itemize}
\end{flushleft}
\end{algorithm}

Algorithm \ref{algo} presents a solution for $k$-MPMD on tree metrics. It thereby associates every edge with a timer $\Timer_{v}\in \Reals_{\ge 0}$ which is initially set to $0$. 
For each node $v$, this timer increases at a unit rate if the set of open requests $\Active_{v}(t) \not \equiv 0 \mod{k}$, until it reaches an integral multiple of $w_v$. 
As soon as $\Timer_{v}$ reaches the next integral multiple of $w_v$, we pause the corresponding timer $\Timer_{v}$ and call it \emph{inactive}. An inactive timer is not allowed to continue running until its value is \textit{consumed} and it becomes active again. The collected time between two inactive periods can be consumed in a matching on top of $v$, while the actual value of the timer remains unchanged. 
For any $u\in T-\Leaves$, 
we match requests $\rho_1,\mydots,\rho_k$ across $u$ if and only if for all $i=1,\mydots, k$, every timer on the path connecting $u$ and $\Location(\rho_i)$ is inactive. 
After the matching, these inactive timers are consumed and become active again. Figure~\ref{fig:algorithm_example} depicts how active timers change when new requests arrive in the algorithm.

The following theorem states the correctness of the analysis and the competitive ratio of the algorithm $\A$, the corresponding proof will be presented in Appendix~\ref{sec:correctness_analysis}.

\begin{theorem}\label{thm:correctnessTheorem}
For any given $k$ and any request sequence $R$, algorithm $\A$ achieves a competitive ratio $O(\log n)$, i.e.,
$\Expect[\Cost_{\A}(R)] \le O(\log n)\cdot \Cost_{\OPT}(R). $
\end{theorem}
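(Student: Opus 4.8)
The plan is to prove Theorem~\ref{thm:correctnessTheorem} in two stages: first a reduction of the $H$-metric instance to a tree-metric instance, then a competitive analysis of Algorithm~\ref{algo} on the tree. For the reduction I would invoke Lemma~\ref{lemma:treeEmbed}: having embedded $\M$ into a random tree metric $(T,w)$ of height $O(\log n)$, the first bullet gives, deterministically, that any $k$-set matched by the algorithm has $H$-cost at most its summed pairwise tree cost $\sum_{i<j} w(v_i,v_j)$, while the second bullet gives that the optimal offline matching, re-evaluated on $(T,w)$, has expected summed pairwise tree cost at most $O(\log n)$ times its $H$-cost; the time cost is metric-independent. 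Chaining these, $\Cost_{\A}^{H}(R)\le \Cost_{\A}^{\mathrm{tree}}(R)$ always and $\Expect[\Cost_{\OPT}^{\mathrm{tree}}(R)]\le O(\log n)\cdot\Cost_{\OPT}^{H}(R)$, so it suffices to show that on \emph{any} tree metric Algorithm~\ref{algo} is $O(1)$-competitive for fixed $k$ against the optimal offline solution when the space cost of a matched $k$-set is measured as $\sum_{i<j} w(v_i,v_j)$; the $O(\log n)$ factor of the theorem then comes entirely from the embedding distortion.

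For the tree analysis I would split $\Cost_{\A}=\tCost_{\A}+\sCost_{\A}$ and bound each against $\Cost_{\OPT}$ on the tree, driven by the timer bookkeeping of Algorithm~\ref{algo}. The central quantity is, for each non-root vertex $v$, the total time $\mathrm{gen}(v)$ its timer $\Timer_v$ spends running; since $\Timer_v$ advances only while $|\Active_v(t)|\nequiv 0 \pmod{k}$ and is paused whenever it reaches the next multiple of $w_v$, $\mathrm{gen}(v)$ equals $w_v$ times the number of completed cycles at $v$, which is the number of $w_v$-credits produced there. First I would bound $\sCost_{\A}$ by credits: when $k$ requests with least common ancestor $u$ are matched, every edge $e_v$ actually used on the $k$ root-to-leaf paths inside $T_u$ is inactive and is consumed, and if $m$ of the $k$ leaves lie below $e_v$ then $e_v$ contributes $m(k-m)\,w_v\le \frac{k^2}{4}\,w_v$ to $\sum_{i<j}w(v_i,v_j)$; summing over matches and edges yields $\sCost_{\A}\le \frac{k^2}{4}\sum_v \mathrm{gen}(v)$. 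Next I would bound $\tCost_{\A}$ and $\sum_v\mathrm{gen}(v)$ against $\Cost_{\OPT}$ via a local claim: if $\Timer_v$ completes a cycle — i.e. runs for total duration $w_v$ while $T_v$ carries a mod-$k$ remainder under $\A$ — then $\OPT$ must incur $\Omega(w_v)$ of cost attributable to $e_v$, either by keeping an unmatched remainder straddling $e_v$ for comparable time (paying in delay) or by matching some request across an ancestor of $v$ (paying at least $w_v$ in space along $e_v$). A counting argument over the tree then shows these attributions overlap only $O(1)$ times per unit of $\OPT$-cost, giving $\sum_v\mathrm{gen}(v)=O(\Cost_{\OPT})$ and $\tCost_{\A}=O(\Cost_{\OPT})$, hence $\Cost_{\A}=O(k^2)\cdot\Cost_{\OPT}$ on the tree. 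Composing with the embedding gives $\Expect[\Cost_{\A}(R)]\le O(\log n)\cdot\Cost_{\OPT}(R)$ for fixed $k$.

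The hard part is exactly this local claim and its overlap bound — the mod-$k$ generalization of the ``odd node'' argument of \citet{AzarCK2017}. For $k=2$ an imbalanced subtree has a single unmatched request straddling its top edge and resolving it is one pairing; for general $k$ an imbalanced subtree may carry anywhere from $1$ to $k-1$ leftover requests, the algorithm's distinction between matching \emph{on top of} and \emph{across} a vertex makes the set of edges consumed by a single match harder to localize, and resolving an imbalance can require reshuffling requests among several subtrees rather than a single pairing. Making the attribution to $\OPT$ precise — in particular showing the charges overlap only $O(1)$ times per unit of $\OPT$-cost even though the tree has height $\Theta(\log n)$, and correctly tracking which inactive edges a given match consumes — is where the real work lies; the remaining steps are routine adaptations of the potential/charging scheme of \cite{EmekKW2016,AzarCK2017}.
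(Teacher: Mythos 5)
Your reduction step and your space-cost accounting are sound and match the paper's (Lemma~\ref{lemma:treeEmbed} plus a timer-credit bound as in Lemma~\ref{lm:space}), but the heart of your plan --- that Algorithm~\ref{algo} is $O(1)$-competitive (for fixed $k$) on an \emph{arbitrary} tree, with the entire $O(\log n)$ coming from embedding distortion --- is exactly the part you leave unproven, and it is not what the paper establishes, nor is it likely reachable by the charging you sketch. Your ``local claim'' charges each completed cycle of $\Timer_v$ either to $\OPT$ space along $e_v$ or to $\OPT$ delay of a remainder straddling $e_v$, and you then need these charges to overlap only $O(1)$ times per unit of $\OPT$-cost. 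For the delay charges this fails in the natural scenario: a single request left open by $\OPT$ at a leaf can simultaneously sustain running timers on \emph{every} edge of its root path, so its waiting time is charged up to height-many times. The paper does not fight this: it accepts the height factor (Lemma~\ref{lm:adv_time} bounds $\sum_v \adv{\finalTimer}_v$ only by $O(\log n)\cdot\tCost_{\OPT}(R)$, via the fact that at most $|\Ancestors(\Location(\rho))|+1$ adversary-timers count one open request), and makes this harmless by two devices you do not use: the height reduction of \cite{Bansal:2015} to force height $O(\log n)$, and the observation that $\OPT$'s time cost is metric-independent, so the height factor hits only the time term while the distortion factor hits only the space term (Lemma~\ref{lm:adv_space}); the two $O(\log n)$'s therefore add rather than multiply. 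If your tree bound really were $O(1)$ independent of height, the height-reduction step would be superfluous --- a strong claim that would need a genuinely new argument, not the charging you describe.

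The second gap is the per-vertex comparison that you compress into ``$\OPT$ must incur $\Omega(w_v)$ attributable to $e_v$.'' The paper needs real machinery here: the supporting request sets $M_v(t)$, the classification of phases $\phi_{v,i}$ by the residue $(|\adv{\Active}_v|-|\Active_v|) \bmod k$, Lemma~\ref{lem:0-phase} (in a $0$-phase the adversary timer runs whenever the algorithm's does, unless $\OPT$ matches on top of $v$ and pays $\ge w_v$ in space), Observation~\ref{ob:ontop}, and Observation~\ref{ob:1-phase}, which shows $(|\Active_v| \bmod k)$ strictly decreases across consecutive \emph{harmful} phases, so at most $k$ consecutive phases can have no chargeable $\OPT$-cost; this yields $\finalTimer_v \le O(k(\adv{\finalTimer}_v+\adv{\sigma}_v))$ (Lemma~\ref{lm:compare}). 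You correctly identify this mod-$k$ bookkeeping as ``where the real work lies,'' but you do not supply it, and without it (and without the corrected treatment of the delay overlap above) the proposal does not yet constitute a proof of the theorem.
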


\section{Discussion}
In this paper, we focused on deriving a generalized metric defined on $k$ points which is suitable for $k$-way matching problems. Using Yao's minimax principle, we derived various counterexamples which have put restrictions on the desired generalized metric. Our results suggest that the $H$-metric is possibly the only metric that can be used to solve the $k$-MPMD problem. We further showed that the $H$-metric can be reduced to metrics where distances of $k$ points are defined through pairwise distances of the points in the considered set. This reduction gave us the possibility to embed our metric into a tree and modify known algorithms for the $2$-MPMD problem in order to solve the $k$-MPMD problem without extra cost.

While our presented results suggest that there is no fundamentally different generalized metric that can be applied to solve the $k$-MPMD problem, we believe that the presented metric can be of interest in many domains. On one hand, we only focused on the generalization of the $2$-MPMD problem which is only one version of a broader class of online matching problems, some of which were listed in Section~\ref{sec:relatedwork}. It would be interesting to investigate how other online matching problems can be generalized to $k$-way matching and whether the $H$-metric is the only suitable metric for such problems (if metrics are needed) as well, e.g.,~\cite{gupta2012online,nayyar2017input,raghvendra2016robust}. On the other hand, not only online matching problems require an underlying metric space, and it might be interesting to generalize other problems than online matching, e.g., offline matching~\cite{agarwal2014approximation}, online $k$-server~\cite{lee2018fusible}. 

While we presented our work in the juicy context of multi-player games, various other applications that need to group data that arrives online may benefit from our $k$-MPMD algorithm. For such applications, we would like to improve our algorithm and the competitive analysis by minimizing the effect of the parameter $k$. 

Note that we consider (generalized) metric spaces, i.e., spaces that satisfy a certain set of axioms. But, there is also a possibility to relax the coefficients in the (generalized) triangle inequality, which would not change the results much, or to use a completely different set of axioms, but that is for future work and this direction even has not been considered for $2$-MPMD, i.e., we wonder whether we can relax or tighten the condition of the standard triangle inequality in $2$-MPMD and design different competitive algorithms. 

\section*{Acknowledgments}
We would like to thank Tim Bohren and Kyriakis Panagiotis for their valuable input on the $H$-metric. We would also like to thank anonymous reviewers for their helpful comments and feedback on previous versions of this paper.

\bibliographystyle{plainnat}
\bibliography{literature}

\clearpage
\appendix

\setcounter{theorem}{3} 
\section{Impossibility Proofs}\label{app:impossibility_proofs}
\begin{theorem}
There exists no randomized algorithm for the $k$-MPMD ($k\ge 5$) problem on $K$-metrics against an oblivious adversary that has a competitive ratio which is bounded by a function of the number of points $n$. 
\end{theorem}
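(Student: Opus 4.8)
The plan is to keep the baseline construction of Section~\ref{sec:Impossibility2} --- the $r$-phase input, the interval $\Timer$, the Bernoulli terminations $C_i$, and Yao's minimax principle --- and merely to redefine the metric and the patterns $P_1,P_2$. For the metric I would take $k$ points $v_1,\mydots,v_k$ (the construction extends to any $n\ge k$ verbatim) and make the distance of a $k$-multiset $S$ depend only on $|elem(S)|$: put $d(S)=0$ if $|elem(S)|=1$, $d(S)=1$ if $|elem(S)|\in\{2,k\}$, and $d(S)=\SmallDist$ if $|elem(S)|\in\{3,\mydots,k-1\}$, for a tiny parameter $\SmallDist>0$ fixed later. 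Axioms \Per\ and \OD\ are immediate, and \SK\ holds because on $k$ distinct points both of its sides equal $1$. The whole point of this choice is that it lives exactly in the gap between the $K$-metric axioms and the $H$-metric axioms: \SK\ only separates supports of size $2$ from supports of size $k$, and \ndelta\ only bounds $d$ from above, so nothing prevents the ``intermediate''-support multisets from being essentially free --- which is precisely what \SH\ and \hdelta\ would rule out.

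I would then take $P_1=(v_1,\mydots,v_1)$, so $d(P_1)=0$, and $P_2=(v_1,v_2,\mydots,v_k)$, so $d(P_2)=1$. For the analogue of Lemma~\ref{lm:opt_cost}: $\OPT$ matches every non-terminal copy of $P_1$ instantly at no cost; in the terminating phase it holds the $k$ requests of the last $P_1$ for time $\Timer$ and regroups the $k+1$ copies of $v_1$ together with the $k-1$ singletons $v_2,\mydots,v_k$ of $P_1\cup P_2$ into two $k$-multisets, one containing $p$ copies of $v_1$ with $3\le p\le k-2$, so that both multisets have support size in $\{3,\mydots,k-1\}$ and hence distance $\SmallDist$. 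Such a $p$ exists exactly when $k\ge 5$, and this is where the hypothesis enters. Thus $\Cost_{\OPT}=k\Timer+2\SmallDist$; with $\Timer=1/r$ and $\SmallDist=1/r^{2}$ this is $O(k/r)\to 0$.

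For the analogue of Lemma~\ref{lm:alg_bound}, a deterministic online algorithm is again a vector $b\in\{0,1\}^{r}$ recording the phases in which it waits for time $\Timer$. If it does not wait in the terminating phase it has already committed each $P_1$ at zero cost and is left holding exactly the $k$ distinct requests of $P_2$, which it must match as a single $k$-multiset of distance $d(P_2)=1$; if instead it hedges by waiting, its accumulated delay equals $\tfrac{k\Timer}{2}(B+1)$ on the favorable event, with $B=\sum_i b_i$, exactly as in Lemma~\ref{lm:alg_bound} with $3$ replaced by $k$. The same computation then gives expected cost at least $1-\tfrac{B}{r}+\tfrac{kB^{2}}{2r^{2}}\ge 1-\tfrac{1}{2k}$, a constant bounded away from $0$. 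Yao's minimax principle, together with the repetition trick of the footnote after Theorem~\ref{thm:impossible_2metric} to absorb additive terms, then shows the competitive ratio is unbounded for every $n\ge k$.

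The step I expect to be the real work is verifying that $d$ is a genuine $K$-metric, i.e.\ checking \ndelta\ for every left-hand $k$-multiset, every choice of the split-off element, and every routing point $a$. The key structural observation is that whenever the left-hand multiset has support of size $2$ or $k$ (value $1$) and the first term on the right collapses to $\SmallDist$ (its support has size in $\{3,\mydots,k-1\}$), the point $a$ is necessarily different from the split-off element, so the second term $d(a,\mydots,a,v)$ equals $1$ and the inequality holds with room to spare; this is exactly the place where \hdelta\ would be stronger, and it is also the structural reason the construction, and with it the whole argument, breaks down for $k\le 4$. I would organize the case analysis by the value of $d$ on the left-hand side and, inside each case, by whether $a$ coincides with the split-off element, with an element already present, or with a fresh point.
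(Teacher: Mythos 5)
Your overall plan is the same as the paper's: keep the phased Bernoulli input and Yao's principle from Section~\ref{sec:Impossibility2}, and only redesign the metric and the patterns so that $P_1$ is free, $P_2$ costs $1$, and their union can be regrouped into two $\varepsilon$-cheap sets (the paper does this with the three-point space $\{a,b,c\}$ whose only cheap multisets are the two balanced $ab$- and $bc$-multisets). Your $\OPT$/$\ALG$ accounting, the role of the threshold $k\ge 5$, and the sketched verification of \ndelta\ are all fine.

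The gap is in your verification of \SK. As the axiom is stated in the paper, it requires $d(v_1,\mydots,v_1,v_2)\le d(v_1,v_2,\mydots,v_k)$ whenever $v_2,\mydots,v_k$ are pairwise distinct; it does \emph{not} require $v_1$ to be distinct from them. Take $v_1=v_3$ and $v_2,\mydots,v_k$ pairwise distinct in your metric: the right-hand multiset then has support size $k-1\in\{3,\mydots,k-1\}$ and hence value $\varepsilon$, while the left-hand side has support size two and value $1$, so \SK\ fails for every $k\ge 5$. Your check ("both sides equal $1$ on $k$ distinct points") only covers the case where all $k$ right-hand elements are distinct, so your example is a $K$-metric only under that more permissive reading of the axiom. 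The paper's construction sidesteps the issue entirely: its space has just three points, so for $k\ge 5$ no $k-1$ pairwise distinct elements exist and \SK\ is vacuous. Your construction is repairable with a local change: let $d(S)=\varepsilon$ only when $|elem(S)|\in\{3,\mydots,k-2\}$ and $d(S)=1$ when $|elem(S)|\in\{2,k-1,k\}$. Then every admissible right-hand side of \SK\ has value $1$, so the axiom holds under either reading; the \ndelta\ argument is unchanged (values still lie in $\{0,\varepsilon,1\}$, and in the dangerous case the split-off point still forces the second term to be $1$ or the first term to coincide with the left-hand side); and the terminal regrouping of $P_1\cup P_2$ into two sets containing $p$ and $k+1-p$ copies of $v_1$ now needs $3\le p\le k-2$, which again exists exactly when $k\ge 5$. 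With that amendment your argument is a correct, mildly different instantiation of the same proof scheme, using $n\ge k$ points and support-size-based distances instead of the paper's three-point example.
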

\begin{proof}
The proof once again starts by an example for the $K$-metric:
Given a set $V$ with $3$ elements $a,b,$ and $c$, we define distances for any subset of elements as follows:
\begin{itemize}
    \item $d(v,\mydots,v) = 0\quad \forall\ v\in V$
    \item $d(\underbrace{a,\mydots,a}_{\lfloor k/2\rfloor},\underbrace{b,\mydots,b}_{\lfloor k/2\rfloor}, \underbrace{b}_{k-2\lfloor k/2\rfloor}) = \varepsilon$
    \item $d(\underbrace{b,\mydots,b}_{\lfloor k/2\rfloor},\underbrace{c,\mydots,c}_{\lfloor k/2\rfloor}, \underbrace{b}_{k-2\lfloor k/2\rfloor}) = \varepsilon$  
    \item $d(v_1,v_2,\mydots,v_k) = 1$ otherwise
\end{itemize}  
In order to show that this is a $K$-metric, first note that axioms \Per,\ \OD\ and \SK\ are satisfied trivially by the definition. The only interesting axiom is the triangle inequality \ndelta. Assume that not all elements are the same on the left-hand side, otherwise the inequality is satisfied trivially. Then, independent of the set on the left-hand side, the right-hand side will always have distance of at least $1$. This is because only three possible sets have distance less than $1$ and, for $k\geq4$, no two of them can represent the sets on the right-hand side of the inequality simultaneously. 

We can now define the request patterns as $P_1 = \{\underbrace{b,\mydots,b}_{k}\}$ and \\$P_2 := \{\underbrace{a,\mydots,a}_{\lfloor k/2\rfloor},\underbrace{c,\mydots,c}_{\lfloor k/2\rfloor}, \underbrace{b}_{k-2\lfloor k/2\rfloor}\}$. Matching request in pattern $P_1$ always has cost $0$. Matching request in $P_2$ has always cost $1$. If requests from both patterns $P_1$ and $P_2$ are matched at the same time, the values in the sets can be rearranged to $\{\underbrace{a,\mydots,a}_{\lfloor k/2\rfloor},\underbrace{b,\mydots,b}_{\lfloor k/2\rfloor},\underbrace{b}_{k-2\lfloor k/2\rfloor}\}$ and $\{\underbrace{b,\mydots,b}_{\lfloor k/2\rfloor},\underbrace{c,\mydots,c}_{\lfloor k/2\rfloor},\underbrace{b}_{k-2\lfloor k/2\rfloor}\}$, such that the matching cost is $2\varepsilon$. With the arrival strategy of $P_1$ and $P_2$ chosen as in Section \ref{sec:Impossibility2}, this will result in an unbounded competitive ratio for any randomized online algorithm.
\end{proof}

\begin{theorem}
There exists no randomized algorithm for the $k$-MPMD ($k\ge 5$) problem on the $K$-metric enriched with the axioms \SHstar\ and \hdelta\ against an oblivious adversary that has a competitive ratio which is bounded by a function of the number of points $n$.
\end{theorem}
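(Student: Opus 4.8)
The plan is to run the same impossibility scheme as in Section~\ref{sec:Impossibility2}: exhibit a finite metric lying in the enriched class, together with two request patterns $P_1$ and $P_2$, feed them to the online algorithm through the phased distribution of that section, and finish with Yao's minimax principle. The only thing that changes relative to the earlier impossibility proofs is the metric and the two patterns; the arrival process, the encoding of deterministic algorithms by a wait–vector, and the final Yao step are reused verbatim.

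First I would build the metric, extending the counterexample of Theorem~\ref{thm:impossible_nmetric}. That example violates \SHstar\ precisely because its expensive pattern has an element-set that is a proper subset of an element-set realised by a cheap (all-distinct) pattern. The repair is to replicate elements so that the expensive pattern $P_2$ shares its element-set with cheap multisets --- so \SHstar\ can no longer be violated through $P_2$ --- while keeping $d_H(P_2)$ equal to a constant, $d_H(P_1)$ tiny, and laying out the point set so that the multiset $P_1\cup P_2$ can be re-partitioned into two multisets $S,S'$ of tiny cost. Concretely I would set $d_H=0$ on constant multisets, $d_H=\varepsilon$ on a short explicit list of cheap multisets (those realising $P_1$, $S$, $S'$, together with exactly the multisets that \hdelta\ and \SHstar\ force to be cheap alongside them), and $d_H=1$ on every remaining non-constant multiset, arranging that each cheap multiset has at most two distinct elements.

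The bulk of the proof is checking the five axioms for this $d_H$. Axiom \Per\ holds since $d_H$ depends only on the underlying multiset; \OD\ since every non-constant value is positive; \SK\ holds because over such a small point set every instance of its left-hand side is either a constant multiset or already appears on the right-hand side; and \SHstar\ holds because the only multisets whose element-set is a \emph{strict} subset of a two-element element-set are constant, hence have value $0$. The delicate axiom is \hdelta: for every multiset $M$ with $d_H(M)=1$, every split $M=M_1\uplus M_2$, and every hub $a\in V$, one must verify $d_H(M_1+a^{|M_2|})+d_H(a^{|M_1|}+M_2)\ge 1$, i.e.\ that no routing through a single hub turns an expensive multiset into two cheap pieces. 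This is the point at which the construction is forced: the re-partition $P_1\cup P_2=S\cup S'$ used by $\OPT$ must \emph{not} coincide with any single-hub routing of $P_2$ (otherwise \hdelta\ would collapse $d_H(P_2)$ to $d_H(S)+d_H(S')=2\varepsilon$), and simultaneously no other single-hub routing of $P_2$, nor of any other value-$1$ multiset, may produce two cheap pieces. I expect this case analysis --- ruling out all cheap ``detours'' at once --- to be the main obstacle, and it is exactly what demonstrates that \SHstar\ is too weak and forces the stronger \SH\ used for the positive results.

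Granting the metric, the instance analysis mirrors Lemmas~\ref{lm:opt_cost} and~\ref{lm:alg_bound}. The offline optimum serves every $P_1$ immediately at cost $\varepsilon$, and in the terminating phase waits for time $\tau$ and re-partitions $P_1\cup P_2$ into $S$ and $S'$, so on an instance with $m$ phases its cost is at most $m\varepsilon+2\varepsilon+k\tau$; choosing $\varepsilon$ and $\tau$ polynomially small in the number of phases $r$ makes this $o(1)$. A deterministic online algorithm is again described by a wait-vector $b\in\{0,1\}^r$: in a phase where it does not wait, if $P_2$ subsequently arrives it has already committed $P_1$ and must match the $k$ requests of $P_2$ as one group, paying $d_H(P_2)=1$; since the terminating phase is uniform on $[r]$, the computation of Lemma~\ref{lm:alg_bound} yields expected cost $\Omega(1)$. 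Dividing, the competitive ratio against this distribution is $\Omega(r)$, which is not bounded by any function of $n$ (the point set being of constant size), so by Yao's minimax principle no randomized algorithm against an oblivious adversary is competitive; as in Section~\ref{sec:Impossibility2}, repeating the distribution makes the bound robust to an additive term.
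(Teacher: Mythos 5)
Your overall scheme is the right one and is the same as the paper's: take a small explicit metric inside the enriched class, two patterns $P_1$ (all requests at one point, cheap) and $P_2$ (cost $1$) whose union can be re-partitioned into two multisets of cost $\varepsilon$ each, feed them through the phased distribution of Section~\ref{sec:Impossibility2}, and close with Yao's principle. The gap is that you never actually produce the metric: you list the constraints it should satisfy and then explicitly defer the verification of \hdelta\ (``I expect this case analysis \ldots{} to be the main obstacle''). That verification is not a deferrable detail --- it is the entire content of the theorem. The statement asserts precisely that \SHstar\ together with \hdelta\ (and \SK) still admits a metric with such a bad re-partition; a priori these axioms might exclude every such example, so a proof must exhibit $d_H$ explicitly and check, for every value-$1$ multiset, every split and every hub, that the two right-hand terms of \hdelta\ cannot both be cheap. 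Your text only says the cheap family should consist of ``$P_1$, $S$, $S'$ and whatever the axioms force,'' so there is nothing concrete to verify against, and the argument remains a plan rather than a proof.

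For comparison, the paper's construction lives on just two points $V=\{a,b\}$: $d$ is $0$ on constant multisets, $\varepsilon$ on two-element multisets $v_1^i v_2^{k-i}$ whose multiplicity $i$ lies in a restricted band ($1<i<\lfloor (k-1)/2\rfloor$), and $1$ otherwise; the patterns are $P_1=\{b,\mydots,b\}$ and $P_2=\{a^{2i},b^{k-2i}\}$, whose union re-partitions into two copies of $\{a^i,b^{k-i}\}$ of cost $\varepsilon$ each, while \Per, \OD, \SK\ and \SHstar\ hold by inspection and \hdelta\ is argued by claiming that at most one right-hand term of any instance can have value $\varepsilon$. Your instinct that this last step is the delicate one is well founded: whether at most one term can be cheap depends on an arithmetic property of the chosen band (no expensive multiset may have its minority multiplicity, or after hubbing through the minority element its majority multiplicity, split into two values that both land in a cheap configuration), and this is exactly the constraint your proposal identifies but never pins down or discharges. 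Until you fix a concrete cheap family and carry out that case analysis, the central existence claim is unproven.
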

\begin{proof}
 Given a set $V$ with $2$ elements $a$ and $b$, we define distances for any subset of $k$ elements as
\begin{itemize}
    \item $d(v,\mydots,v) = 0\quad \forall v\in V$ 
    \item $d(\underbrace{v_1,\mydots,v_1}_{i},\underbrace{v_2,\mydots,v_2}_{k-i}) = \varepsilon$ for $1<i<\lfloor(k-1)/2\rfloor$, where $v_1\neq v_2$
    \item $d(v_1,v_2,\mydots,v_k) = 1$ otherwise
\end{itemize}
This example satisfies the given metric properties: axioms \Per,\ \OD\ and \SHstar\ are satisfied trivially. The triangle inequality \hdelta\ is satisfied, because only one of the terms on the right-hand side (unless one of the terms is equal to the left-hand side) can be of distance $\varepsilon$. Further note that this example is not a contradiction to the $G$-metrics for $k=3$, since on three points, there would not exist sets of distance $\varepsilon$.

We can define the request patterns as $P_1 \coloneqq \{b,\mydots,b\}$ and $P_2\coloneqq \{\underbrace{a,\mydots,a}_{2\cdot i},\underbrace{b,\mydots,b}_{k-2\cdot i}\}$. While matching requests in $P_1$ has cost $0$ and those in $P_2$ has cost $1$, combined, requests in both patterns can be reordered to two sets $\{\underbrace{a,\mydots,a}_{i},\underbrace{b,\mydots,b}_{k-i}\}$, that have cost $2\varepsilon$. Rest of the analysis is as in Section \ref{sec:Impossibility2}.
\end{proof}

\section{Examples of the $H$-metric}\label{sec:more_examples}
\begin{example}\label{simple_example}
Let $V = \{a,b\}$. The $H$-metric $d$ is defined on $V^4$ as follows 
\begin{eqnarray*}
d(a,a,a,a) = d(b,b,b,b) =0 \\
d(a,a,a,b) = d(a,a,b,b) =1 \\ 
d(a,b,b,b) = 2
\end{eqnarray*}
and let all other $d$ be extended to $V^4$ using the property of symmetry (\Per). Then it can be easily verified that $(V,d)$ is an $H$-metric space. 
\end{example}

The above example shows that $H$-metric is powerful as it allows $d(S_i) \neq d(S_j)$ even if $elem(S_i) = elem(S_j)$. The next example is more general and may capture a setting of many applications. 

\begin{example}
Let $V$ be a finite\footnote{The finiteness is not necessary to define the metric space, but it is needed for the online matching problem.} subset of $\mathbb{R}^\ell$. For any $k$-tuple $(v_1, v_2, \mydots, v_k) \in V^k$, we define $d(v_1, v_2, \mydots, v_k) := \max\{||v_i - v_j ||\mid 1\le i,j\le k\}$. It can be verified that $(V,d)$ is an $H$-metric space. 
\end{example}

\section{Proof of Theorem~\ref{thm:reduction}}\label{sec:proof of the reduction thm}
In order to prove Theorem~\ref{thm:reduction}, we first need to define the desired metric $d$ on pairs of points with respect to the $H$-metric:
\setcounter{definition}{0} 
\begin{definition}
For $v_1, v_2 \in V$ we define a metric $d:V^2\rightarrow[0,\infty)$ as
$$
d(v_1,v_2) := d_H(v_1,v_2,\mydots,v_2) + d_H(v_2,v_1,\mydots,v_1)
$$
\end{definition}
It can be easily verified that $d$ defined as above is indeed a metric defined on $k=2$ points. Using this definition, we can prove the previous theorem:
\renewcommand*{\proofname}{Proof of Theorem \ref{thm:reduction}}
\begin{proof}
We start by showing the right inequality by repeatedly applying axiom \hdelta: 
\allowdisplaybreaks
\begin{alignat*}{2}
 d_H(v_1,\mydots,v_k) & \overset{\Delta_H}{\leq} d_H(v_1,\mydots, v_{k-1}, v_1) + d_H(v_1,\mydots, v_1, v_k) \\
  & \overset{\Delta_H}{\leq} d_H(v_1,\mydots, v_{k-2},v_1, v_1) + d_H(v_1,\mydots, v_1,v_{k-1}, v_1)+ d_H(v_1,\mydots, v_1, v_k) \\
  & \overset{\Delta_H}{\leq} \mydots \overset{\Delta_H}{\leq} \sum\limits_{i=2}^{k} d_H(v_1,\mydots, v_1, v_i) \leq \sum\limits_{i=2}^{k} \left(d_H(v_1,\mydots, v_1, v_i) + d _H(v_i,\mydots, v_i, v_1)\right) \\
 & \leq \sum\limits_{i=2}^{k} d(v_1, v_i) \leq \sum\limits_{i=1}^{k-1}\sum\limits_{j=i+1}^{k} d(v_i,v_j)   
\end{alignat*}

For the left inequality we can use axiom \SH:
\begin{alignat}{2}\label{mylabel}
\sum\limits_{i=1}^{k-1}\sum\limits_{j=i+1}^{k} d(v_i,v_j) & =  \sum\limits_{i=1}^{k-1}\sum\limits_{j=i+1}^{k} \left( d_H(v_i,\mydots, v_i, v_j) + d_H(v_j,\mydots, v_j, v_i) \right)\\
 & \stackrel{(\ref{mylabel})}{\leq} \frac{k(k-1)}{2}\cdot2\cdot k\cdot d_H(v_1,\mydots,v_k) \leq k^3\cdot d_H(v_1,\mydots,v_k) 
\end{alignat}
The inequality (\ref{mylabel}) follows from the axiom \SH\ and a case distinction. If $\{v_1,\mydots,v_k\}$ contains at least three different elements, then each summand $d_H(v_i,\mydots, v_i, v_j)$ can be directly bounded by $\{v_1,\mydots,v_k\}$, since $v_i,v_j\in \{v_1,\mydots,v_k\}$ and $\{v_1,\mydots,v_k\}$ has more than the two elements $v_i$ and $v_j$. On the other hand, if $\{v_1,\mydots,v_k\}$ contains exactly two elements, then the sum\\ $\sum_{i=1}^{k-1}\sum_{j=i+1}^{k} \left( d_H(v_i,\mydots, v_i, v_j) + d_H(v_j,\mydots, v_j, v_i) \right)$ contains at most $\frac{k(k-1)}{2}<k^2/2$ summands, where each summand is bounded by $k\cdot d_H(v_1,\mydots,v_k)$.

Combining both directions results in 
$$\frac{1}{k^3}\cdot \sum\limits_{i=1}^{k-1}\sum\limits_{j=i+1}^{k} d(v_i,v_j) \leq d_H(v_1,\mydots,v_k) \leq \sum\limits_{i=1}^{k-1}\sum\limits_{j=i+1}^{k} d(v_i,v_j).$$
\end{proof}

\renewcommand*{\proofname}{Proof}

\section{Competitiveness Analysis}\label{sec:correctness_analysis}

\newcommand\ourclock{\clock[0.15]{0}{15}{0,0}}

 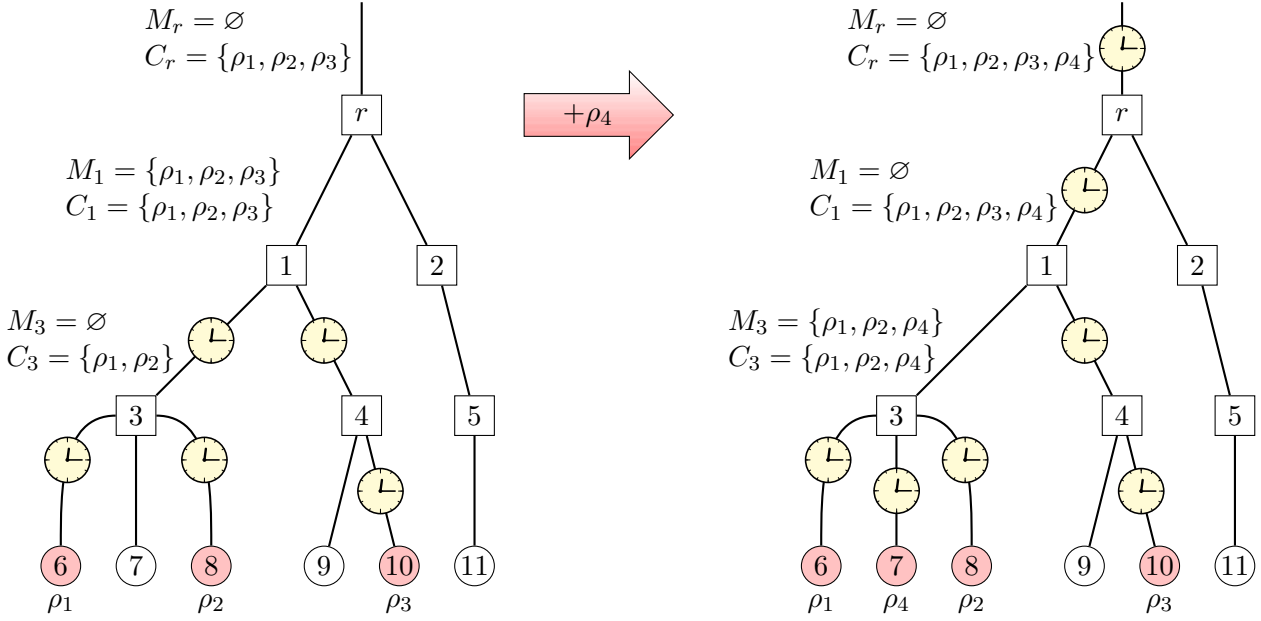
\begin{figure*}[bt]
 \hspace{-0.5cm}
 \begin{tikzpicture}[scale=1,baseline={(0,0)}]
	 \tikzstyle{vertex}=[draw, rectangle,minimum size=15pt,inner sep=0pt]
	 \tikzstyle{leaf} = [draw, circle, minimum size = 15pt,inner sep=0pt]
	 \tikzstyle{active leaf} = [leaf, fill=red!24]
	 \tikzstyle{saturated edge} = [draw,line width=5pt,-,red!50]
	 \tikzstyle{edge} = [draw,thick,-,black]
	 \node[vertex] (v0) at (4,6) {$r$};
	 \node[vertex] (v1) at (3,4) {$1$};
	 \node[vertex] (v2) at (5,4) {$2$};
	 \node[vertex] (v3) at (1,2) {$3$};
	 \node[vertex] (v4) at (4,2) {$4$};
	 \node[vertex] (v5) at (5.5,2) {$5$};
	 \node[active leaf] (v6) at (0,0) {$6$};
	 \node[leaf] (v7) at (1,0) {$7$};
	 \node[active leaf] (v8) at (2,0) {$8$};
	 \node[leaf] (v9) at (3.5,0) {$9$};
	 \node[active leaf] (v10) at (4.5,0) {$10$};
	 \node[leaf] (v11) at (5.5,0) {$11$};
	 \draw[edge] (v0) -- (v1) to  node[midway] {\ourclock} (v3) to[out=0,in=90]  node[midway] {\ourclock} (v8);
	 \draw[edge] (v0) -- (v2) -- (v5) -- (v11);
	 \draw[edge] (v1) to  node[midway] {\ourclock} (v4) -- (v9);
	 \draw[edge] (v4) to  node[midway] {\ourclock} (v10);
	 \draw[edge] (v6) to[out=90,in=180]  node[midway] {\ourclock} (v3);
	 \draw[edge] (v3) -- (v7);
     \draw[edge] (v0) -- (4,7.5);
	 
	 \node (labelv3) at ($(v3) + (-0.6,1)$) {\begin{tabular}{l}$M_3=\varnothing$ \\ $C_3=\{\rho_1,\rho_2\}$\end{tabular}};
	 \node (labelv1) at ($(v1) + (-1.5,1)$) {\begin{tabular}{l}$M_1=\{\rho_1, \rho_2,\rho_3\}$ \\ $C_1=\{\rho_1, \rho_2,\rho_3\}$\end{tabular}};
	 \node (labelv3) at ($(v0) + (-1.5,1)$) {\begin{tabular}{l}$M_r=\varnothing$ \\ $C_r=\{\rho_1, \rho_2,\rho_3\}$\end{tabular}};
	 
	 \node (labelv6) at ($(v6) + (0,-0.5)$) {$\rho_1$};
	 \node (labelv8) at ($(v8) + (0,-0.5)$) {$\rho_2$};
	 \node (labelv10) at ($(v10) + (0,-0.5)$) {$\rho_3$};
	 
 \end{tikzpicture}
~
\begin{tikzpicture}[scale=1,baseline={(0,0)}]
\node[draw, single arrow,minimum height=1.7cm,minimum width=0.5cm,single arrow head extend=.3cm,top color=transparent!0, bottom color=red!50] at (-1,6) {$\quad + \rho_4\quad $};
 \end{tikzpicture}
~
 \begin{tikzpicture}[scale=1,baseline={(0,0)}]
	 \tikzstyle{vertex}=[draw, rectangle,minimum size=15pt,inner sep=0pt]
	 \tikzstyle{leaf} = [draw, circle, minimum size = 15pt,inner sep=0pt]
	 \tikzstyle{active leaf} = [leaf, fill=red!24]
	 \tikzstyle{saturated edge} = [draw,line width=5pt,-,red!50]
	 \tikzstyle{edge} = [draw,thick,-,black]
	 \node[vertex] (v0) at (4,6) {$r$};
	 \node[vertex] (v1) at (3,4) {$1$};
	 \node[vertex] (v2) at (5,4) {$2$};
	 \node[vertex] (v3) at (1,2) {$3$};
	 \node[vertex] (v4) at (4,2) {$4$};
	 \node[vertex] (v5) at (5.5,2) {$5$};
	 \node[active leaf] (v6) at (0,0) {$6$};
	 \node[active leaf] (v7) at (1,0) {$7$};
	 \node[active leaf] (v8) at (2,0) {$8$};
	 \node[leaf] (v9) at (3.5,0) {$9$};
	 \node[active leaf] (v10) at (4.5,0) {$10$};
	 \node[leaf] (v11) at (5.5,0) {$11$};
	 \draw[edge] (v0) to  node[midway] {\ourclock} (v1) -- (v3) to[out=0,in=90]  node[midway] {\ourclock} (v8);
	 \draw[edge] (v0) -- (v2) -- (v5) -- (v11);
	 \draw[edge] (v1) to  node[midway] {\ourclock} (v4) -- (v9);
	 \draw[edge] (v4) to  node[midway] {\ourclock} (v10);
	 \draw[edge] (v6) to[out=90,in=180]  node[midway] {\ourclock} (v3);
	 \draw[edge] (v3) to  node[midway] {\ourclock} (v7);
     \draw[edge] (v0) to  node[midway] {\ourclock} (4,7.5);
	 
	 \node (labelv3) at ($(v3) + (-0.8,1)$) {\begin{tabular}{l}$M_3=\{\rho_1, \rho_2,\rho_4\}$ \\ $C_3=\{\rho_1, \rho_2,\rho_4\}$\end{tabular}};
	 \node (labelv1) at ($(v1) + (-1.5,1)$) {\begin{tabular}{l}$M_1=\varnothing$ \\ $C_1=\{\rho_1, \rho_2,\rho_3,\rho_4\}$\end{tabular}};
	 \node (labelv3) at ($(v0) + (-2,1)$) {\begin{tabular}{l}$M_r=\varnothing$ \\ $C_r=\{\rho_1, \rho_2,\rho_3,\rho_4\}$\end{tabular}};

	 \node (labelv6) at ($(v6) + (0,-0.5)$) {$\rho_1$};
	 \node (labelv8) at ($(v8) + (0,-0.5)$) {$\rho_2$};
	 \node (labelv10) at ($(v10) + (0,-0.5)$) {$\rho_3$};
	 \node (labelv7) at ($(v7) + (0,-0.5)$) {$\rho_4$};
	 
 \end{tikzpicture}
 \caption{Example of the algorithm on tree metrics for $k=3$. 
 Squares represent internal nodes of the tree and circles represent leaves. 
 Red leaves represent the leaves with an open request. Yellow clocks on edges mean that the timers are running. On the left side, there are three open requests. At this time, the supporting request set at node $1$ (the requests are potentially matched across node $1$) is $\{\rho_1,\rho_2,\rho_3\}$, hence no timer is running on top of node $1$. When $\rho_4$ arrives, we recalculate $M_v$ for every node in the tree, which is shown on the right side. Now, $\rho_1,\rho_2$ and $\rho_4$ are potentially matched across node $3$, and then $M_1$ becomes empty and the timer on the edge between nodes $1$ and $3$ is not running while the timer on top of node $1$ is running. }
 \label{fig:algorithm_example}
 \end{figure*}

For the following analysis, we define $\finalTimer_{v}$ as the final value of $\Timer_{v}$ after serving all requests. For every timer $\Timer_v$, we call each $i$-th time interval during which this timer starts being active and until it is consumed a \emph{phase} $\phi_{v,i}$. Note that in the final phase, a timer might not be consumed. In the following analysis, we will always consider that a phase is defined with respect to the online algorithm. 

The goal of this section is to prove the following theorem:

\setcounter{theorem}{6} 
\begin{theorem}\label{thm:correctnessTheorem_app}
For any given $k$ and any request sequence $R$, algorithm $\A$ achieves a competitive ratio $O(\log n)$, i.e.,
\begin{equation*}
\Expect[\Cost_{\A}(R)] \le O(\log n)\cdot \Cost_{\OPT}(R). 
\end{equation*}
\end{theorem}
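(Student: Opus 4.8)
\textbf{Proof proposal for Theorem~\ref{thm:correctnessTheorem_app}.}

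The plan is to reduce the competitive analysis of $\A$ on the $H$-metric $\M$ to a competitive analysis on the tree metric $(T,w)$, and then to analyze Algorithm~\ref{algo} directly on $(T,w)$ along the lines of \citet{AzarCK2017}. By Lemma~\ref{lemma:treeEmbed}, the embedding $\M \hookrightarrow (T,w)$ satisfies $d_H(v_1,\dots,v_k)\le \sum_{i<j} w(v_i,v_j)$ deterministically and $\sum_{i<j}\Expect[w(v_i,v_j)]\le O(\log n)\cdot d_H(v_1,\dots,v_k)$ in expectation. The first step is therefore to show that for any fixed tree $(T,w)$ produced by the embedding, Algorithm~\ref{algo} is $O(1)$-competitive against the optimal offline solution \emph{on that tree}, where the space cost of matching $\rho_1,\dots,\rho_k$ on the tree is taken to be $\sum_{i<j} w(\Location(\rho_i),\Location(\rho_j))$. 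Granting that, one chains the two facts: $\Cost_{\A}^{\M}(R)\le \Cost_{\A}^{T}(R)$ (since the algorithm's time cost is identical on $\M$ and $T$, and its tree space cost dominates the $H$-space cost by the first bullet of Lemma~\ref{lemma:treeEmbed}), then $\Expect[\Cost_{\A}^{T}(R)]\le O(1)\cdot \Expect[\Cost_{\OPT}^{T}(R)]$ by the tree analysis, then $\Expect[\Cost_{\OPT}^{T}(R)]\le O(\log n)\cdot \Cost_{\OPT}^{\M}(R)$ by taking the optimal offline $H$-matching, freezing it, and bounding its tree cost using the second bullet of Lemma~\ref{lemma:treeEmbed} (time cost unchanged, expected tree space cost blown up by $O(\log n)$). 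Multiplying gives $\Expect[\Cost_{\A}(R)]\le O(\log n)\cdot \Cost_{\OPT}(R)$.

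For the tree analysis, I would follow the potential/charging scheme of \citet{AzarCK2017}, adapted so that edges are ``saturated'' in integral multiples of their weight and so that the congestion condition is ``$|\Active_v(t)|\nequiv 0 \bmod k$'' rather than ``odd.'' Fix a node $v$ with final timer value $\finalTimer_v$ and decompose time into the phases $\phi_{v,i}$. The time cost of $\A$ is split across edges: a unit of open-request waiting time at a leaf is attributed upward along the tree, and one shows the total is $O(1)\cdot \sum_v \finalTimer_v$ using that a running timer at $v$ means $k\nmid|\Active_v(t)|$, hence (because $\OPT$ eventually matches everything in groups of $k$) $\OPT$ must itself either be paying congestion at $v$ or have already crossed $e_v$ with one of those requests. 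The space cost of $\A$ is bounded by the timer values consumed in matchings: each matching across a node $u$ consumes inactive timers on the $k$ root-paths from $u$ to the matched leaves, and the weight paid, $\sum_{i<j} w(\Location(\rho_i),\Location(\rho_j))$, is $O(k^2)$ times the sum of consumed timer values along a single such path, which is in turn an integral multiple of the relevant $w_e$'s. So both parts of $\Cost_{\A}^{T}(R)$ are $O_k(1)\cdot \sum_v \finalTimer_v$. The matching lower bound on $\OPT$ comes from the standard argument: for each phase $\phi_{v,i}$ in which the timer at $v$ fully saturated (reached an integral multiple of $w_v$), the offline algorithm must, during that phase, either cross edge $e_v$ (paying $w_v$ toward some $k$-set's pairwise tree distance) or accumulate $\Omega(w_v)$ worth of waiting time for the $\not\equiv 0 \bmod k$ open requests under $v$; summing over $v$ and $i$ yields $\sum_v \finalTimer_v = O_k(1)\cdot \Cost_{\OPT}^{T}(R)$, possibly with an additive $O(1)\cdot\sum_v w_v$ slack for the unfinished final phases, which is absorbed since $\Cost_{\OPT}^T(R)=\Omega(\sum_v w_v)$ whenever any request is served across $v$ (and timers at untouched nodes never run).

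I expect the main obstacle to be the $k$-way bookkeeping in the tree analysis, specifically establishing the ``$\OPT$ must pay $\Omega(w_v)$ per saturated phase at $v$'' lemma when $k>2$. In the $k=2$ case of \citet{AzarCK2017} the parity argument is clean: an odd number of open requests under $v$ forces an edge crossing. For general $k$ the analogue is $|\Active_v(t)| \not\equiv 0 \pmod k$, and one has to argue carefully that over a full phase $\OPT$ cannot ``dodge'' the charge by repeatedly fixing and breaking the residue class using matchings made entirely inside $T_v$ or entirely outside — this needs a continuity/counting argument on the residue $|\adv{\Active}_v(t)| \bmod k$ together with the fact that any $\OPT$-matching that changes this residue necessarily routes some request across $e_v$. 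A second, more routine obstacle is tracking the $k$-dependent constants through the three-way chaining so that the final bound is genuinely $O(\log n)$ with the $k$-dependence isolated in the hidden constant (consistent with the reduction constants $c_l,c_u$ of Theorem~\ref{thm:reduction} and the $O(k^2)$ pair-count); this is where I would be careful but do not anticipate a real difficulty.
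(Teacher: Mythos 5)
There is a genuine gap, and it sits exactly at the step you single out as the heart of your plan: the claim that Algorithm~\ref{algo} is $O(1)$-competitive against the optimal offline solution \emph{on the tree}. Your charging sketch for the lower bound on $\OPT$ (``for each saturated phase at $v$, $\OPT$ either crosses $e_v$ or accumulates $\Omega(w_v)$ of waiting time under $v$; summing over $v$ and $i$'') double-counts the waiting-time charges: a single open request of $\OPT$ at a leaf lies simultaneously under all of its $O(\log n)$ ancestors, so the ``waiting under $v$'' charges at the different nodes $v$ on its root path all bill the same seconds of $\OPT$'s delay. Summing over $v$ therefore yields at best a tree-competitiveness of $O(\mathrm{height})=O(\log n)$ for the time component, not $O(1)$ (this is precisely why the height-reduction step of \cite{Bansal:2015} is in the pipeline at all). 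Plugged into your multiplicative chaining $\Expect[\Cost^{T}_{\A}]\le C_{\mathrm{tree}}\cdot\Expect[\Cost^{T}_{\OPT}]\le C_{\mathrm{tree}}\cdot O(\log n)\cdot\Cost^{\M}_{\OPT}$, this gives $O(\log^2 n)$, not the claimed $O(\log n)$. In addition, the ``residue-dodging'' issue you flag as an anticipated obstacle (consecutive phases in which $|\adv{\Active}_v|\equiv 0 \bmod k$ so the adversary accrues neither waiting nor crossing cost at $v$) is left unresolved in your write-up; it is not a routine detail, since without it the per-phase charge fails outright.

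The paper avoids the multiplication of the two logarithmic losses by \emph{not} proving tree-competitiveness as an intermediate statement. Instead it introduces per-node adversary timers $\adv{\Timer}_v$ driven by $\OPT$'s open requests, proves the purely node-local comparison $\finalTimer_v \le O(k)\cdot(\adv{\finalTimer}_v+\adv{\sigma}_v)$ (Lemma~\ref{lm:compare}, using the harmful-phase bookkeeping of Observation~\ref{ob:1-phase} to show at most $k$ consecutive harmful phases), and only then converts the two adversary quantities separately: $\sum_v \adv{\finalTimer}_v \le O(\log n)\cdot\tCost_{\OPT}$ pays the height factor against the \emph{time} cost only (Lemma~\ref{lm:adv_time}), while $\Expect[\sum_v\adv{\sigma}_v]\le O(\log n)\cdot\sCost_{\OPT}$ pays the embedding distortion against the \emph{space} cost only (Lemma~\ref{lm:adv_space}). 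Because the height loss and the distortion loss hit disjoint components of $\Cost_{\OPT}$, they add rather than multiply, giving $O(\log n)$ overall. To repair your proof you would either have to genuinely establish $O(1)$-competitiveness of the algorithm on $O(\log n)$-height trees (which your charging does not do and which the paper's machinery does not provide), or restructure the argument along the paper's lines so that the time comparison never passes through the tree space cost.
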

For simplicity of the algorithm and the analysis, we do not intend to optimize the constant inside the competitive ratio in this paper. This constant depends on $k$ and varies with the defined metric space.

We will start the proof of Theorem \ref{thm:correctnessTheorem_app} by bounding the space cost of $\A$ through the final values $\finalTimer_v$ of the timers. 

\begin{lemma}\label{lm:space}
 The space cost $\sCost_{\A}(R)$ is at most $O(k) \cdot \sum_v \finalTimer_v$. 
\end{lemma}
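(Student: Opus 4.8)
The plan is to charge the space cost of $\A$ to the timers $\finalTimer_v$ via an amortized argument over the matching events. Recall that whenever $\A$ matches $k$ requests $\rho_1,\dots,\rho_k$ across some internal node $u$, it does so precisely because every timer on every path connecting $u$ to the $\Location(\rho_i)$ is inactive, and after the matching those inactive timers are consumed and reset to active. So the natural quantity to track is, for each edge $e_v$, the total amount of timer value that has ever been consumed on $e_v$ over the whole run, summed with the residual final value $\finalTimer_v$. First I would set up this bookkeeping: let $n_v$ be the number of phases $\phi_{v,i}$ of timer $\Timer_v$ that are completed (consumed) during the run, so the total consumed value on $e_v$ is at least $n_v\cdot w_v$ minus possibly a partial last phase — here I use that a phase ends only when $\Timer_v$ reaches an integral multiple of $w_v$, so each completed phase contributes at least $w_v$ (in fact the value is an integral multiple of $w_v$, but one multiple per phase is all I need for an upper bound on the number of phases in terms of consumed value).

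Next I would bound the space cost of a single matching event in terms of the edges whose timers it consumes. When requests $\rho_1,\dots,\rho_k$ are matched across $u$, the space cost paid is $d_H(\Location(\rho_1),\dots,\Location(\rho_k))$, and by Lemma~\ref{lemma:treeEmbed} this is at most $\sum_{i<j} w(\Location(\rho_i),\Location(\rho_j))$, i.e.\ the sum of tree distances over all $\binom{k}{2}$ pairs. Each such tree distance $w(\Location(\rho_i),\Location(\rho_j))$ decomposes as a sum of edge weights $w_v$ over the edges $e_v$ on the path between the two leaves, and every such edge lies on the path from $u$ down to one of the $\Location(\rho_i)$, hence its timer was inactive and got consumed in this event. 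An edge $e_v$ on the path from $u$ to $\Location(\rho_i)$ appears in the pairwise sum at most $k$ times (once for each $j$ such that $v$ separates $\Location(\rho_i)$ from $\Location(\rho_j)$, and crudely at most $k$ pairs overall can route through any fixed edge when only $k$ leaves are involved — actually at most $k-1$, but $k$ suffices). Therefore the space cost of this single matching event is at most $O(k)$ times the total timer value consumed in that event across all the edges it touches.

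Summing over all matching events, the space cost $\sCost_{\A}(R)$ is at most $O(k)$ times the total timer value consumed over the whole run, over all edges. Since the consumed value on $e_v$ plus the final residual $\finalTimer_v$ accounts for all the value $\Timer_v$ ever accrued, and consumed value alone is $\le$ total accrued value, I get $\sCost_{\A}(R)\le O(k)\cdot\sum_v(\text{consumed on }e_v)\le O(k)\cdot\sum_v\finalTimer_v$ once I argue that the total accrued value on $e_v$ is $O(\finalTimer_v)$ — but here I must be slightly careful, because $\Timer_v$ is reset implicitly via consumption while $\finalTimer_v$ is only the last residual. The cleaner route is: consumed value on $e_v$ over the run equals (total value $\Timer_v$ ever ticked up) minus $\finalTimer_v$, and actually what I want is to bound $\sum(\text{consumed})$, and each consumption of $e_v$'s timer corresponds to a matching event on top of $v$; I should instead bound the number of consumptions and multiply, or simply observe that since in the algorithm the timer value itself ``remains unchanged'' while collected time is what gets consumed, the correct accounting variable is $\finalTimer_v$ directly together with the count of matchings on top of $v$. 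I expect the main obstacle to be exactly this reconciliation between ``value consumed'' and ``final value $\finalTimer_v$'': I will need to show that each matching-on-top-of-$v$ event can be charged to a distinct completed phase $\phi_{v,i}$ of $\Timer_v$, each of which contributes at least $w_v\le\finalTimer_v$'s worth... no — rather, that $\sum_i (\text{length of }\phi_{v,i})$ is within a constant of $\finalTimer_v$, which fails in general. The resolution, and the delicate step I would spend the most care on, is that the quantity consumed in a matching-on-top-of-$v$ is not $\finalTimer_v$ but the value $\Timer_v$ had when it went inactive, which is an integral multiple of $w_v$ and is at most $\finalTimer_v$ only in the last phase; so I will charge each matching event's per-edge cost against $w_v$ times the multiple reached, telescoped so that the grand total over all phases of $e_v$ is at most a constant times $\finalTimer_v$ — using that $\finalTimer_v$ is the value at the very end and that the timer is non-decreasing in value across its lifetime even though it is reset in "collected time". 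This monotonicity-of-$\Timer_v$ observation is what makes the whole charging scheme close, and is the step I would flag as the crux of Lemma~\ref{lm:space}.
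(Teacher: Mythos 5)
Your final argument is essentially the paper's own proof: you bound the space cost of each matching through Lemma~\ref{lemma:treeEmbed} by the weights of the edges consumed in that matching (with a multiplicative factor depending only on $k$), and you bound the number of times $e_v$ can be used by $\lfloor \finalTimer_v / w_v\rfloor$, since $\Timer_v$ must reach a fresh integral multiple of $w_v$ before each use and is never decreased. The reconciliation you flag as the crux dissolves immediately under the paper's timer semantics -- consumption releases the collected amount $w_v$ but leaves the timer's value unchanged, so $\finalTimer_v$ is the total value ever accrued and your telescoping argument is exactly the paper's observation $\#e_v \cdot w_v \le \finalTimer_v$.
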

\begin{proof} 
 Observe that algorithm $\A$ is only able to use an edge $e_v$ for matching if this edge has become inactive. The corresponding timer $\Timer_{v}$ must be an integral multiple of $w_v$ and the value that can be consumed from this timer is equal to $w_v$. Let $\#e_v$ denote the number of times an edge $e_v$ is used to match requests on top of $v$. This number can be bounded from above by using the final value of the corresponding timer and the weight of the corresponding edge: $\#e_v \leq \lfloor \finalTimer_v / w_v \rfloor$. 
 By considering all edges in the tree, the space cost of the algorithm $\A$ can be bounded by $$k\cdot \sum_{v\neq r} w_v \cdot \#e_v \le k\cdot \sum_{v\neq r} w_v \lfloor \finalTimer_v / w_v \rfloor \le k\cdot \sum_{v\neq r} \finalTimer_v$$ 
\end{proof}

In order to bound the time cost of the algorithm, we need to understand how the timers influence the total waiting time. 
We therefore need to introduce a so-called \emph{supporting request set}, which is defined on vertices of $T$ in a bottom-up manner. 
At time $t$, the set of supporting requests contains requests which can potentially be matched across $v$ and is denoted by $M_{v}(t)$. 
For all leaves $v\in \Leaves$ we set $M_{v}(t) = \emptyset$. 
Given an internal vertex $v\in T-\Leaves$, $M_{v}(t)$ can be computed by solving the following constraints: 
\begin{itemize}

\item Only open requests can be matched, and if some requests can potentially be matched under $v$ then we do not let them be supporting requests in $M_v(t)$:\\
 $M_{v}(t)\subseteq \left( \Active_{v}(t) - \bigcup_{u\in \Descendant(v)} M_{u}(t)\right)$
 
\item As many requests as possible should be inside the set of supporting requests:\\
$|M_{v}(t)|$ is an integral multiple of $k$, and $\left|\Active_{v}(t) -M_{v}(t) - \bigcup_{u\in \Descendant(v)} M_{u}(t)\right| < k$

\end{itemize}
Note that there may be multiple possible solutions satisfying the conditions above. For the rest of the analysis, we arbitrarily choose one of them as $M_v$. Figure~\ref{fig:algorithm_example} shows a sample constellation of the sets $M_{v}(t)$ and $C_{v}(t)$ for three selected vertices and how these sets change with a new arriving request. 

Using this concept, we can now also bound the time cost of $\ALG$ on any request sequence:

\begin{lemma}\label{lm:time}
 The time cost $\tCost_{\A}(R)$ is at most $O(k)\cdot\sum_v \finalTimer_v$. 
\end{lemma}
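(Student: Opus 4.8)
Lemma \ref{lm:time} asserts that the time cost of the algorithm is bounded by $O(k)\cdot\sum_v\finalTimer_v$. Here is how I would prove it.

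\medskip

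\noindent\textbf{Plan.} The time cost is $\tCost_{\A}(R)=\sum_{\rho\in R}(t'(\rho)-t(\rho))$, which we can rewrite by integrating over time: $\tCost_{\A}(R)=\int_0^\infty |\Active(t)|\,dt$, where $\Active(t)$ is the total number of open requests at time $t$ across all leaves. The strategy is to charge this waiting against the timers. At any time $t$, partition the open requests according to the supporting-set decomposition: every open request in $\Active_v(t)$ is either a supporting request at some ancestor $u$ of (or equal to) $v$ — i.e. lies in $M_u(t)$ for exactly one node $u$ — or it is among the at-most-$(k-1)$ ``leftover'' requests at each node that are not yet in any supporting set. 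So $|\Active(t)| \le \sum_{u}|M_u(t)| + (k-1)\cdot(\text{number of nodes})$, but the second term is not directly useful; instead I would argue more carefully node by node.

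\medskip

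\noindent\textbf{Key steps.} First, fix a node $v$ and a phase $\phi_{v,i}$ of its timer. By the algorithm, during this phase the timer $\Timer_v$ runs at unit rate exactly when $|\Active_v(t)|\not\equiv 0 \bmod k$, equivalently (using the supporting-set definition, since $\bigcup_{u\in\Descendant(v)}M_u(t)$ and $M_v(t)$ are both multiples of $k$) when there are leftover requests ``stuck'' at $v$ — requests that are open, lie in $\Leaves_v$, and are neither matched under $v$ nor in $M_v(t)$. There are at most $k-1$ such requests. So the total waiting contributed by requests that are ``stuck at $v$'' during the time the timer runs is at most $(k-1)$ times the amount of time the timer runs during that phase, which is at most the length of the phase. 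Summing over all phases of $v$'s timer, the total time the timer runs is at most $\finalTimer_v$ (the timer's value only increases, never resets — only its active/inactive status toggles). Hence the waiting charged to $v$ is at most $(k-1)\finalTimer_v$.

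\medskip

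\noindent Second, I need every unit of waiting to be charged to \emph{some} node in this way — i.e. that at any time $t$, every open request is ``stuck at'' some node $v$ whose timer is running. Here is the subtle point: an open request $\rho$ at leaf $\ell(\rho)$ sits in $\Active_\ell(t)$. Walk up from $\ell(\rho)$ to the root; at each node $u$ on this path, $\rho$ is either absorbed into $M_u(t)$ (and then it is ``stuck at $u$''? — no, supporting requests are waiting to be matched across $u$, and the timer structure is on the edges/nodes on top of $u$) or it remains a leftover. Since $\rho$ is not yet matched, it must be leftover at \emph{every} node up to and including the root, OR it is in $M_u(t)$ for exactly one $u$. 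If $\rho\in M_u(t)$, then $|M_u(t)|\ge k$ and these requests are waiting for timers on paths above $u$ to become inactive; one shows some relevant timer is running and charges $\rho$'s wait there. If $\rho$ is leftover everywhere, then at the root $|\Active_r(t)|\not\equiv 0\bmod k$ is impossible only if... actually the root's timer has $w_r=\infty$ so it runs whenever there are leftovers — charge there. Making this case analysis airtight so that each open request at each instant is charged to exactly one running timer, with multiplicity $\le k-1$ per timer, is the crux.

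\medskip

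\noindent\textbf{Main obstacle.} The hard part is precisely the bookkeeping in the second step: establishing a clean injection-like correspondence between (request, time) pairs of waiting and (node, time) intervals of running timers, so that no more than $k-1$ requests are charged to any node at any instant and every waiting request is charged somewhere. The supporting-set definition only pins down $M_v$ up to an arbitrary choice, and requests in $M_v(t)$ are waiting for events strictly above $v$, so one must track how a supporting set at $v$ ``blocks'' on the first inactive-becoming timer above it and show that timer is running while the block persists. I expect the proof to proceed by a careful ancestor-walk argument combined with the observation that $\sum_v \finalTimer_v$ already sums the total running time of all timers, yielding the factor $O(k)$ from the per-node multiplicity $k-1$ plus possibly another factor $O(k)$ from requests being (re)charged as they move between supporting sets, which is why the statement is $O(k)$ rather than $k-1$.
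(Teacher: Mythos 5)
There is a genuine gap, and you name it yourself: your ``second step'' --- showing that at every instant every open request can be charged to \emph{some} timer that is actually running, with multiplicity $O(k)$ per timer --- is precisely the content of the lemma, and your proposal does not establish it. Charging the at-most-$(k-1)$ leftover requests at $v$ to $\Timer_v$ while it runs (your first step) is fine, but it leaves uncovered exactly the requests that sit inside supporting sets $M_u(t)$: these are open, accumulate time cost, and at the nodes where they are counted the cardinality $|\Active_u(t)|$ may be $\equiv 0 \pmod k$, so no timer is obviously ticking on their behalf. Your treatment of this case (``one shows some relevant timer is running and charges $\rho$'s wait there'') is an announcement of the needed argument, not the argument; likewise the root case is only half-handled, since $|\Active_r(t)|\equiv 0 \pmod k$ with open requests outstanding is possible and then the root timer is paused. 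Without a concrete covering/charging rule and a per-timer multiplicity bound, the inequality $\tCost_{\A}(R)\le O(k)\sum_v\finalTimer_v$ is not derived.

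For comparison, the paper does not route this lemma through the $M_v$ decomposition at all. It defines, for each node $v$ whose parent edge $e_v$ is active, the set $S(v)$ of open requests $\rho$ such that every edge between $v$ and $\Location(\rho)$ is inactive; walking up from any open request's leaf to the first active edge shows that $\bigcup_v S(v)$ covers all open requests, and $|S(v)|<k$ because otherwise the algorithm would already have matched $k$ of them. The waiting of $S(v)$ is then charged to $\Timer_v$, giving the factor $k$, and the only delicate situation --- when $S(r)$ contains all open requests and no other timer need be running --- is disposed of by observing $\Active_r(t)\equiv\adv{\Active}_r(t)\pmod k$, so the root timer (with $w_r=\infty$) runs simultaneously for the online and offline algorithms and this waiting cannot hurt the competitive ratio. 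If you want to salvage your plan, you would need an analogous device: a rule that locates, for each open request, a specific running timer (e.g., the lowest active edge above it on the blocked path of its supporting set, using that leftovers at such a node force $|\Active_x(t)|\not\equiv 0\pmod k$), together with an explicit bound on how many requests can select the same timer.
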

\begin{proof}

We first consider every node $v\in T-\{r\}$ such that $e_v$ is active and define $S(v):=\{\text{open requests } $ $\rho | \text{ all edges between } v \text{ and } \Location(\rho) \text{ are inactive.}\}$. 
We are sure that $|S(v)|$ is less than $k$, otherwise, these requests must be matched by $\A$. 
Since $\Timer_v$ is active, we can use it to count the delay cost of the requests in $S(v)$, up to a factor of at most $k$. 

In a similar way, we can define $S(r)$. In the following, we analyze two cases describing whether $S(r)$ contains all the current open requests or not. 

If $S(r)$ does not contain all the current open requests, then we know that there must exist at least one $v$ such that $\Timer_v$ is increasing. We can use this timer to count the delay cost of the requests in $S(r)$, up to a factor at most $k$. 

If $S(r)$ contains all the current open requests, one can set $w_r=\infty$, as $r$ has no parent node and therefore the respective timer $\Timer_r$ never becomes inactive.  Note that 
$\bigcup_{v\in T-\{r\}-\Leaves} M_v(t) = \bigcup_{u\in \Descendant(r)} M_{u}(t)$ since the supporting request sets of the leaves are empty by definition and the supporting requests of the root node are not counted. Consider the set of requests that can only be matched across the root $( \Active_{r}(t) - \bigcup_{u\in \Descendant(r)} M_{u}(t))$. The requests in this set will always increase the timer $\Timer_r$, since the timer of the root does not become inactive.
Observe that the timers of the root node are running at the same time for both, the online and the offline algorithms, since $\Active_r(t) \equiv \adv{\Active}_r(t) \pmod{k}$. That is, the timer for the offline algorithm will also always increase, such that this case cannot increase the competitive ratio and can therefore be ignored.

It is not difficult to see that $\bigcup_v S(v)$ is all the current open requests. Therefore, $$\tCost_{\A}(R)\le k \cdot \sum_v \finalTimer_v.$$ 
\end{proof}

\paragraph{Timers and Adversary-timers}\label{sec:timer}
So far, we have bounded the time and space costs of the online algorithm in terms of $\sum_v \finalTimer_v$. 
In order to calculate the competitive ratio of $\ALG$, we need to find a lower bound for $\OPT$ which also depends on $\sum_v \finalTimer_v$. We therefore introduce adversary-timers. 
For every node $v$ we define an adversary timer $\adv{\Timer}_{v}$, which is initialized to $0$ and increases at a unit rate as long as $\adv{C}_v(t)\not\equiv 0 \mod k$. Note that this definition is analogous to the definition of the timer $\Timer_{v}$.
For any adversary-timer $\adv{\Timer}_{v}$, its final value is denoted by $\adv{\finalTimer}_{v}$. 
We will consider the difference between the timers of the online and the optimal algorithms with respect to the phases defined by the online algorithm $\ALG$. We therefore denote the time cost of $\OPT$ in the phase $\phi_{v,i}$ by $\adv{\finalTimer}_{v,i}$. 
The space cost of $\OPT$ in the phase $\phi_{v,i}$ is respectively denoted by $\adv{\sigma}_{v,i}$. 
We can say that an amount of $w_v$ is added to $\adv{\sigma}_{v,i}$ whenever $\OPT$ matches requests on top of $v$. 
We further define the final value of the adversary-timer at node $v$ as $\adv{\finalTimer}_v = \sum \adv{\finalTimer}_{v,i}$, the total value of all adversary-timers as $\adv{\finalTimer} = \sum_v \adv{\finalTimer}_v$, the total space cost at a node $v$ as $\adv{\sigma}_v = \sum_i \adv{\sigma}_{v,i}$ and, finally, the total space cost of $\OPT$ as $\adv{\sigma} = \sum_v \adv{\sigma}_v$.

In order to discuss the difference between the timers, we need to introduce a notation for the change of the timer with respect to a phase. We therefore denote the change of $\Timer_v$ over phase $\phi$ by $\finalTimer(\phi)$, the change of $\adv{\Timer}_{v}$ by $\adv{\finalTimer}(\phi)$ and the change of the space cost (on the tree metric) of the adversary over the same phase by $\adv{\sigma}(\phi)$. Further, we call the phase $\phi_{v,i}$ a $j$-phase, if at the beginning of phase $\phi_{v,i}$ for the difference of open requests in $\adv{\Active}_v$ and in $\Active_v$ holds $(|\adv{\Active}_v| - |\Active_v|)\equiv j\mod{k}$.

\begin{lemma}\label{lem:0-phase}
 If $\phi=\phi_{v,i}$ is a $0$-phase, then $\finalTimer(\phi) \le \adv{\finalTimer}(\phi)$ or $\finalTimer(\phi) \le \adv{\sigma}(\phi)$. 
\end{lemma}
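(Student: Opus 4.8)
The plan is to prove the dichotomy by a single case split: whether or not $\OPT$ matches some set of $k$ requests \emph{on top of} $v$ (in the sense of Section~\ref{sec:algo}) at some moment during the phase $\phi=\phi_{v,i}$. Before splitting, I would record two bookkeeping facts. First, a phase begins with $\Timer_v$ equal to a multiple of $w_v$ and active, and $\Timer_v$ is declared inactive as soon as it reaches the next multiple of $w_v$, so $\finalTimer(\phi)\le w_v$; moreover if $\finalTimer(\phi)=0$ the lemma is trivial, so we may assume $\Timer_v$ runs for a positive total time inside $\phi$. Second, $\Timer_v$ can be consumed only by a matching on top of $v$, after which it becomes active again, i.e.\ a new phase has begun; hence throughout the initial sub-interval of $\phi$ during which $\Timer_v$ is running, the online algorithm performs no matching on top of $v$. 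Consequently, in that sub-interval the only events that alter $|\Active_v|\bmod k$ are request arrivals at leaves of $\Leaves_v$: every matching across $v$ or under $v$ removes exactly $k$ open requests from $\Leaves_v$, every matching with all locations outside $\Leaves_v$ removes none, and each arrival increases both $|\Active_v|$ and $|\adv{\Active}_v|$ by the same amount.

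In the first case, if $\OPT$ matches $k$ requests on top of $v$ at some point of $\phi$, then by the definition of the adversary space cost an amount $w_v$ is charged to $\adv{\sigma}_{v,i}$ at that moment, so $\adv{\sigma}(\phi)\ge w_v\ge\finalTimer(\phi)$, which is the second alternative. In the second case, $\OPT$ performs no matching on top of $v$ throughout $\phi$, so — by the same reasoning applied to $\adv{\Active}_v$ — the only events changing $|\adv{\Active}_v|\bmod k$ during $\phi$ are arrivals at leaves of $\Leaves_v$. Since $\phi$ is a $0$-phase we have $|\adv{\Active}_v|\equiv|\Active_v|\pmod k$ at the start of $\phi$, and since the two counts change only by the identical arrival increments, this congruence persists throughout $\phi$. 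Hence, at every time of $\phi$, $|\adv{\Active}_v|\not\equiv0\pmod k$ exactly when $|\Active_v|\not\equiv0\pmod k$; in particular, whenever $\Timer_v$ is running (so that it is active and $|\Active_v|\not\equiv0$) the adversary-timer $\adv{\Timer}_v$ is also increasing, whence $\adv{\finalTimer}(\phi)\ge\finalTimer(\phi)$, the first alternative.

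The step I expect to be the main obstacle is the careful justification that, during the running part of the phase, neither the online algorithm nor (in the second case) $\OPT$ can change the number of open requests in $\Leaves_v$ by an amount indivisible by $k$ without falling into the ``on top of $v$'' situation — which for the online algorithm is impossible while $\Timer_v$ is active, and for $\OPT$ is precisely the first case. This amounts to classifying every matching relative to $v$ (across $v$, under $v$, on top of $v$, or with all locations outside $\Leaves_v$) and checking, in each case, how many of the $k$ matched requests lie in $\Leaves_v$; once that case analysis is pinned down, the congruence argument together with the bound $\finalTimer(\phi)\le w_v$ conclude the proof at once.
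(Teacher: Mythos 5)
Your proof is correct and takes essentially the same route as the paper: a case split on whether $\OPT$ performs a matching on top of $v$ during $\phi$, using $\adv{\sigma}(\phi)\ge w_v\ge \finalTimer(\phi)$ in one case and $\adv{\finalTimer}(\phi)\ge\finalTimer(\phi)$ in the other. Your mod-$k$ bookkeeping merely spells out the detail behind the paper's one-line assertion that, absent an $\OPT$ matching on top of $v$, the adversary-timer $\adv{\Timer}_v$ increases whenever $\Timer_v$ does.
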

\begin{proof}
 For a $0$-phase $\phi$ of some node $v$, there are two possibilities
 \begin{description}
  \item[Case 1:] There is no matching on top of $v$ by $\OPT$ in phase $\phi$.
  \item[Case 2:] There exists at least one matching on top of $v$ by $\OPT$ in phase $\phi$.
 \end{description}
 In the first case, $\adv{\Timer}_v$ increases when $\Timer_v$ increases, such that $\finalTimer(\phi) \le \adv{\finalTimer}(\phi)$ holds.
 In the second case, there is at least one matching on top of $v$ by $\OPT$. Since $\OPT$ has to incur space cost in order to match the requests, the space cost is at least $\adv{\sigma}(\phi) \ge w_v$. Since a phase of $\ALG$ ends when the timer $\phi$ reaches the weight of the corresponding edge $e_v$ and is consumed thereafter, we have $w_v \ge \finalTimer(\phi)$. For the second case we therefore get that $\adv{\sigma}(\phi) \ge \finalTimer(\phi)$.
\end{proof}

Next, we will introduce two observations which are needed for the rest of the analysis. The first observation results from the following idea: We can say that in any phase $\phi$ in which there is at least one matching on top of $v$, the inequality $\finalTimer(\phi) \le \adv{\sigma}(\phi)$ holds independent of the value of $j$. This observation follows by applying the same arguments as in the proof of Lemma \ref{lem:0-phase}. 

\begin{observation}\label{ob:ontop}
 If $\phi$ is a phase and there exists at least one matching on top of $v$ by $\OPT$ in phase $\phi$, 
 then $\finalTimer(\phi) \le \adv{\sigma}(\phi)$. 
\end{observation}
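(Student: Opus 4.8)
The plan is to re-run the Case~2 argument from the proof of Lemma~\ref{lem:0-phase} and to notice that it never used the assumption that $\phi$ is a $0$-phase, so the same reasoning goes through verbatim for an arbitrary phase. Concretely, fix a phase $\phi=\phi_{v,i}$ of $\ALG$ in which $\OPT$ performs at least one matching on top of $v$. First I would lower-bound $\adv{\sigma}(\phi)$: by the bookkeeping convention fixed just above Lemma~\ref{lem:0-phase}, an amount $w_v$ is added to $\adv{\sigma}_{v,i}$ every time $\OPT$ matches requests on top of $v$, and the hypothesis guarantees that this happens at least once within the time interval of $\phi$; hence $\adv{\sigma}(\phi)\ge w_v$.

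Second, I would upper-bound $\finalTimer(\phi)$ by $w_v$. By the definition of the algorithm, the phase $\phi_{v,i}$ starts with $\Timer_v$ equal to some integral multiple $m\cdot w_v$ and ends as soon as $\Timer_v$ first reaches the next multiple $(m+1)\cdot w_v$ (after which $e_v$ becomes inactive and is consumed), or else when the request sequence terminates; in both situations the monotone timer value increases by at most $w_v$ over the interval of $\phi$, so $\finalTimer(\phi)\le w_v$. The pauses of $\Timer_v$ that occur while $|\Active_v|\equiv 0\pmod k$ only decrease this change, so the bound is unaffected.

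Chaining the two bounds gives $\finalTimer(\phi)\le w_v\le\adv{\sigma}(\phi)$, which is the claim. I do not expect any real obstacle here; the only thing to be careful about is the phase/cost alignment — one has to make sure that the matching on top of $v$ promised by the hypothesis is charged to the same index $i$, i.e.\ that it falls inside the time interval of $\phi_{v,i}$, and that $\finalTimer(\phi)$ is read off as the net increase of the (monotone) timer value over exactly that interval. Both are immediate from the definitions, which is why this is essentially a one-line corollary of the Case~2 analysis; isolating it as a separate observation is only convenient because the later parts of the proof will invoke it for a general $j$-phase rather than just for a $0$-phase.
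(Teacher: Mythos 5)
Your proof is correct and follows the same route as the paper: the paper explicitly notes that the Case~2 argument of Lemma~\ref{lem:0-phase} never uses the $0$-phase hypothesis, bounds $\adv{\sigma}(\phi)\ge w_v$ from the matching on top of $v$ and $\finalTimer(\phi)\le w_v$ from the phase ending when $\Timer_v$ gains $w_v$, and chains the two. Your added care about aligning the matching with the interval of $\phi_{v,i}$ is a fine (if implicit in the paper) sanity check, but nothing more is needed.
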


The next question is how to resolve the case of $j$-phases ($j\neq 0$) which do not have a matching on top of $v$. In these phases, it is possible that $\adv{\finalTimer}(\phi) $ is also very small. 
We assume without loss of generality that exactly $(-|\adv{\Active}_v| \mod{k})$ requests arrive at $\Leaves_v$ at the beginning of these phases, 
and no other request arrives until the end of the phase. We can make this assumption because only if $|\adv{\Active}_v| \equiv 0 \mod{k}$ the timer $\Timer_v$ is increasing and the adversary timer $\adv{\Timer_v}$ is paused. We call the corresponding phase a \emph{harmful} phase, because we would not be able to bound the cost of $\OPT$ if there were several consecutive harmful phases. Fortunately, the following observation excludes this possibility.

\begin{observation}\label{ob:1-phase}
$(|\Active_v| \mod{k})$ is decreasing during consecutive harmful phases.
\end{observation}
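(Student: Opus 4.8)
The plan is to analyze what happens when a harmful phase $\phi = \phi_{v,i}$ occurs and to track how $(|\Active_v| \bmod k)$ evolves as one harmful phase is followed by the next. Recall that a harmful phase is a $j$-phase with $j \neq 0$ in which $\OPT$ performs no matching on top of $v$, and that by our simplifying assumption exactly $(-|\adv{\Active}_v| \bmod k)$ requests arrive in $\Leaves_v$ at the start of the phase and nothing else arrives during it. The key point to exploit is the defining feature of the timers: $\Timer_v$ is increasing precisely when $|\Active_v(t)| \nequiv 0 \bmod k$, and the phase $\phi_{v,i}$ ends (the timer is consumed) only when $\A$ matches $k$ requests on top of $v$; moreover, for $\adv{\Timer}_v$ to be paused during the whole harmful phase we need $|\adv{\Active}_v(t)| \equiv 0 \bmod k$ throughout, which is exactly what the arrival assumption at the start of the phase enforces.

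First I would fix notation: let $a = |\adv{\Active}_v|$ and $c = |\Active_v|$ at the moment just before the harmful phase $\phi$ begins, so that $a - c \equiv j \bmod k$ with $j \neq 0$. After the $(-a \bmod k)$ fresh requests arrive, $\adv{\Active}_v$ jumps to a multiple of $k$, so $\adv{\Timer}_v$ is paused, while $\Active_v$ becomes $c + (-a \bmod k)$, which is congruent to $c - a \equiv -j \not\equiv 0 \bmod k$, so $\Timer_v$ runs — consistent with $\phi$ being a genuine phase of $\A$. Next I would examine how the phase can end: since $\OPT$ does no matching on top of $v$ during $\phi$ and no new requests arrive, $\adv{\Active}_v$ stays a fixed multiple of $k$, and the only events that change $\Active_v$ are matchings by $\A$ on top of $v$ (each removing $k$ requests from $\Active_v$) or matchings strictly under $v$ (which also remove multiples of $k$ and are reflected through the supporting-set bookkeeping). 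In all cases $|\Active_v|$ changes only by multiples of $k$ after the initial arrival, so its residue mod $k$ is the constant $(-j \bmod k)$ for the remainder of $\phi$, and at the end of $\phi$ at least one matching on top of $v$ has occurred, bringing the residue back to a value that I would then compare with what the next harmful phase inherits.

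The crux is then to show that the residue $(|\Active_v| \bmod k)$ strictly decreases from one harmful phase to the next. I would argue as follows: at the end of the harmful phase $\phi$, $\A$ has matched at least one group of $k$ requests on top of $v$, so $\Active_v$ is driven toward $0 \bmod k$; the number of ``missing'' requests needed to complete a group, namely $(-|\Active_v| \bmod k)$, is what $\adv{\Active}_v$ was already forced to absorb as fresh arrivals, and because the adversary's count is a multiple of $k$ while the algorithm's count lags, the discrepancy $j$ that triggered the next harmful phase must be smaller — more precisely, the quantity $(|\Active_v| \bmod k)$ (equivalently the deficit needed to reach the next multiple of $k$) cannot be repeated or increased, since each harmful phase ends with $\A$ committing a matching on top of $v$ that shrinks this deficit, and no mechanism within a harmful phase (where no external requests arrive at $\Leaves_v$) can grow it. Iterating, after finitely many harmful phases the residue reaches a state from which a harmful phase is no longer possible.

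The main obstacle I expect is making precise the interaction between matchings performed \emph{under} $v$ versus \emph{on top of} $v$ and how they affect $|\Active_v|$ versus the supporting sets $M_u(t)$ for descendants $u$ of $v$; I need to be careful that a matching under $v$ — which removes requests from $\Active_v$ but is accounted for by a descendant's timer, not $v$'s — does not secretly increase the relevant residue or prolong the harmful phase. The clean way around this is to note that every matching removes exactly $k$ requests, so $|\Active_v|$ is always altered by multiples of $k$ except at the single instant of fresh arrivals at the start of the phase, which pins the residue for the whole phase; then the decrease claim follows from the observation that the starting residue of phase $\phi_{v,i+1}$ equals the ending residue of $\phi_{v,i}$, and the ending residue is strictly below the starting residue because the terminating matching on top of $v$ consumes the outstanding deficit. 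I would also double-check the boundary behaviour when the residue would wrap around mod $k$, confirming that the monotonicity is genuine and not merely cyclic, which is precisely the content of the observation.
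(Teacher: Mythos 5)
There is a genuine gap, and it sits exactly where the observation needs its key idea. You assert that after the initial arrivals ``the only events that change $\Active_v$ are matchings by $\A$ on top of $v$ (each removing $k$ requests from $\Active_v$) or matchings strictly under $v$,'' and conclude that $|\Active_v|$ changes only by multiples of $k$, so that the residue is constant within the phase. This is false: a matching \emph{on top of} $v$ matches $k$ requests of which only those located in $\Leaves_v$ are removed from $\Active_v$ --- that number is at least $1$ and at most $k-1$, never $k$. The whole content of the paper's argument is precisely this point, sharpened by the algorithm's bookkeeping: requests lying in supporting sets $M_u$ for $u$ at or below $v$ are reserved for matchings under/across $v$, so a matching on top of $v$ can draw only from the leftover set of size $(|\Active_v| \bmod k)$, i.e.\ it removes fewer than $(|\Active_v| \bmod k)$ subtree requests. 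Combined with the fact that during consecutive harmful phases $|\adv{\Active}_v| \equiv 0 \pmod{k}$ forces no new arrivals in $\Leaves_v$, this shows the residue can only shrink and, crucially, can never wrap around modulo $k$.

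Your proposal never supplies this algorithmic fact, and as a result the argument becomes internally inconsistent: you first claim the residue is pinned at $(-j \bmod k)$ for the whole phase (which, if true, would make any decrease impossible), and then claim the terminating matching on top of $v$ ``consumes the outstanding deficit'' and strictly lowers the residue. You also explicitly flag the wrap-around issue at the end but leave it unresolved --- yet ruling out wrap-around is exactly what the observation is for, and it cannot be ruled out without the bound on how many subtree requests a matching on top of $v$ may use. To repair the proof, drop the ``multiples of $k$'' claim, and instead argue: (i) no new requests arrive in $\Leaves_v$ during consecutive harmful phases because $|\adv{\Active}_v| \equiv 0 \pmod{k}$; (ii) hence $(|\Active_v| \bmod k)$ changes only when $\A$ matches on top of $v$; (iii) each such matching removes fewer than $(|\Active_v| \bmod k)$ requests from $\Leaves_v$, so the residue strictly decreases and cannot increase or wrap.
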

\begin{proof}
There are only two ways to increase $(|\Active_v| \mod{k})$: either through new requests in the subtree rooted at $v$, or by using more than $(|\Active_v| \mod{k})$ for a matching on top of $v$. 
However, in these consecutive harmful phases, since $|\adv{\Active_v}| \equiv 0 \mod{k}$, it follows that $(|\Active_v| \mod{k})$ can only be changed by a matching on top of $v$ in $\ALG$, and not through new requests. Besides, when matching on top of $v$ in $\ALG$, according to our algorithm, only less than $(|\Active_v| \mod{k})$ requests  in the subtree rooted at $v$ are allowed to be used. 
\end{proof}

By combining Lemma \ref{lem:0-phase} with the observations, we can derive a lower bound for the costs of $\OPT$ incurred on the node $v$:
\begin{lemma}\label{lm:compare}
For every $v\in T$, $\finalTimer_v \le O(k \cdot (\adv{\finalTimer}_v + \adv{\sigma}_v))$. 
\end{lemma}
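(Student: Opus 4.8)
The plan is to fix a node $v\in T$ and decompose its final timer value into contributions coming from the individual phases $\phi_{v,1},\phi_{v,2},\dots$ defined by the online algorithm, writing $\finalTimer_v=\sum_i \finalTimer(\phi_{v,i})$, and then to bound the contribution of each phase against the cost that $\OPT$ incurs at $v$ during the same phase, namely $\adv{\finalTimer}_{v,i}+\adv{\sigma}_{v,i}$. The phases split naturally according to their type: a phase is a \emph{$0$-phase}, a \emph{harmful} phase (a $j$-phase with $j\neq 0$ and no matching on top of $v$ by $\OPT$), or a phase with at least one matching on top of $v$ by $\OPT$. For the last type, Observation \ref{ob:ontop} immediately gives $\finalTimer(\phi)\le\adv{\sigma}(\phi)$, and for $0$-phases Lemma \ref{lem:0-phase} gives $\finalTimer(\phi)\le\adv{\finalTimer}(\phi)$ or $\finalTimer(\phi)\le\adv{\sigma}(\phi)$; summing these over all phases of the respective types already yields the desired bound with a constant of $1$. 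So the real work is the harmful phases.

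For the harmful phases I would use Observation \ref{ob:1-phase}: the quantity $(|\Active_v|\bmod k)$ strictly decreases over consecutive harmful phases, hence there can be at most $k-1$ harmful phases in any maximal run of consecutive harmful phases, and each such run must be terminated either by a non-harmful phase or by a fresh request arrival (which resets the count and can only happen by paying delay or space cost somewhere). The strategy is therefore to \emph{charge} the timer increase of a harmful phase to the cost $\OPT$ pays in the \emph{next} non-harmful phase (or the delay/space cost associated with the request arrival that ends the run): since at most $k-1$ harmful phases precede each such "paying" event, and since in a harmful phase $\finalTimer(\phi)\le w_v$ (a phase ends when the timer hits a multiple of $w_v$), the total timer mass accumulated in harmful phases of one run is at most $(k-1)w_v$, which is $O(k)$ times the cost $\OPT$ incurs in the terminating non-harmful phase. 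Accumulating over all runs gives $\sum_{\phi \text{ harmful}}\finalTimer(\phi)\le O(k)\cdot(\adv{\finalTimer}_v+\adv{\sigma}_v)$, and combining with the two easy cases finishes the proof.

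The main obstacle I expect is making the charging argument for harmful phases fully rigorous — in particular, arguing that every maximal run of harmful phases is actually followed by a non-harmful phase at the \emph{same} node $v$ whose $\OPT$-cost (delay or space) has not already been charged by a different run, and that the count $(|\Active_v|\bmod k)$ cannot be replenished "for free" inside a run. This is exactly the content of Observation \ref{ob:1-phase}, but one has to be careful that when a run ends because new requests arrive in $T_v$, $\OPT$ too must eventually incur delay or matching cost on those requests at or below $v$ (since $\Active_v\equiv\adv\Active_v\pmod k$ forces the parity structure), so the charge lands on a genuine, not-yet-used piece of $\OPT$'s cost. A secondary, more routine point is bookkeeping the constant: each unit of $\adv{\finalTimer}_{v,i}$ or $\adv{\sigma}_{v,i}$ may be charged by the timer mass of one phase directly plus up to $k-1$ preceding harmful phases, giving the stated $O(k)$ factor.
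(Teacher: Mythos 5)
Your proposal follows essentially the same route as the paper's proof: bound each phase's timer increase by $w_v$, handle $0$-phases and phases with a matching on top via Lemma \ref{lem:0-phase} and Observation \ref{ob:ontop}, and use Observation \ref{ob:1-phase} to cap any run of consecutive harmful phases at $k-1$, charging their timer mass to an adjacent non-harmful phase whose $\OPT$ cost is at least $w_v$. The only difference is bookkeeping: the paper sidesteps your stated worry about how a harmful run terminates by noting that the first phase of any internal node is a $0$-phase and (implicitly) charging each harmful run to the non-harmful phase \emph{preceding} it rather than the one following it.
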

\begin{proof}
Observation \ref{ob:1-phase} implies that in any $k$ consecutive phases there must exist at least one non-harmful phase.  
For any non-harmful phase $\phi$ holds that at least one of $\adv{\finalTimer}_v(\phi)$ or $\adv{\sigma}_v(\phi)$ is greater than $w_v$. 
Moreover, the first phase of any node $v\in T-\Leaves$ is a $0$-phase. Together with the above result this implies the statement in the lemma. 
\end{proof}

The above lemmas provide lower bounds for $\adv{\finalTimer}_v$ and $\adv{\sigma}_v$. In order to prove the main result, we need the following two lemmas, which connect $\adv{\sigma}_v$ to the space cost and $\adv{\finalTimer}_v$ to the time cost of the offline algorithm on general metrics respectively. We start by bounding the space cost:

\begin{lemma}\label{lm:adv_space}
 $\Expect[\sum_v \adv{\sigma}_v] \le O(\log n) \cdot \sCost_{\OPT}(R)$
\end{lemma}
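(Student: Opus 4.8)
The plan is to read $\sum_v \adv{\sigma}_v$ as the space cost that the (fixed) offline solution $\OPT$ pays when its matching is transcribed onto the random tree metric $(T,w)$, and then to compare this tree cost with the $H$‑metric cost of $\OPT$ via the distortion guarantee of Lemma~\ref{lemma:treeEmbed}. First I would rewrite the left‑hand side combinatorially. By definition, an amount $w_v$ is added to $\adv{\sigma}_{v,i}$ exactly once each time $\OPT$ matches a $k$‑set $S=\{\rho_1,\dots,\rho_k\}$ on top of $v$, i.e.\ whenever $v$ is a strict descendant of $\LCA(\Location(\rho_1),\dots,\Location(\rho_k))$ that lies on one of the leaf‑to‑leaf paths determined by $S$. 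For a fixed matched set $S$, the set of such nodes $v$ is precisely the vertex set (minus the lca) of the Steiner tree $\mathrm{st}_T(S)$ of the locations of $S$ inside $T$, so $\sum_{v:\,S\text{ on top of }v} w_v$ equals the total edge weight of $\mathrm{st}_T(S)$. Hence
$$
\sum_v \adv{\sigma}_v \;=\; \sum_{S}\ \mathrm{weight}\big(\mathrm{st}_T(S)\big),
$$
the sum ranging over all $k$‑sets $S$ that $\OPT$ forms. Since any Steiner tree of $k$ leaves is contained in the union of the $\binom{k}{2}$ pairwise leaf‑to‑leaf paths, $\mathrm{weight}(\mathrm{st}_T(S))\le \sum_{i<j} w(\Location(\rho_i),\Location(\rho_j))$ for each matched set.

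Next I would take expectations over the random tree embedding. The crucial observation is that $\OPT$ is the optimal offline algorithm for the original $H$‑metric $\M$, so the family of matched sets $\{S\}$ together with their timing is a deterministic object: it does \emph{not} depend on the random choice of $(T,w)$; only the weights $w(\cdot,\cdot)$ are random. Consequently, by linearity of expectation and the second bullet of Lemma~\ref{lemma:treeEmbed} applied to each matched $S=\{\rho_1,\dots,\rho_k\}$,
$$
\Expect\!\left[\sum_v \adv{\sigma}_v\right]\ \le\ \sum_{S}\ \sum_{i<j}\Expect\big[w(\Location(\rho_i),\Location(\rho_j))\big]\ \le\ O(\log n)\sum_{S} d_H\big(\Location(\rho_1),\dots,\Location(\rho_k)\big),
$$
and the last sum is by definition $\sCost_{\OPT}(R)$, which is the claim.

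The step that needs the most care is the first one: making precise that the timer bookkeeping ``$+w_v$ each time $\OPT$ matches on top of $v$'' coincides, set by set, with the Steiner‑tree weight inside $T$, and in particular that no edge at or above the lca of a matched set is ever charged (so the convention $w_r=\infty$ is harmless here). The remaining ingredients --- bounding a Steiner tree by the union of pairwise shortest paths, and invoking the $O(\log n)$ expected distortion of Lemma~\ref{lemma:treeEmbed} --- are routine; the only subtlety there is the already‑noted independence of $\OPT$'s matching from the embedding, which is exactly what licenses moving the expectation inside the sum over matched sets.
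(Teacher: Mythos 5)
Your proposal is correct and follows essentially the same route as the paper: the paper's (much terser) proof likewise identifies $\sum_v \adv{\sigma}_v$ with the space cost of $\OPT$ transcribed onto the tree metric and then invokes the $O(\log n)$ expected distortion of Lemma~\ref{lemma:treeEmbed}. Your write-up simply makes explicit the steps the paper leaves implicit (Steiner-tree accounting per matched $k$-set, the bound by pairwise leaf-to-leaf distances, and the independence of $\OPT$'s matching from the random embedding that justifies linearity of expectation), all of which are sound.
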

\begin{proof}
  $\sum_v \adv{\sigma}_v$ is the space cost of $\OPT$ on the tree metric. 
  When we embed $H$-metric $\M=(V,d)$ into tree metric $(T,w)$ using Lemma \ref{lemma:treeEmbed}, the expected distortion is $O(\log n)$. Therefore, we get
 $$\Expect\left[\sum_v \adv{\sigma}_v\right] \le O(\log n) \cdot \sCost_{\OPT}$$ 
\end{proof}

We can derive an analogous result for the time cost of the optimal algorithm: 

\begin{lemma}\label{lm:adv_time}
 $\sum_v \adv{\finalTimer}_v \le O(\log n) \cdot \tCost_{\OPT}(R)$
\end{lemma}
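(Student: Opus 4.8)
The goal is to prove Lemma~\ref{lm:adv_time}: $\sum_v \adv{\finalTimer}_v \le O(\log n) \cdot \tCost_{\OPT}(R)$. The adversary-timers $\adv{\Timer}_v$ were defined on the tree $T$ by the same rule as the algorithm's timers: $\adv{\Timer}_v$ runs at unit rate exactly when $\adv{C}_v(t) \not\equiv 0 \bmod k$, i.e. when the offline algorithm has a number of open requests in $\Leaves_v$ that is not a multiple of $k$. So $\sum_v \adv{\finalTimer}_v$ is essentially counting, for each moment $t$ and each node $v$, whether the offline algorithm still has "dangling" open requests below $v$.

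The plan is to charge the quantity $\sum_v \adv{\finalTimer}_v$ to the actual time cost of $\OPT$ on the original metric $\M$, paying an $O(\log n)$ factor for the tree embedding. First I would observe that if at time $t$ the adversary-timer $\adv{\Timer}_v$ is running, then there is at least one open request $\rho$ (open under $\OPT$) whose location lies in $\Leaves_v$ but which has not yet been matched; more precisely, since $|\adv C_v(t)| \not\equiv 0 \bmod k$, there is a request below $v$ that $\OPT$ has committed to match "across" some ancestor of $v$ or across $v$ itself at a later time, but is currently keeping open. The key quantitative step is: at any fixed time $t$, the number of nodes $v$ whose adversary-timer is running is $O(\log n)$ times (a lower bound related to) the number of currently open requests under $\OPT$ — because each currently open request $\rho$ of $\OPT$ can "cause" $\adv{\Timer}_v$ to run only for $v$ on the root-to-$\Location(\rho)$ path, and that path has length $O(\log n)$ since the tree from Lemma~\ref{lemma:treeEmbed} has height $O(\log n)$. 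Integrating over time, $\sum_v \adv{\finalTimer}_v \le O(\log n)\cdot \int_0^\infty |\adv{C}(t)|\,dt$ (up to a constant depending on $k$ for the rounding to multiples of $k$), where $|\adv C(t)|$ is the total number of requests open under $\OPT$ at time $t$, and $\int_0^\infty |\adv C(t)|\,dt = \tCost_{\OPT}(R)$ by definition of the time cost. Since the tree height bound from Lemma~\ref{lemma:treeEmbed} is deterministic (the first bullet of that lemma holds always, and height $O(\log n)$ is a structural property of the construction), no expectation is needed here, matching the statement of the lemma.

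The step I expect to be the main obstacle is the rounding argument: $\adv{\Timer}_v$ runs whenever $|\adv C_v(t)| \bmod k \ne 0$, so a single open request below $v$ already keeps the timer running, but conversely a group of $k$ open requests below $v$ that $\OPT$ intends to match together across $v$ contributes nothing to $\adv{\Timer}_v$ (they satisfy $\equiv 0 \bmod k$), yet they do contribute to $\tCost_{\OPT}$. This is fine for the direction we need — we are upper-bounding $\sum_v \adv{\finalTimer}_v$, so "running implies at least one open request below $v$" is exactly what we want — but one must be careful to charge each open request of $\OPT$ only $O(\log n)$ total and not double-count across the at most $k-1$ "leftover" requests that a running timer is witnessing. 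Concretely, for each node $v$ with $\adv{\Timer}_v$ running at time $t$, pick one witness open request $\rho_v(t)$ below $v$; then the map $v \mapsto \rho_v(t)$ sends at most $O(\log n)$ nodes to any one request (those nodes must be ancestors of $\Location(\rho_v(t))$ in a height-$O(\log n)$ tree), so the number of running adversary-timers at time $t$ is at most $O(\log n) \cdot |\adv C(t)|$. Integrating and invoking $\int_0^\infty |\adv C(t)|\,dt = \tCost_{\OPT}(R)$, and then folding in the factor from the tree height, gives $\sum_v \adv{\finalTimer}_v \le O(\log n)\cdot \tCost_{\OPT}(R)$, as desired.

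With Lemma~\ref{lm:adv_time} and Lemma~\ref{lm:adv_space} in hand, Theorem~\ref{thm:correctnessTheorem_app} follows by chaining the earlier lemmas: $\Expect[\Cost_{\A}(R)] = \Expect[\sCost_{\A}(R)] + \Expect[\tCost_{\A}(R)] \le O(k)\cdot \Expect[\sum_v \finalTimer_v]$ by Lemmas~\ref{lm:space} and~\ref{lm:time}; then $\sum_v \finalTimer_v \le O(k)\cdot \sum_v (\adv{\finalTimer}_v + \adv{\sigma}_v)$ by Lemma~\ref{lm:compare}; and finally taking expectations and applying Lemmas~\ref{lm:adv_space} and~\ref{lm:adv_time} yields $\Expect[\Cost_{\A}(R)] \le O(\log n)\cdot (\sCost_{\OPT}(R) + \tCost_{\OPT}(R)) = O(\log n)\cdot \Cost_{\OPT}(R)$, with the hidden constant depending on $k$ as announced.
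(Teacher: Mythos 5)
Your proof is correct and follows essentially the same route as the paper: each running adversary-timer at time $t$ is witnessed by an open $\OPT$-request below it, each such request lies under at most $O(\log n)$ nodes by the height bound of Lemma~\ref{lemma:treeEmbed}, and integrating over time against $\int |\adv{\Active}(t)|\,dt = \tCost_{\OPT}(R)$ gives the claim. Your added remarks (the witness map to avoid double-counting, and that no expectation is needed since the height bound is deterministic) just make explicit what the paper's terse proof leaves implicit.
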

\begin{proof}
 When there is an open request $\rho$ under $\OPT$, at most $|\Ancestors(\Location(\rho))|+1$ timers (from $\Location(\rho)$ to the root) count the time cost incurred by this request.   
 By Lemma \ref{lemma:treeEmbed}, the height of tree $T$ and $|\Ancestors(\Location(\rho))|+1$ are bounded by $O(\log n)$. From this the claim follows. 
\end{proof}

 The main result from Theorem \ref{thm:correctnessTheorem_app} can be proved by combining the lemmas presented in this section: 
\[\begin{array}{rclr}
 \Expect[\Cost_{\A}(R)] & = & \Expect[\sCost_{\A}(R)] + \Expect[\tCost_{\A}(R)] & \\
                        &\le& \Expect[\sum_v \finalTimer_v] + \Expect[2\sum_v \finalTimer_v] &\qquad (\text{Lemma } \ref{lm:space},\ref{lm:time})\\
                        &\le& O(\Expect[\adv{\finalTimer}_v + \adv{\sigma}_v])& \qquad (\text{Lemma } \ref{lm:compare})\\
                        &\le& O(\log n) \cdot \sCost_{\OPT}(R) + O(\log n) \cdot \tCost_{\OPT}(R) &\qquad (\text{Lemma } \ref{lm:adv_space},\ref{lm:adv_time})\\
                        & = & O(\log n) \cdot \Cost_{\OPT}(R) &
\end{array}\]

\end{document}